\newtheorem{theorem}{Theorem}[section]
\newtheorem{proposition}[theorem]{Proposition}
\newtheorem{lemma}[theorem]{Lemma}
\def\A{\mathcal{A}}
\def \reals{{\mathbb R}}
\def \L{{L}}
\def\eps{{\varepsilon}}
\def\eps{{\varepsilon}}
\def\poly{\diamond}
\newcommand{\ignore}[1]{}
\def\X{\mathcal{X}}
\def\Y{\mathcal{Y}}
\def\DT{\mathop{\mathrm{DT}}}
\def\VD{\mathop{\mathrm{VD}}}
\def\Vor{\mathop{\mathrm{Vor}}}
\begin{document}

\title{On topological changes in the Delaunay triangulation of moving points}
\author{Natan Rubin\thanks{%
Department of Mathematics and Computer Science, Freie Universit\"{a}t Berlin, Takustr. 9, 14195, Berlin, Germany.
Email: {\tt rubinnat@tau.ac.il}. Work on this paper was partly supported by Minerva Postdoctoral Fellowship from Max Planck Society, and by Grant 338/09 from the Israel Science Fund.}}

\maketitle
\begin{abstract}
Let $P$ be a collection of $n$ points moving along pseudo-algebraic trajectories in the plane.\footnote{So, in particular, threre are constants $s,c>0$ such that any four points are co-circular at most $s$ times, and any three points are collinear at most $c$ times.} 
One of the hardest open problems in combinatorial and computational geometry is to obtain a nearly quadratic upper bound, or at least a subcubic bound, on the maximum number of discrete changes that the Delaunay triangulation $\DT(P)$ of $P$ experiences during the motion of the points of $P$.

In this paper we obtain an upper bound of $O(n^{2+\eps})$, for any $\eps>0$, under the assumptions that (i) any four points can be co-circular at most twice, and (ii) either no triple of points can be collinear more than twice, or no ordered triple of points can be collinear more than once.
\end{abstract}

\section{Introduction}\label{Sec:Intro}
\paragraph{Delaunay triangulations.} 
Let $P$ be a finite set of points in the plane. 
Let $\Vor(P)$ and $\DT(P)$ denote the Voronoi diagram and Delaunay
triangulation of $P$, respectively. For a point $p \in P$, let
$\Vor(p)$ denote the Voronoi cell of $p$. 
The Delaunay triangulation $\DT=\DT(P)$ consists of all 
triangles spanned by $P$ whose circumcircles do not contain points of $P$ in their
interior. A pair of points $p,q\in P$ is connected by a Delaunay edge if and only if
there is a circle passing through $p$
and $q$ that does not contain any point of $P$ in its interior.
%
Delaunay triangulations and  their duals, Voronoi diagrams, are fundamental to much 
of computational geometry and its applications. 
See \cite{AK,Ed2} for a survey and a
textbook on these structures.

In many applications of Delaunay/Voronoi methods (e.g., mesh generation and kinetic collision detection) the points of the input set $P$ are moving continuously, so
these diagrams need to be efficiently updated during the motion.
Even though the motion of the points is continuous, the combinatorial and topological structure of the Voronoi and
Delaunay diagrams change only at
discrete times when certain critical events occur. 

For the purpose of kinetic maintenance, Delaunay triangulations are 
nice structures, because, as mentioned above, they admit local 
certifications associated with individual triangles.  This makes 
it simple to maintain $\DT(P)$ under point motion: an update is 
necessary only when one of these empty circumcircle conditions 
fails---this corresponds to co-circularities of certain subsets of
four points.\footnote{We assume that the motion of the points is sufficiently generic, so that no more than four points can become co-circular at any given time.} Whenever such an event happens, 
a single edge flip easily restores Delaunayhood.

Let $n$ be the number of moving points in $P$. 
We assume that the
points move with so-called pseudo-algebraic motions of constant description complexity, meaning (in particular) that any four points are co-circular at most $s$ times, for some constant $s>0$.  
By using lower-envelope techniques in certain parametric planes, 
Fu and Lee \cite{FuLee} and
Guibas et al.~\cite{gmr-vdmpp-92} show roughly cubic upper bounds on the number of discrete (also known as \textit{topological}) changes in $\DT(P)$. 
The latter study~\cite{gmr-vdmpp-92} obtains an upper bound of
$O(n^2 \lambda_{s+2}(n))$, where $\lambda_s(n)$ is the maximum length
of an $(n,s)$-Davenport-Schinzel sequence~\cite{SA95}.

If each point of $P$ is moving along a straight line and with the same speed, a slightly better upper bound of $O(n^3)$ can be established for the number of discrete changes experienced by $\DT(P)$ (see, e.g., \cite{Vladlen}). A substantial gap exists between these upper bounds
and the best known quadratic lower bound~\cite{SA95}. Closing this gap has been in the 
computational geometry lore for many years, and is considered as one of the major (and very difficult) problems in the field; see \cite{TOPP}.


The instances of the general problem for which the number of discrete changes in $\DT(P)$ is provably sub-cubic, are strikingly few. It is worth mentioning the result of Koltun \cite{Vladlen} which deals with sets of points moving along straight lines with equal speeds such that all points start their motion from a fixed line. In this particular case one can show that any four points are co-circular at most twice, and any three points are co-linear at most once. (The result of \cite{Vladlen} is not topological and relies on the equations of point trajectories.)

Due to the very slow progress on the above general problem, several alternative lines of study have emerged in the last two decades.

Chew \cite{Chew} proved that the Voronoi diagram undergoes only a near-quadratic number of discrete changes if it is defined with respect to a so called ``polygonal" distance function. The dual representation of such a diagram $\VD^\poly(P)$ yields a proper triangulation of a certain connected subregion of the convex hull of $P$. Agarwal et al.~\cite{Stable} use the above polygonal structures
to efficiently maintain the so called {\it $\alpha$-stable} subgraph $\DT(P)$, whose edges are robust with respect to small changes in the underlying norm. 

Another line of research \cite{ABGHZ,AWY,KRS} asks if one can define (and efficiently maintain) a proper triangulation of the convex hull of $P$, which would change only near-quadratically many times during the motion of $P$. 
The most recent such study \cite{KRS} provides a (relatively) simple such triangulation which undergoes, in expectation, only $O(n^2\lambda_{c+2}(n)\log^2n)$ discrete changes (where $c$ is the maximum possible number of collinearities defined by any three points of $P$).


\smallskip
\noindent{\bf Our result.}
We study the case in which (i) any four points of $P$ are co-circular at most {\it twice} during the motion, and (ii) either every unordered triple can be collinear at most twice or every ordered triple of points can be collinear at most once\footnote{That is, there can be only one collinearity of an ordered triple $(p,q,r)$ so that the points appear in this order along the common line.}, and derive a nearly tight upper bound of $O(n^{2+\eps})$, for any $\eps>0$, on the number of discrete changes experienced by $\DT(P)$ during the motion in either of these cases.
We believe that our results constitute a substantial progress towards establishing nearly quadratic (or just sub-cubic) bounds for more general instances of the problem, such as the simple and natural instance of points moving along straight lines with equal speeds. In this case any four points admit at most {\it three} co-circularities, and any triple of points can be collinear at most twice. We believe that the tools developed in this paper can be extended to tackle this instance, and possibly also the general case.

\smallskip
\noindent{\bf Proof overview and organization.}
The majority of the discrete changes in $\DT(P)$ occur at moments $t_0$ when some four points $p,q,a,b\in P$ are co-circular, and the corresponding circumdisc contains no other points of $P$. We refer to these events as Delaunay co-circularities. Suppose that $p,a,q,b$ appear along their common circumcircle in this order, so $ab$ and $pq$ form the chords of the quadrilateral spanned by these points. Right before $t_0$, one of the chords, say $pq$, is Delaunay and thus admits a $P$-empty disc whose boundary contains $p$ and $q$. 
Right after time $t_0$, the edge $pq$ is replaced in $\DT(P)$ by $ab$.  
Informally, this happens because the Delaunayhood of $pq$ is violated by $a$ and $b$: Any disc whose boundary contains $p$ and $q$ contains at least one of the points $a,b$. 
If $pq$ does not re-enter $\DT(P)$ after time $t_0$, we can charge the event at time $t_0$ to the edge $pq$. We thus assume that $pq$ is again Delaunay at some moment $t_1>t_0$. In particular, at that moment the Delaunayhood of $pq$ is no longer violated by $a$ and $b$. Before this happens, either at least one of $a$ or $b$ must hit $pq$, or an additional co-circularity of $a,b,p,q$ must occur during $(t_0,t_1)$. Using our assumption that $a,b,p,q$ induce at most two co-circularity events, we can guarantee (up to a reversal of the time axis) that the co-circularity at time $t_0$ is the last co-circularity of these four points. Thus, one of $a,b$, let it be $a$, must cross $pq$ during $(t_0,t_1)$.

Our goal is to derive a recurrence formula for the maximum number $N(n)$ of such Delaunay co-circularities induced by any set $P$ of $n$ points (whose motion satisfies the above conditions). 

As a preparation, we study, in Section \ref{Sec:Prelim}, the set of all co-circularities that involve the disappearing Delaunay edge $pq$ and some other pair of points of $P\setminus \{p,q\}$ and occur during the period $(t_0,t_1)$ when $pq$ is absent from $\DT(P)$.
This is done in a fairly general setting, where any four points of $P$ can be co-circular, and any three points 
of $P$ can be collinear, at most constantly many times. 
Along the way, we establish several structural results which (as we believe) are of independent interest.

In Section \ref{Sec:DelCocircs} we use the general machinery of Section \ref{Sec:Prelim} to obtain a recurrence formula for $N(n)$ in the case where any four points of $P$ are co-circular at most twice.
Recall that $pq$ leaves $\DT(P)$ at such a Delaunay co-circularity, at some time $t_0$, in order to re-enter $\DT(P)$ at some later time $t_1>t_0$, and $pq$ is hit by the point $a$ in the interval $(t_0,t_1)$ of its non-Delaunayhood.

If we find at least $\Omega(k^2)$ ``shallow" co-circularities\footnote{Each of these co-circularities would become a Delaunay co-circularity after removal of at most $k$ points of $P$.}, whose respective circumdiscs (i) touch $p$ and $q$, and (ii) contain at most $k$ points of $P$, we charge them for the disappearance of $pq$. We use the standard probabilistic technique of Clarkson and Shor \cite{CS} to show that the number of Delaunay co-circularities, for which our simple charging works, is $O\left(k^2N(n/k)\right)$.
Informally, such Delaunay co-circularities contribute a nearly quadratic term to the overall recurrence formula (see, e.g., \cite{ASS} and \cite{ConstantLines}). Similarly, if we find a ``shallow" collinearity of $p,q$ and another point (one halfplane bounded by the line of collinearity contains at most $k$ points) we charge the disappearance of $pq$ to this collinearity. A combination of the Clarkson-Shor technique with the known near-quadratic bound on the number of topological changes in the convex hull of $P$ (see \cite[Section 8.6.1]{SA95}) yields a near-quadratic bound in this case.

It thus remains to bound the number of Delaunay co-circularities for which $p$ and $q$ participate in fewer ``shallow" co-circularities and in no ``shallow" collinearity during $(t_0,t_1)$. In this case, using the general properties established in Section \ref{Sec:Prelim}, one can restore the Delaunayhood of $pq$ throughout $(t_0,t_1)$ by removal of some subset $A$ of $O(k)$ points of $P$. In particular, the point $a$, which crosses $pq$, must belong to $A$.
In the smaller Delaunay triangulation $\DT\left((P\setminus A) \cup\{a\}\right)$, the edge $pq$ undergoes a complex process referred to as a {\it Delaunay crossing by} $a$. 

In Section \ref{Sec:CrossOnce}, we derive a recurrence formula for the number of these Delaunay crossings. 
This is achieved by establishing several structural properties of these novel configurations.
Combined with the analysis of Section \ref{Sec:Prelim}, this yields the desired ``near-quadratic" recurrence for the number of Delaunay co-circularities. 


\section{Preliminaries}\label{Sec:Prelim}
In this section we define the basic notions regarding Delaunay triangulations of moving points, and introduce some of the key techniques which will be repeatedly used in the rest of the paper.

\paragraph{Delaunay co-circularities.} Let $P$ be a collection of $n$ points moving along pseudo-algebraic trajectories in the plane. That is, there exist constants $s$ and $c$ so that any four points are co-circular at most $s$ times, and any three points are collinear at most $c$ times. (As far as this section is concerned, we do not impose any further restrictions on the choice of $s$ and $c$, except for their being constant.)

We may assume, without loss of generality, that the trajectories of the points of $P$ satisfy all the standard general position assumptions. That is, no five points can become co-circular during the motion, no four points can become collinear, no two points can coincide, and no two events of either a co-circularity of four points or of collinearity of three points can occur simultaneously.
In addition, we assume that in every co-circularity event involving some four points $a,b,p,q\in P$, each of the points, say $a$, crosses the circumcircle of the other three points $b,p,q$; that is, it lies outside the circle right before the event and inside right afterwards, or vice versa. Similarly, we assume that in every collinearity event involving some triple of points of $P$, each of the points crosses the line through the remaining two points.
Degeneracies in the point trajectories of the above kinds can be handled, both algorithmically and combinatorially, by any of the standard symbolic perturbation techniques, such as simulation of simplicity \cite{EM}; for combinatorial purposes, a sufficiently small generic perturbation of the motions will get rid of any such degeneracy, without decreasing the number of topological changes in the diagram.

\begin{figure}[htbp]
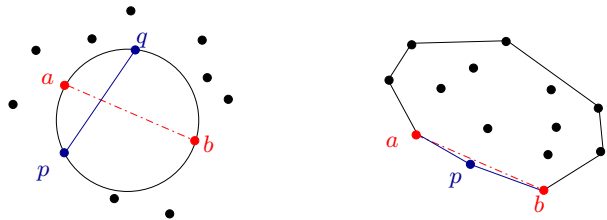

\begin{center}
\input{DelaunayCocirc.pstex_t}\hspace{2cm}\input{HullEvent.pstex_t}
\caption{\small{Left: A Delaunay co-circularity of $a,b,p,q$. An old Delaunay edge $pq$ is replaced by the new edge $ab$. Right: A collinearity of $a,p,b$ right before $p$ ceases being a vertex on the boundary of the convex hull. 
}} \label{Fig:DelaunayEvents}
\end{center}
\end{figure} 

The Delaunay triangulation $\DT(P)$ changes at discrete time moments $t_0$ when one of the following two types of events occurs.

(i) Some four points $a,b,p,q$  of $P$ become co-circular, so that the cicrumdisc of $p,q,a,b$ is {\it empty}, i.e., does not contain any point of $P$ in its interior. We refer to such events as {\it Delaunay co-circularities}, to distinguish them from non-Delaunay co-circularities, for which the circumdisc of $a,b,p,q$ is nonempty, that is, contains one or more points of $P$ in its interior.\footnote{Strictly speaking, $\DT(P)$ is not a triangulation at the time $t_0$ of such a co-circularity, because it contains then a pair of {\it crossing} edges, say $ab$ and $pq$ (as depicted in Figure \ref{Fig:DelaunayEvents} (left)).} See Figure \ref{Fig:DelaunayEvents} (left).

In what follows, we shall use $N(n)$ to denote the maximum possible number of Delaunay co-circularities induced by {\it any} set $P$ of $n$ points whose motion satisfies the above general assumptions.\footnote{In the subsequent sections, we shall impose additional restrictions on the pseudo-algebraic motions of the points of $P$, thereby  redefining $N(n)$.}

(ii) Some three points $a,b,p$ of $P$ become collinear on the boundary of the convex hull of $P$. Assume that $p$ lies between $a$ and $b$. In this case, if $p$ moves into the interior of the hull, then, right after this event, the triangle $abp$ becomes a new Delaunay triangle. Similarly, if $p$ moves outside and becomes a new vertex, then, right before this event, the old Delaunay triangulation $\DT(P)$ contained the old Delaunay triangle $abp$, which has shrinked to a segment and disappeared at the event. See Figure \ref{Fig:DelaunayEvents} (right).
The number of such collinearities on the convex hull boundary is known to be at most nearly quadratic; see, e.g., \cite[Section 8.6.1]{SA95} and below.


\paragraph{Shallow co-circularities and the Clarkson-Shor argument.}
We say that a co-circularity event has {\it level} $k$ if its corresponding circumdisc contains exactly $k$ points of $P$ in its interior. In particular, the Delaunay co-circularities have level $0$. The co-circularities having level at most $k$ are called {\it $k$-shallow}.

We can express the maximum possible number of $k$-shallow co-circularities in $P$ in terms of the more elementary quanitity $N(n/k)$ via the following fairly general argument, first introduced by Clarkson and Shor \cite{CS}.
(With no loss of generalty, we assume that $k\geq 1$, for otherwise we can trivially bound the maximum number of Delaunay, that is, $0$-shallow co-circularities in $P$ by $N(n)$.)

Let $t_0$ be the time of a $k$-shallow co-circularity which involves some four points $p,q,a,b$ in $P$, and let $A_0$ denote the set of at most $k$ points that lie at time $t_0$ in the interior of the common circumdisc of $p,q,a,b$. Note that the above co-circularity is Delaunay with respect to $P\setminus A_0$, and with respect to any subset $R$ of $P\setminus A_0$ which contains $p,q,a,b$.

We sample at random (and without replacement) a subset $R\subset P$ of $O(n/k)$ points. As is easy to check, the following two events occur {\it simultaneously} with probability at least $\Theta(1/k^4)$: (1) the sample $R$ contains the four points $p,q,a,b$, and (2) none of the points of $A_0$ belongs to $R$. (An explicit calculation of the above probability can be found in several classical texts, such as \cite{CS} or \cite{SA95}.)

In the case of success, the aforementioned $k$-shallow co-circularity in $P$ becomes a Delaunay co-circularity with respect to $R$. 
Hence, the overall number of $k$-shallow co-circularities in $P$ is $O(k^4 N(n/k))$.

\paragraph{Shallow collinearities.} Similar notations apply to collinearities of triples of points $p,q,r$. A collinearity of $p,q,r$ is called {\it $k$-shallow} if the number of points of $P$ to the left, or to the right, of the line through $p,q,r$ is at most $k$. 

The (essentially) same probabilistic argument implies that the number of such events, for $k\geq 1$, is $O(k^3L(n/k))$, where $L(m)$ denote the maximum number of discrete changes on the convex hull of an $m$-point subset of $P$. (The difference in the exponent of $k$ follows because now each configuration at hand involves only three points.)

As shown, e.g., in \cite[Section 8.6.1]{SA95},  $L(m)=O(m^2\beta(m))$, where $\beta(\cdot)$ is an extremely slowly growing function.\footnote{Specifically,
$\beta(n)=\frac{\lambda_{s+2}(n)}{n}$, where $\lambda_{s+2}(n)$ is the maximum length of an $(n,s+2)$-Davenport-Schinzel sequence (see Section \ref{Sec:Intro}), and $s$ is the maximum number of collinearities of any fixed triple of points. The pseudo-algebraicity of the motion implies that $s$ is a constant, but we will restrict $s$ further; see below.} We thus get that the number of $k$-shallow collinearities is $O(kn^2\beta(n/k))=O(kn^2\beta(n))$. 

\begin{figure}[htbp]
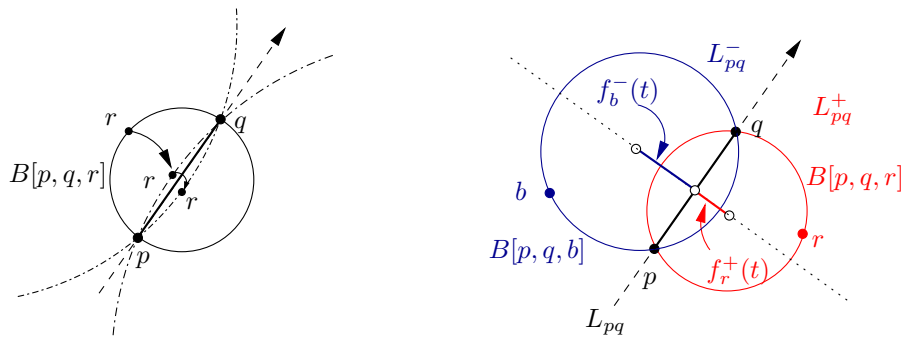

\begin{center}
\input{Circumdisc.pstex_t}\hspace{2cm}\input{RedBlueFuncs.pstex_t}
\caption{\small Left: The circumdisc $B[p,q,r]$ of $p,q$ and $r$ moves continuously as long as these three points are not collinear, and then flips over to the other side of the line of collinearity after the collinearity. 
Right: A snapshot at moment $t$. 
In the depicted configuration we have $f_b^-(t)<0<f_r^+(t)$.}
\label{Fig:RedBlueFuncs}
\vspace{-0.5cm}
\end{center}
\end{figure}

\paragraph{The red-blue arrangement.} For every pair of points $p,q$ of $P$ we construct a two-dimensional arrangement which ``encodes" all the collinearities and co-circularities that involve $p$ and $q$ (together with one or two additional points). This is done as follows.

For every ordered pair $(p,q)$ of points of $P$, we
denote by $\L_{pq}$ the line passing through $p$ and $q$ and oriented from $p$ to $q$. 
Define $\L_{pq}^-$ (resp., $\L_{pq}^+$) to be the halfplane to the left (resp., right) of $\L_{pq}$.
Notice that $\L_{pq}$ moves continuously with $p$ and $q$ (since, by assumption, $p$ and $q$ never coincide during the motion). Note also that $\L_{pq}$ and $\L_{qp}$ are oppositely oriented and that $\L_{pq}^+=\L_{qp}^-$ and $\L_{pq}^-=\L_{qp}^+$.
Accordingly, we orient the edge $pq$ connecting $p$ and $q$ from $p$ to $q$, so that the edges $pq$ and $qp$ have opposite orientations.

Any three points $p,q,r$ span a circumdisc $B[p,q,r]$ which moves continuously with $p,q,r$ as long as $p,q,r$ are not collinear. See Figure \ref{Fig:RedBlueFuncs} (left). When $p,q,r$ become collinear, say, when $r$ crosses $pq$ from $\L_{pq}^-$ to $\L_{pq}^+$, the circumdisc $B[p,q,r]$ changes instantly from being all of $\L_{pq}^+$ to all of $\L_{pq}^-$.
Similarly, when $r$ crosses $\L_{pq}$ from $\L_{pq}^-$ to $\L_{pq}^+$ {\it outside} $pq$, the circumdisc changes instantly from $\L_{pq}^-$ to $\L_{pq}^+$. Symmetric changes occur when $r$ crosses $\L_{pq}$ from $\L_{pq}^+$ to $\L_{pq}^-$.

For a fixed ordered pair $p,q\in P$, we call a point $a$ of $P\setminus\{p,q\}$ {\it red} (with respect to the oriented edge $pq$) if $a\in \L_{pq}^+$; otherwise it is {\it blue}.

As in \cite{gmr-vdmpp-92}, we define, for each $r\in P\setminus\{p,q\}$, a pair of partial functions $f_r^+,f_r^-$ over the time axis as follows.
If $r\in \L_{pq}^+$ at time $t$ then $f_r^-(t)$ is undefined, and $f^+_r(t)$ is the signed distance of the center $c$ of $B[p,q,r]$ from $\L_{pq}$; it is positive (resp., negative) if $c$ lies in $\L_{pq}^+$ (resp., in $\L_{pq}^-$). A symmetric definition applies when $r\in \L_{pq}^-$. Here too $f^-_r(t)$ is positive (resp., negative) if the center of $B[p,q,r]$ lies in $\L_{pq}^+$ (resp., in $\L_{pq}^-$). We refer to $f_r^+$ as the {\it red function} of $r$ (with respect to $pq$) and to $f_r^-$ as the {\it blue function} of $r$. Note that at all times when $p,q,r$ are not collinear, exactly one of $f_r^+,f_r^-$ is defined. See Figure \ref{Fig:RedBlueFuncs} (right).
The common points of discontinuity of $f_r^+,f_r^-$ occur at moments when $r$ crosses $\L_{pq}$. Specifically, $f_r^+$ tends to $+\infty$ before $r$ crosses $\L_{pq}$ from $\L_{pq}^+$ to $\L_{pq}^-$ outside the segment $pq$, and it tends to $-\infty$ when $r$ does so within $pq$; the behavior of $f_r^-$ is fully symmetric.

Let $E^+$ denote the lower envelope of the red functions, and let $E^-$ denote the upper envelope of the blue functions. The edge $pq$ is a Delaunay edge at time $t$ if and only if $E^-(t)<E^+(t)$. Any disc whose bounding circle passes through $p$ and $q$ which is centered anywhere in the interval $(E^-(t),E^+(t))$ along the perpendicular bisector of $pq$ is empty at time $t$, and thus serves as a witness to $pq$ being Delaunay.
If $pq$ is not Delaunay at time $t$, there is a pair of a red function $f_r^+(t)$ and a blue function $f_b^-(t)$ such that $f_r^+(t)<f_b^-(t)$.
For example, we can take $f_r^+$ (resp., $f_b^-$) to be the function attaining $E^+$ (resp., $E^-$) at time $t$. In such a case, we say that the Delaunayhood of $pq$ is {\it violated} by the pair of points $r,b\in P$ which define $f_r^+,f_b^-$. See Figure \ref{Fig:Envelopes}. Note that in general there can be many pairs $(r,b)$ that violate $pq$ (quadratically many in the worst case).

\begin{figure}[htbp]
\begin{center}
\input{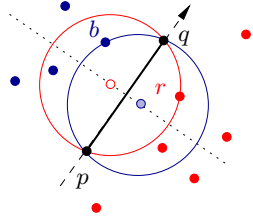}
\caption{\small Snapshot at a fixed moment $t$: The red envelope $E^+$ coincides with the red function $f_r^+$. The blue envelope coincides with the blue function $f^-_b(t)$. Note that $pq$ is not a Delaunay edge because $E^+(t)$ (represented by the hollow center) is smaller than $E^-(t)$ (represented by the shaded center).}
\label{Fig:Envelopes}
\end{center}
\end{figure}

Hence, at any time when the edge $pq$ joins or leaves $\DT(P)$, via a Delaunay co-circularity involving $p$, $q$, and two other points of $P$, we have $E^-(t)=E^+(t)$. In this case the two other points, $a,b$, are such that one of them, say $a$, lies in $\L_{pq}^+$ and $b$ lies in $\L_{pq}^-$, and $E^+(t)=f_a^+(t), E^-(t)=f_b^-(t)$.

\smallskip
\noindent{\it Remark.} 
Right before the edge $pq$ is crossed by a red point $r$, the corresponding function $f_r^+$ lies below all the blue functions $f_b^-$ (if they exist), so the Delaunayhood of $pq$ is violated by each of the subsequent pairs $(r,b)$. 
In other words, the edge $pq$ cannot be Delaunay right before (resp., after) being hit by a point of $P$, unless $pq$ joins or leaves the convex hull of $P$. 

\smallskip

Let $\A=\A_{pq}$ denote the arrangement of the $2n-4$ functions $f_r^+(t),f_r^-(t)$, for $r\in P\setminus\{p,q\}$, drawn in the parametric $(t,\rho)$-plane, where $t$ is the time and $\rho$ measures signed distance along the perpendicular bisector of $pq$. We label each vertex of $\A$ as red-red, blue-blue, or red-blue, according to the colors of the two functions meeting at the vertex. Note that our general position assumptions imply that $\A$ is also in general position, so that no three function pass through a common vertex, and no pair of functions are tangent to each other. Note also that the functions forming $\A$ have in general discontinuities, at the corresponding collinearities. At each such collinearity, a red function $f_r^+$ tends to $\infty$ or $-\infty$ on one side of the critical time, and is replaced on the other side by the corresponding blue function $f_r^-$ which tends to $-\infty$ or $\infty$, respectively.

An intersection between two red functions $f_a^+,f_b^+$ corresponds to a co-circularity event which involves $p,q,a$ and $b$, occurring when both $a$ and $b$ lie in $\L_{pq}^+$.
Similarly, an intersection of two blue functions $f_a^-,f_b^-$ corresponds to a co-circularity event
involving $p,q,a,b$ where both $a$ and $b$ lie in $\L_{pq}^-$. Also, an intersection of a red fuction $f_a^+$ and a blue function $f_b^-$ represents a co-circularity of $p,q,a,b$, where $a\in \L_{pq}^+$ and $b\in \L_{pq}^-$. We label these co-circularities, as we labeled the vertices of $\A$, as red-red, blue-blue, and red-blue (all with respect to $pq$), depending on the respective colors of $a$ and $b$.

It is instructive to note that in any co-circularity of four points of $P$ there are exactly two pairs (the opposite pairs in the co-circularity)
with respect to which the co-circularity is red-blue, and four pairs (the adjacent pairs) with respect to which the co-circularity is ``monochromatic". Suppose that the above co-circularity is Delaunay. Then the two pairs for which the co-circularity is red-blue are those that enter or leave the Delaunay triangulation $\DT(P)$ (one pair enters and one leaves). The Delaunayhood of pairs for which the co-circularity is monochromatic is not affected by the co-circularity, which appears in the corresponding arrangement as a {\it breakpoint} of either $E^+(t)$ or of $E^-(t)$.

\begin{figure}[htbp]
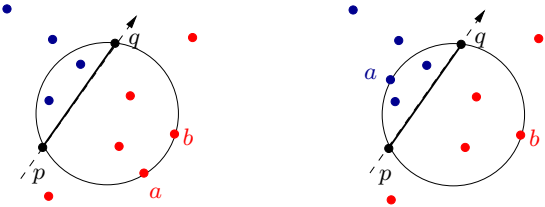

\begin{center}
\input{RedRedCocirc.pstex_t}\hspace{2cm}\input{RedBlueCocirc.pstex_t}
\caption{\small Intersections between two red functions $f_a^+$ and $f_b^+$ (left), or a blue function $f_a^-$ and a red function $f_b^+$ (right), correspond to red-red or red-blue co-circularities.}
\label{Fig:RedBlueCocirc}
\end{center}
\vspace{-0.4cm}
\end{figure}

The main weakness of the previous approaches \cite{FuLee,gmr-vdmpp-92} is that they study only the lower envelope $E^+(t)$ of red functions, and the upper envelope $E^-(t)$ of blue functions, which are merely substructures within the above arrangement $\A_{pq}$.
This yields a roughly linear upper bound on the number of monochromatic co-circularities with respect to the edge $pq$ under consideration.
Repeating the same argument for the $n(n-1)$ possible (oriented) edges $pq$ then results in a far too high, near-cubic upper bound on the number of Delaunay co-circularities.

Instead, we exploit the underlying structure of $\A_{pq}$ in order to establish the following
main technical result of this section.

\begin{theorem}\label{Thm:RedBlue}
Let $P$ be a collection of $n$ points moving as described above. Suppose that an edge $pq$ belongs to $\DT(P)$ at (at least) one of two moments $t_0$ and $t_1$, for $t_0<t_1$.
Let $k>12$ be some sufficiently large constant.\footnote{The constants in the $O(\cdot)$ and $\Omega(\cdot)$ notations do not depend on $k$.}
Then one of the following conditions holds:\\
\indent (i) There is a $k$-shallow collinearity which takes place during $(t_0,t_1)$, and involves $p$, $q$ and another point $r$.\\
\indent (ii) There are $\Omega(k^2)$ $k$-shallow red-red, red-blue, or blue-blue co-circularities (with respect to $pq$) which occur during $(t_0,t_1)$.\\
\indent (iii) There is a subset $A\subset P$ of fewer than $3k$ points whose removal guarantees that $pq$ belongs to $\DT(P\setminus A)$ throughout $(t_0,t_1)$.
\end{theorem}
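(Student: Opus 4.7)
The plan is to work inside the red-blue arrangement $\A_{pq}$ restricted to the time strip $(t_0,t_1)\times\mathbb{R}$, constructing the set $A$ by an iterative covering procedure and charging any failure to shallow events. The starting observation is that $pq$ belongs to $\DT(P')$ at time $t$ if and only if $E^+(t)>E^-(t)$ when the envelopes are computed over the functions of $P'\setminus\{p,q\}$. Each moment when $pq$ enters or leaves $\DT(P)$ inside $(t_0,t_1)$ corresponds to a $0$-shallow red-blue co-circularity involving $p,q$. Hence, if there are at least $c_1 k^2$ such transitions in $(t_0,t_1)$ for a suitable constant $c_1$, alternative (ii) holds directly; we may therefore assume there are only $O(k^2)$ transitions, splitting $(t_0,t_1)$ into $O(k^2)$ maximal non-Delaunay sub-intervals.

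The core construction is a greedy one. Maintain $A$, initially empty. While $pq$ is non-Delaunay on some sub-interval with respect to $P\setminus A$, pick a time $t^*$ at which the current ``depth'' $d^*(t^*) = \min_\rho\bigl(\#\{r : f_r^+(t^*)<\rho\} + \#\{b : f_b^-(t^*)>\rho\}\bigr)$, computed in $P\setminus A$, is largest; let $(r^*,b^*)$ be the red/blue pair attaining $E^+,E^-$ at $t^*$; and add to $A$ whichever removal most decreases the total violation mass on $(t_0,t_1)$. Stop once $pq$ is Delaunay throughout $(t_0,t_1)$ in $P\setminus A$. If the procedure terminates with $|A|<3k$, alternative (iii) is established.

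The heart of the argument is showing that if $|A|$ ever reaches $3k$ without termination, then $\Omega(k^2)$ $O(k)$-shallow co-circularities involving $p,q$ have been witnessed along the way. The key sub-claim I would pursue is that each individual removal forces $\Omega(k)$ new breakpoints to appear on the updated envelopes $E^+,E^-$ of $P\setminus A$, and that each such breakpoint is a monochromatic co-circularity involving $p,q$ whose level with respect to $P$ is at most $|A|\le 3k$, hence is $O(k)$-shallow. Summing over the $\Theta(k)$ rounds would then yield $\Omega(k^2)$ $O(k)$-shallow co-circularities, validating (ii) after the usual adjustment of constants. The tools here would be Davenport--Schinzel bounds for sub-envelopes, together with the assumption that no $k$-shallow collinearity of $p,q,r$ occurs in $(t_0,t_1)$ (else (i) already holds), which prevents the functions $f_r^\pm$ from escaping to $\pm\infty$ within the shallow part of $\A_{pq}$ during $(t_0,t_1)$.

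The main obstacle is rigorously establishing the quantitative sub-claim above. One must rule out the pathological scenario in which the removed envelope function is replaced by a ``cheap'' successor producing only a constant number of new breakpoints, thereby failing to extract $\Omega(k)$ shallow events per round. I would expect to rely on a Clarkson--Shor-style random sampling argument applied to sub-samples of $P\setminus A$, combined with the pseudo-algebraic bound of $s$ co-circularities per fixed four-tuple, to force a sufficient density of monochromatic crossings in the shallow portion of $\A_{pq}$. A final bookkeeping step then combines these contributions across the $O(k^2)$ non-Delaunay sub-intervals identified at the outset, producing either alternative (ii) (when the greedy overshoots) or alternative (iii) (when it succeeds within the $3k$ budget).
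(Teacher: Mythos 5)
Your proposal takes a genuinely different route from the paper, but it has a gap at exactly the step you flag as the ``main obstacle,'' and that gap is not a technicality that a Clarkson--Shor sample would close---it is the crux. The sub-claim that each greedy removal forces $\Omega(k)$ new breakpoints on the updated envelopes is false in general: removing a single function from an arrangement of pseudo-algebraic curves can expose as few as $O(1)$ new envelope breakpoints (for instance, when the removed function was replaced over its entire envelope portion by a single other function). Nothing about pseudo-algebraicity, Davenport--Schinzel bounds, or random sub-sampling of $P\setminus A$ converts this into a per-round lower bound of $\Omega(k)$ shallow co-circularities; those tools control sums and expectations, not worst-case increments under one deletion. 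Without that sub-claim the accounting ``$\Theta(k)$ rounds $\times\Omega(k)$ events $=\Omega(k^2)$ events'' does not go through, and there is no fallback showing that a failed greedy run certifies condition (ii).

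The paper avoids this issue by not iterating at all. It fixes a threshold $\lceil k/3\rceil$, defines $A$ \emph{in one shot} as the set of points whose red or blue function ever reaches level $\le\lceil k/3\rceil$ during $(t_0,t_1)$ (together with a handful of points shallow at $t_0$), and then argues directly about the continuous motion of the circumdisc $B[p,q,u]$ for each $u\in A$ that was deep at $t_0$: such a disc starts with at least $k$ red points and no blue points, and must shed $\Omega(k)$ of them before reaching level $\lceil k/3\rceil$, each shedding being a $k$-shallow co-circularity (the bound on the level is guaranteed because otherwise one is in the separate case of a ``deep disc'' $D^*$, which is handled first and also yields $\Omega(k^2)$ shallow events). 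This direct, per-point disc-tracking argument is what replaces your per-round breakpoint count, and it is the idea your proposal is missing. Your opening reduction to $O(k^2)$ non-Delaunay sub-intervals is correct but unused in the remainder and not needed; and note also that the theorem only assumes $pq$ Delaunay at one of $t_0,t_1$, which the paper exploits by always anchoring levels at the Delaunay endpoint.
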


Notice that we do not assume in Theorem \ref{Thm:RedBlue} that $pq$ leaves $\DT(P)$ at any moment during $(t_0,t_1)$. 
Nevertheless, suppose that $t_0$ is the time of a Delaunay co-circularity at which $pq$ leaves $\DT(P)$, and $t_1$ is the first time after $t_1$ when $pq$ re-enters $\DT(P)$.
Then Theorem \ref{Thm:RedBlue} relates such Delaunay co-circularities to $k$-shallow collinearities and co-circularities which occur in $\A_{pq}$ when the edge $pq$ under consideration is not Delaunay.
Therefore, this theorem can be regarded, in its own right, as one of the main contributions of this paper.


The proof of Theorem \ref{Thm:RedBlue} is based on the following simple idea. Assume that the edge $pq$ does not belong to $\DT(P)$ at a fixed time $t\in (t_0,t_1)$. If the Delaunayhood of $pq$ is violated by $\Omega(k^2)$ red-blue pairs $(r,b)$, then we encounter, during $(t_0,t_1)$, $\Omega(k^2)$ co-circularities (each involving $p,q$ and the corresponding pair $r,b$), or at least $\Omega(k)$ points $r$ change their color there by crossing $\L_{pq}$. 
Finally, if the Delaunayhood of $pq$ is violated at time $t$ by only $O(k^2)$ pairs, then it can be restored by removing a subset $A\subset P\setminus\{p,q\}$ of cardinality at most $O(k)$.

Impatient readers may safely skip the full proof of Theorem \ref{Thm:RedBlue}, which involves a fairly routine 
planar analysis in the above arrangement $\A_{pq}$ of red and blue curves. (A very similar argument was used in \cite{ASS} to address a totally different problem.)

\paragraph{Proof of Theorem \ref{Thm:RedBlue}.}
Without loss of generality, we assume that the edge $pq$ is Delaunay at time $t_0$. (If $pq$ is Delaunay at time $t_1$ then we can argue in a fully symmetrical fashion.)

Consider the portion of the red-blue arrangement associated with $pq$ within the time interval $(t_0,t_1)$. As above, refer to the parametric plane in which this arrangement is represented as the $t\rho$-plane, where $t$ is the time axis and $\rho$ measures signed distances  from $\L_{pq}$.
We define the {\it red} (resp., {\it blue}) {\it level} of a point $x=(t,\rho)$ in this parametric $\reals^2$ as the number of red (resp., blue) functions that lie below (resp., above) $x$ (in the $\rho$-direction). See Figure \ref{Fig:RedBlueLevels}.
It is easily checked that the level of a co-circularity event at time $t$, with circumcenter at distance $\rho$ from $\L_{pq}$, is the sum of the red and the blue levels of $(t,\rho)$.

\begin{figure}[htbp]
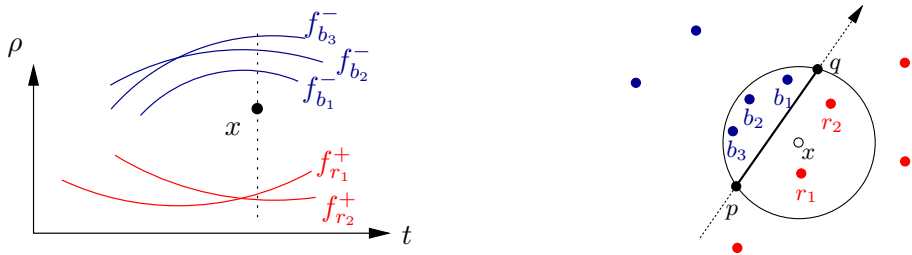

\begin{center}
\input{RedBlueLevels.pstex_t}\hspace{3cm}\input{RedBlueLevels1.pstex_t}
\caption{\small Left: The point $x=(t,\rho)$ lies below three blue functions and above two red functions, so its blue and red levels are $3$ and $2$, respectively. Right: The circumdisc centered at (signed) distance $\rho$ from $\L_{pq}$ and touching $p$ and $q$ at time $t$ contains the three corresponding blue points and two red points.}
\label{Fig:RedBlueLevels}
\end{center}
\end{figure} 

We distinguish between the following (possibly overlapping) cases:

\smallskip
\noindent {\bf (a)} $p$ and $q$ participate in a $k$-shallow collinearity with a third point $r$ at some moment during $I$. That is, condition (i) is satisfied. (Note that here we do not care whether $r$ crosses $pq$ or $\L_{pq}\setminus pq$.)

Suppose that this does not happen. That is, each time when a point $r\in P$ changes its color from red to blue or vice versa, the number of points on each side of $\L_{pq}$ is larger than $k$. Hence, either the number of points on each side of $\L_{pq}$ is always larger than $k$ (during $(t_0,t_1)$), or the sets of red and blue points remain fixed throughout $(t_0,t_1)$ (no crossing takes place), and the size of one of them is at most $k$. More concretely, either one of the sets contains fewer than $k$ points at the start of $I$, and then no crossing can ever occur during $I$, or both sets contain at least $k$ points at the start of $I$, and this property is maintained during $I$, by assumption. In the former case condition (iii) trivially holds, since removal of all points in $P\cap \L_{pq}^+$ or in $P\cap \L_{pq}^-$ guarantees that $pq$ is a hull edge throughout $(t_0,t_1)$, and thus belongs to the Delaunay triangulation. Hence, we may assume that the number of red points, and the number of blue points, are always both larger than $k$ during $(t_0,t_1)$.

\begin{figure}[htbp]
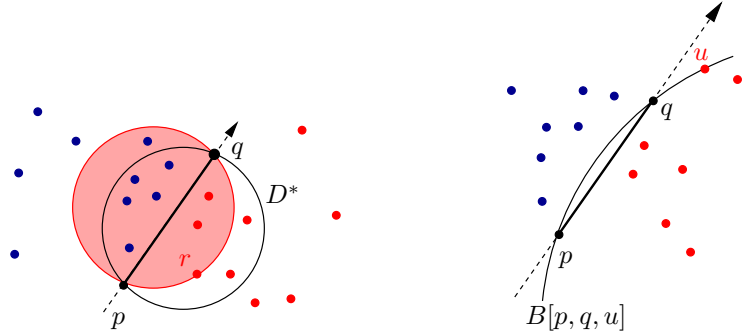

\begin{center}
\input{DeepDeep.pstex_t} \hspace{2cm} \input{CrossOutside.pstex_t}
\caption{\small Left: Case (b). The disc $D^*$ contains at least $k=5$ red points, and at least $k$ blue points. If $r$ lies at red level at most $k$, it belongs to $D^*$. Hence, the circumdisc $B[p,q,r]$ contains at least $k$ blue points, so the blue level of $f^+_r$ is at least $k$. Right: Case (c). The setup right after time $t'$ when $u$ crosses $\L_{pq}\setminus pq$. $B[p,q,u]$ contains at least $3k$ red points and no blue points.}
\label{Fig:DeepDisc}
\end{center}
\end{figure} 

\smallskip
\noindent {\bf (b)} At some moment $t_0\leq t^*\leq t_1$ there is a disc $D^*$ that touches $p$ and $q$, and contains at least $\lceil k/3\rceil$ red points and at least $\lceil k/3\rceil$ blue points.
In particular, for each of the $\lceil k/3\rceil$ shallowest red functions $f_r^+$ at time $t^*$, its respective red point $r$ belongs to $D^*$ and similarly for the $\lceil k/3\rceil$ shallowest blue functions.
See Figure \ref{Fig:DeepDisc} (left). Before we use the existence of $D^*$ we first conduct the following structural analysis.

Let $f_r^+$ be a red function which is defined at time $t_0$, and whose red level is then at most $\lfloor k/6\rfloor$. (Recall that, at time $t_0$, the blue level of any red function is $0$ since $pq$ belongs to $\DT(P)$.) We claim that either $f_r^+$ is defined and continuous throughout $(t_0,t_1)$ and its red level is always at most $\lceil k/3\rceil$, or $r$ participates in at least $\lceil k/6\rceil$ red-red and/or red-blue co-circularities, all of which are $\lceil k/3\rceil$-shallow. 

Indeed, the circumdisc $B[p,q,r]$ contains at most $\lfloor k/6\rfloor$ red points (and no blue points) at time $t_0$, and it moves continuously as long as $r$ remains in $\L_{pq}^+$. By the time at which either (the graph of) $f_r^+$ reaches red level $\lceil k/3\rceil$ or $r$ hits $\L_{pq}$, this disc ``swallows" either at least $\lceil k/6\rceil$ red points (either in the former case or in the latter case when $r$ crosses $\L_{pq}\setminus pq$) or at least $\lceil k/6\rceil$ blue points (in the latter case when $r$ crosses $pq$). (Recall that, by assumption, the number of red points and the number of blue points is always larger than $k$ during $I$.) We thus obtain at least $\lceil k/6\rceil$ $\lceil k/3\rceil$-shallow red-red or red-blue co-circularities involving $p,q,r,$ and a fourth (red or blue) point. 

To recap, if at least $\lfloor k/12\rfloor$ red functions, which at time $t_0$ are among the $\lceil k/6\rceil$ shallowest red functions, reach red level at least $\lceil k/3\rceil+1$, or have a discontinuity at $\rho=-\infty$ or $+\infty$ (at a crossing of $\L_{pq}$ by the corresponding point), then we encounter $\Omega(k^2)$ co-circularities (involving $p$ and $q$) which are $k$-shallow, so condition (ii) holds.

Hence, we may assume that at least $\lceil k/12 \rceil$ red functions $f_r^+$ that are among the $\lceil k/6\rceil$ shallowest red functions at time $t_0$, are defined throughout $(t_0,t_1)$, and their red level always remains at most $\lceil k/3\rceil$.
Fix any such red function $f_r^+$.
Clearly, the red point $r$ that defines $f_r^+$ belongs to $D^*$ at time $t^*$, and the circumdisc $B[p,q,r]$ contains at least $\lceil k/3\rceil$ blue points. See Figure \ref{Fig:DeepDisc} (left).
This implies that the blue level of $f_r^+$ reaches $\lceil k/3\rceil$ so (since the blue level was $0$ at time $t_0$) $r$ participates in at least $\lfloor k/6\rfloor$ $\lceil k/3\rceil$-shallow co-circularities during $(t_0,t^*)$. Repeating this argument for each of the remaining $\lceil k/12\rceil$ such red functions, we conclude that condition (ii) is again satisfied.

\smallskip
\noindent {\bf(c)} Suppose that neither of the two cases (a), (b) holds. 
Let $A_R$ (resp., $A_B$) be the subset of all points $u$ whose red (resp., blue) functions $f_u^+$ (resp., $f^-_u$) appear at red (resp., blue) level at most $\lceil k/3 \rceil$ at some moment during $(t_0,t_1)$. 

Since the situation in (b) does not occur, we can restore the Delaunayhood of $pq$, throughout the entire interval $(t_0,t_1)$, by removing all points in $A_R\cup A_B$. To see this, suppose that $pq$ is not Delaunay (in $\DT(P\setminus (A_R\cup A_B))$) at some time $t_0< t^*< t_1$. This is witnessed by a disc $D^*$ whose boundary passes through $p$ and $q$ and which contains a red point $r\not\in A_R$ and a blue point $b\not\in A_B$. Since the red level of $f_r^+$ is greater than $\lceil k/3\rceil$ at time $t^*$, $D^*$ must also contain the $\lceil k/3\rceil$ red points corresponding to the $\lceil k/3\rceil$ shallowest red functions at time $t^*$, and, symmetrically, also the $\lceil k/3\rceil$ blue points corresponding to the $\lceil k/3 \rceil$ shallowest blue functions at time $t^*$. But then $D^*$ satisfies the condition (b), contrary to assumption.

Let $A_R^o$ (resp., $A_B^o$) be the set of $k$ points whose red (resp., blue) functions are shallowest at time $t_0$.
It remains to consider the case where at least $k$ points $u$ in $A_R\cup A_B$ belong to neither of $A_R^o,A_B^o$, for otherwise condition (iii) is trivially satisfied, with a removed set of size at most $3k$.
Fix such a point $u$ and consider the first time $t^*\in (t_0,t_1)$ when its red function $f_u^+$ has red level at most $\lceil k/3\rceil$, or its blue function $f_u^-$ has blue level at most $\lceil k/3\rceil$. Without loss of generality, suppose that at time $t^*$ the red function $f_u^+$ has red level at most $\lceil k/3\rceil$.
We claim that $u$ does not cross $pq$ during $(t_0,t^*]$. Indeed, if there were such a crossing from $\L_{pq}^-$ to $\L_{pq}^+$ then the blue function $f_u^-$ would tend to $\infty$ right before the crossing, and its blue level would then be $0$ even before $t^*$, contrary to the choice of $t^*$. Similarly, if the crossing were from $\L_{pq}^+$ to $\L_{pq}^-$ then the red level of $f_u^+$ would be $0$ just before the crossing, again contradicting the choice of $t^*$. 

First, assume that $u$ does not cross $\L_{pq}$ during $(t_0,t^*)$, so the graph of $f_u^+$ is continuous during this time interval. Hence, the motion of the circumdisc $B[p,q,u]$ is also continuous.
Since $u\not\in A_R^o$, at time $t_0$ the circumdisc $B[p,q,u]$ contains at least $k$ red points and no blue points. At time $t^*$,
$B[p,q,u]$ contains $\lceil k/3\rceil$ red points and fewer than $\lceil k/3\rceil$ blue points (otherwise Case (b) would occur). 
Hence, we encounter at least $\lfloor k/3\rfloor$ $k$-shallow co-circularities during $(t_0,t^*)$, each involving $p,q,u$ and some other point of $P$.

Now, suppose $u$ crosses $\L_{pq}\setminus pq$ during $(t_0,t^*)$, and consider the last time $t'$ when this happens.
We can use exactly the same argument as in the ``continuous" case but now starting from $t'$. Indeed, $f_u^+$ is continuous during $(t',t^*]$ and, right after $t'$, the circumdisc $B[p,q,u]$ contains (all the red points and thus) at least $k$ red points, and no blue points. See Figure \ref{Fig:DeepDisc} (right).

Repeating this argument for all such points $u\in A_R\cup A_B\setminus(A_R^o\cup A_B^o)$, we get $\Omega(k^2)$ $k$-shallow co-circularities which occur during $(t_0,t_1)$ and involve $p$ and $q$. Hence, condition (ii) is again satisfied. This completes the proof of Theorem \ref{Thm:RedBlue}. $\Box$

\paragraph{Combinatorial charging schemes.}
To conclude this section, we briefly review the following general paradigm, which is widely used in computational geometry to bound the combinatorial complexity of various substructures in arrangements of (mostly non-linear) objects; see, e.g., \cite{Envelopes3D,ConstantLines} and \cite[Section 7]{SA95}.

Suppose that we are given two collections $\X$ and $\Y$ of geometric configurations, and wish to upper-bound the cardinality $|\X|$ of $\X$ in terms of the cardinality $|\Y|$ of $\Y$.
Note that the configurations in $\X$ and $\Y$ are usually of different types. For example, $\X$ and $\Y$ can consist, respectively, of Delaunay co-circularities and of $k$-shallow collinearities.

The most elementary class of charging schemes (which we shall use throughout this paper) is prescribed by a function $\lambda$ which maps each element $x\in \X$ to a subset $\lambda(x)\subseteq \Y$. 
We then say that that every element $y\in \lambda(x)$ is {\it charged} by $x$.
We also say that a configuration $y\in \Y$ is charged $\beta_y$ times if $\X$ contains exactly $\beta_y$ configurations $x$ whose respective images $\lambda(x)$ contain $y$. Furthermore, we say that $y\in \Y$ is charged {\it uniquely} if there is exactly one $x\in \X$ whose image $\lambda(x)$ contains $y$ (so $x$ is {\it uniquely determined} by the choice of $y$).

The resulting relation between $|\X|$ and $|\Y|$ depends on the following two parameters $\alpha$ and $\beta$ associated with our charging rule $\lambda$. The first parameter $\alpha$ denotes the minimum cardinality $|\lambda(x)|$ of $\lambda(x)$ (over all possible choices of $x\in \X$).
 The second parameter $\beta$ denotes the maximal possible number $\beta_y$ of configurations $x\in \X$ whose images contain a given configuration $y\in \Y$ (where the maximum is taken over all choices of $y\in \Y$).
In other words, $\alpha$ denotes the minimum number of configurations in $\Y$ that can be charged by the same  $x\in \X$, and $\beta$ denotes the maximum number of configurations $x\in \X$ that can charge the same $y\in \Y$.
With the above definitions, a standard double counting argument immediately shows that $|\X|\leq \frac{\beta |\Y|}{\alpha}$.

Therefore, in order to obtain the best possible upper estimate of $|\X|$, we seek to maximize $\alpha$, and to minimize 
$\beta$. 
In all our charging schemes, the mapping $\lambda$ will be constructed explicitly, so the value of $\alpha$ will be clear from the construction (and, most often, equal to $1$, with one significant exception). Thus, the main challenge will be to keep the value of $\beta$ under control (i.e., make sure that each configuration $y\in \Y$ is charged by relatively {\it few} members of $\X$).

\section{The Number of Delaunay Co-circularities}\label{Sec:DelCocircs}
In what follows, 
we assume that any four points in the underlying set $P$ are co-circular at most {\it twice} during their pseudo-algebraic motion. 
In this section we show that the maximum possible number $N(n)$ of Delaunay co-circularities in a set $P$, as above, is asymptotically dominated (if it is at least super-quadratic) by
the number of certain carefully defined configurations which will be referred to as {\it Delaunay crossings}. The analysis of Delaunay crossings will be postponed to Section \ref{Sec:CrossOnce}, where we shall impose additional restrictions on the collinearities that can be performed by triples of points in $P$.

\smallskip
\noindent{\it Definition.} We say that a co-circularity event at time $t_0$ involving $a,b,p,q$ has {\it index} $1$ (resp., $2$) if this is the first (resp., second) co-circularity involving $a,b,p,q$. 

\smallskip
To bound the maximum possible number of Delaunay co-circularities in $P$, we fix one such event at time $t_0$, at which an edge $pq$ of $\DT(P)$ is replaced by another edge $ab$, because of a red-blue co-circularity (with respect to $pq$, and, for that matter, also with respect to $ab$) of level $0$.
Assume first that the co-circularity of $p,q,a,b$ has index $2$; the case of index $1$ is handled fully symmetrically, by reversing the direction of the time axis.
 
There are at most $O(n^2)$ such events for which the vanishing edge $pq$ never reappears in $\DT(P)$, so we focus on the Delaunay co-circularities (of index $2$) whose corresponding edge $pq$ rejoins $\DT(P)$ at some future moment $t_1>t_0$ (or right after it).

Specifically, $\DT(P)$ experiences at time $t_1$ either a Delaunay co-circularity or a hull event (at which $pq$ is hit by some point of $P\setminus \{p,q\}$). In the latter case, $rq$ is not strictly Delaunay at time $t_1$ and appears in $\DT(P)$ only {\it right after} this event. 

Note that in this case, if the co-circularity at time $t_0$ involved two other points $a,b$, then at least one of $a,b$ must cross $\L_{pq}$ between $t_0$ and $t_1$ otherwise $p,q,a$ and $b$ would have to become co-circular again, in order to ``free" $pq$ from non-Delaunayhood, which is impossible since our co-circularity is assumed to be {last} co-circularity of $p,q,a,b$.

More generally, we have the following topological lemma:
\begin{lemma}\label{Lemma:MustCross}
Assume that the Delaunayhood of $pq$ is violated at time $t_0$ (or rather right after it) by the points $a\in \L_{pq}^-$ and $b\in \L_{pq}^+$. 
Furthermore, suppose that $pq$ enters $\DT(P)$ at some future time $t_1>t_0$.
Then at least one of the followings occurs during $(t_0,t_1]$: 

\medskip
(1) The point $a$ crosses $pq$ from $\L_{pq}^-$ to $\L_{pq}^+$.\\
\indent(2) The point $b$ crosses $pq$ from $\L_{pq}^+$ to $\L_{pq}^-$.\\
\indent(3) The four points $p,q,a,b$ are involved in another co-circularity (which is also red-blue with respect to $pq$). 
\end{lemma}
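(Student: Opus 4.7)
My plan is to track the two functions $f_b^+$ (red function of $b$) and $f_a^-$ (blue function of $a$) in the arrangement $\A_{pq}$, restricted to the time interval $(t_0, t_1]$, and then apply the intermediate value theorem. By assumption, these two functions coincide at $t_0$ (the Delaunay co-circularity of $p,q,a,b$) and $f_b^+<f_a^-$ right after $t_0$ — precisely the statement that the red-blue pair $(b,a)$ violates the Delaunayhood of $pq$. On the other hand, since $pq$ rejoins $\DT(P)$ at $t_1$, the pair $(b,a)$ can no longer violate Delaunayhood there: either both functions are still defined at $t_1$ with $f_b^+(t_1)\ge f_a^-(t_1)$, or one of them has become undefined in the meantime (which requires $a$ or $b$ to cross $\L_{pq}$). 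The whole idea is that if neither (1) nor (2) occurs, continuity forces the graphs of $f_b^+$ and $f_a^-$ to meet, yielding (3).

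\textbf{Key steps.} I would let $t^*\in(t_0,t_1]$ be the earliest moment at which either $a$ or $b$ crosses $\L_{pq}$, setting $t^*:=t_1$ if neither does, and then split into three cases. On the open subinterval $(t_0,t^*)$ both functions are defined and continuous, so $f_b^+-f_a^-$ is a continuous real-valued function of $t$ that is negative at $t_0^+$. In the first case $t^*=t_1$, the Delaunayhood of $pq$ at $t_1$ gives $f_b^+(t_1)\ge f_a^-(t_1)$ (equality at $t_1$ is already a co-circularity), so by the IVT the two functions meet somewhere in $(t_0,t_1]$, producing (3). In the second case $a$ crosses $\L_{pq}$ at $t^*$, the negation of (1) forces the crossing to be \emph{outside} the segment $pq$, and the discontinuity rules from the preliminaries then give $f_a^-(t)\to-\infty$ as $t\to(t^*)^-$; since $f_b^+$ remains continuous and finite on $(t_0,t^*)$ while $f_a^-$ started above it at $t_0^+$, the IVT forces an intersection in $(t_0,t^*)\subseteq(t_0,t_1]$, again giving (3). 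The third case, $b$ crossing $\L_{pq}$ at $t^*$, is fully symmetric, with $f_b^+(t)\to+\infty$.

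\textbf{Main obstacle.} The step I expect to require the most care is the bookkeeping for the one-sided limits of $f_b^+$ and $f_a^-$ at their discontinuities, and for the sign of the jump. The crucial observation that powers the argument is that a crossing of $\L_{pq}$ \emph{outside} the segment $pq$ sends $f_a^-$ to $-\infty$ and $f_b^+$ to $+\infty$, i.e., each to the ``wrong'' infinity relative to its value at $t_0^+$, which is exactly what is needed to force $f_b^+-f_a^-$ to change sign before the crossing. By contrast, a crossing of the segment $pq$ sends these functions to the \emph{opposite} infinities, and then $f_b^+-f_a^-$ need not change sign before the crossing — which is precisely why the lemma lists (1) and (2) as genuine alternatives to (3). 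Once this dichotomy between the two kinds of collinearities is correctly handled, each of the three IVT arguments is routine.
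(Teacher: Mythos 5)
Your proof is correct. It is the paper's argument reformulated in the parametric $(t,\rho)$-plane: you track the red and blue functions $f_b^+$, $f_a^-$ and invoke the intermediate value theorem, whereas the paper works directly in the geometric plane. There the condition $f_b^+(t) < f_a^-(t)$ (that the pair $(a,b)$ violates the Delaunayhood of $pq$) is read as $a$ lying in the cap $B[p,q,b]\cap\L_{pq}^-$ and $b$ lying in the opposite cap $B[p,q,a]\cap\L_{pq}^+$, and the paper considers the first moment $t^*$ at which one of the two points, say $a$, exits its cap. Since $B[p,q,b]\cap\L_{pq}^-$ is bounded only by an arc of $\partial B[p,q,b]$ and the chord $pq$, the point $a$ exits either through that arc (a co-circularity of $p,q,a,b$, outcome (3)) or through $pq$ (outcome (1)); a crossing of $\L_{pq}\setminus pq$ cannot be the first exit because that part of $\L_{pq}$ lies outside the cap to begin with. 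Your version must treat the outside crossing as a separate case, arguing that $f_a^-\to-\infty$ there while $f_b^+$ stays finite so that $f_b^+-f_a^-$ must already have changed sign earlier; this is exactly the functional translation of the fact that $a$ must have left the cap before it could reach $\L_{pq}\setminus pq$. Both arguments are sound: the paper's geometric formulation is a bit tighter because the cap automatically absorbs the outside crossing, while your IVT version makes the continuity mechanism and the $\pm\infty$ discontinuity rules completely explicit.
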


A symmetric version of Lemma \ref{Lemma:MustCross} applies if the Delaunayhood of $pq$ is violated at time $t_0$ (or right before it) by $a$ and $b$, and this edge is Delaunay at an {\it earlier} time $t_1<t_0$.

\begin{proof}
Refer to Figure \ref{Fig:StaysViolated}.
Clearly, the Delaunayhood of $pq$ remains violated by $a$ and $b$ after time $t_0$ as long as
$a$ remains within the cap $B[p,q,b]\cap \L_{pq}^-$, and $b$ remains within the cap $B[p,q,a]\cap \L_{pq}^+$ (as depicted in the left figure). 

Consider the first time $t^*\in (t_0,t_1]$ when the above state of affairs ceases to hold. Notice that $pq$ is intersected by $ab$ throughout the interval $[t_0,t^*)$.
Assume without loss of generality that $a$ leaves the the cap $B[p,q,b]\cap \L_{pq}^-$. If $a$ crosses $pq$, then the first scenario holds. Otherwise, $a$ can leave the above cap only through the boundary of $B[p,q,b]$ (as depicted in the right figure), so the third scenario occurs.
\end{proof}
\begin{figure}[htb]
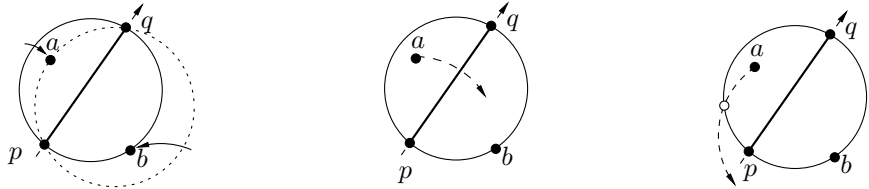

\begin{center}
\input{AfterLastCocirc.pstex_t}\hspace{2.5cm}\input{CrossWithin.pstex_t}\hspace{2.5cm}\input{NotOutside.pstex_t}
\caption{\small Proof of Lemma \ref{Lemma:MustCross}. Left: The setup right after time $t_0$. Center and right: the point $a$ can leave $B[p,q,b]$ in two possible ways.}
\label{Fig:StaysViolated}
\end{center}
\vspace{-0.3cm}
\end{figure}


Notice, however, that the points of $P$ can define $\Omega(n^3)$ collinearities, so a naive charging of Delaunay co-circularities to collinearities of type (1) or (2) in Lemma \ref{Lemma:MustCross} will not lead to a near-quadratic upper bound. (In other words, the universe of {\it all} collinearity events is far too large for our purposes.) Therefore, before we get to charging collinearities, we perform several preliminary charging steps, which will account for some Delaunay co-circularities of index $2$ (thus removing their corresponding collinearities from consideration). 

As a preparation, we fix a constant parameter $k>12$, apply Theorem \ref{Thm:RedBlue} to the edge $pq$ over the interval $(t_0,t_1)$ of its absence from $\DT(P)$. We distinguish between three possible alternatives provided by that theorem.

\smallskip
\noindent{\bf (i)} If the first condition of Theorem \ref{Thm:RedBlue} is satisfied, we can charge the co-circularity of $p,q,a,$ and $b$ to a $k$-shallow collinearity that occurs in $(t_0,t_1)$ and involves $p,q,$ and some third point of $P$. As argued in Section \ref{Sec:Prelim}, the overall number of $k$-shallow collinearities is $O(kn^2\beta(n))$.

Clearly, any collinearity event is charged at most a constant number of times. Namely, it can be charged only for the disappearances of edges $pq$ whose two vertices $p,q$ participate in the event, and only for the disappearance immediately preceding the event, without any in-between reappearance.

To conclude, the number of Delaunay co-circularities that fall into case (i) of Theorem \ref{Thm:RedBlue} does not exceed $O(kn^2\beta(n))$.

\smallskip
\noindent{\bf (ii)} If the second condition of Theorem \ref{Thm:RedBlue} is satisfied, then we charge the Delaunay co-circularity at time $t_0$ to $\Omega(k^2)$ $k$-shallow co-circularities, each occurring in $(t_0,t_1)$ and involving $p,q,$ together with some two other points of $P$.

As argued in Section \ref{Sec:Prelim}, the overall number of $k$-shallow co-circularities is $O(k^4 N(n/k))$. Once again, each $k$-shallow co-circularity is charged by only $O(1)$ Delaunay co-circularities in this manner, because $t_0$ is the last disappearance of $pq$ before the charged event. Hence, at most $O(k^2N(n/k))=O\left(\frac{k^4}{k^2}N(n/k)\right)$ Delaunay co-circularities can fall into this case.

The above two cases account for at most $O(k^2N(n/k)+kn^2\beta(n))$ Delaunay co-circularities (of index $2$). If left to themselves, they would result in a recurrence of $N(n)=O(k^2N(n/k)+kn^2\beta(n))$, with a nearly quadratic solution (see below for details, and see, e.g., \cite{ASS} for similar situations).
Unfortunately, this scheme does not always work because there might exist Delaunay co-circularities for which the respective red-blue arrangement (of the disappearing edge $pq$) contains relatively few $k$-shallow co-circularities, and no $k$-shallow collinearities.
Such instances fall into the third case of Theorem \ref{Thm:RedBlue}, which is far more complicated to handle.

\smallskip
\noindent{\bf (iii)} There is a set $A$ of at most $3k$ points (necessarily including at least one of $a$ or $b$) whose removal ensures the Delaunayhood of $pq$ throughout $(t_0,t_1)$. Recall that, by Lemma \ref{Lemma:MustCross}, at least one the two points $a,b$, let it be $a$, crosses $pq$ during $(t_0,t_1]$. In the reduced triangulation $\DT(P\setminus A\cup\{a\})$, the collinearity of $p,q$ and $a$ is of a special type, and we refer to it as a {\it Delaunay crossing}. 

\medskip
\noindent{\bf Delaunay crossings.}
A {\it Delaunay crossing} is a triple $(pq,r,I=[t_0,t_1])$, where $p,q,r\in P$ and $I$ is a time interval, such that 
\begin{enumerate}
\item $pq$ 
leaves $\DT(P)$ at time $t_0$, and returns at time $t_1$ (and $pq$ does not belong to $\DT(P)$ during $(t_0,t_1)$),
\item $r$ crosses the segment $pq$ {\it at least} once during $I$, and
\item $pq$ is an edge of $\DT(P\setminus \{r\})$ during $I$ (i.e., removing $r$ restores the Delaunayhood of $pq$ during the entire time interval $I$).
\end{enumerate}

\begin{figure}[htbp]
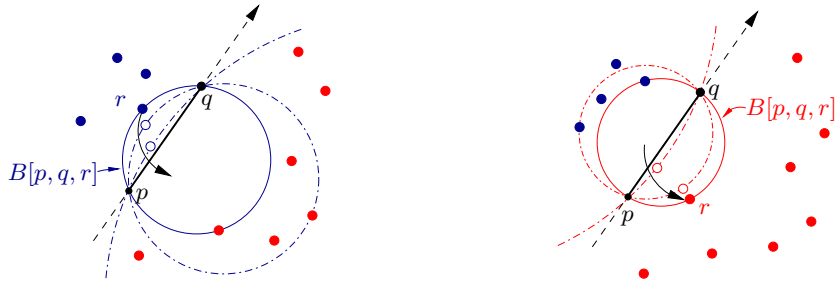

\begin{center}
\input{DelaunayCrossing.pstex_t}\hspace{3cm}\input{DelaunayCrossing1.pstex_t}
\caption{\small A Delaunay crossing of $pq$ by $r$ from $\L_{pq}^-$ to $\L_{pq}^+$. Several snapshots of the continuous motion of $B[p,q,r]$ before and after $r$ crosses $pq$ are depicted (in the left and right figures, respectively). Hollow points specify the positions of $r$ when $pq\not \in \DT(P)$. The solid circle in the left (resp., right) figure is the Delaunay co-circularity that starts (resp., ends) $I$.}
\label{Fig:DelaunayCrossing}
\end{center}
\end{figure} 

Note that we allow Delaunay crossings, where the point $r$ hits $pq$ at one (or both) of the times $t_0,t_1$. In this case, the crossed edge $pq$ leaves the convex hull of $P$ at time $t_0$, or enters it at time $t_1$. Clearly, the overall number of such ``degenerate" crossings is bounded by $O(n^2\beta(n))$.

It is easy to see that the third condition is equivalent to the following condition, expressed in terms of the red-blue arrangement $\A_{pq}$ associated with $pq$: The point $r$ participates only in red-blue co-circularites during the interval $I$, and these are the only red-blue co-circularities that occur during $I$.

More specifically, note that $r$ is red during some portion of $I$ and is blue during the complementary portion (both portions are nonempty, unless $r$ hits $pq$ when $I$ begins or ends). During the former portion the graph of $f_r^+$ coincides with the red lower envelope $E^+$ (otherwise $E^+(t)<E^-(t)$ would hold sometime during $I$ even after removal of $r$), so it can only meet the graphs of blue functions. Similarly, during the latter portion $f_r^-$ coincides with the blue upper envelope $E^-$, so it can only meet the graphs of red functions.
See Figure \ref{Fig:DelaunayCrossing} for a schematic illustration of this behavior.

Notice that no points, other than $r$, cross $pq$ during $I$ (any such crossing would clearly contradict the third condition at the very moment when it occurs).
Moreover, $r$ does not cross $\L_{pq}$ outside $pq$ during $I$; otherwise $pq$ would belong to $\DT(P)$ when $r$ belongs to $\L_{pq}\setminus pq$.


\medskip
\noindent{\bf Back to case (iii).} We can now express the number of remaining Delaunay co-circularities of index $2$ in terms of the maximum possible number of Delaunay crossings. To achieve this, we again resort to a probabilistic argument, in the spirit of Clarkson and Shor. Recall that for each such co-circularity there is a set $A$ of at most $3k$ points whose removal restores the Delaunayhood of $pq$ throughout $[t_0,t_1]$. In addition, we assume that $a$ hits $pq$ during $(t_0,t_1]$, and then $a\in A$.

We sample at random (and without replacement) a subset $R\subset P$ of $O(n/k)$ points, and notice that the following two events occur simultaneously with probability at least $\Omega(1/k^3)$: (1) the points $p,q,a$ belong to $R$, and (2) none of the points of $A\setminus\{a\}$ belong to $R$. 

Since $a$ crosses $pq$ during $[t_0,t_1]$, and $pq$ is Delaunay at time $t_0$ and (right after) time $t_1$, the sample $R$ induces a Delaunay crossing $(pq,a,I)$, for some time interval $I\subset [t_0,t_1]$. (If $a$ crosses $pq$ more than once, there may be several such crossings which occur at disjoint  sub-intervals of $[t_0,t_1]$, but it may also be the case that all these crossing form a single Delaunay crossing, in the way it was defined above. This depends on whether $pq$ manages to become Delaunay in $\DT(R)$ in between these crossings.)

We charge the disappearance of $pq$ from $\DT(P)$ to the above crossing in $R$ (or to the first such crossing if there are several) and note that the charging is unique (i.e., every Delaunay crossing $(pq,a,I)$ in $\DT(R)$ is charged by at most one disappearance of the respective edge $pq$ from $\DT(P)$).
Hence, the number of Delaunay co-circularities of this kind is bounded by $O(k^3 C(n/k))$, where $C(n)$ denotes the maximum number of Delaunay crossings induced by any collection $P$ of $n$ points whose motion satisfies the above assumptions.

If the Delaunay co-circularity of $p,q,a,b$ has index $1$, we reverse the direction of the time axis and argue as above for the edge $ab$ instead of $pq$. We thus obtain the following recurrence for the number of Delaunay co-circularities:

\begin{equation}\label{Eq:FirstRecurrence}
N(n)\leq c\left(k^2N(n/k)+k^3C(n/k)+kn^2\beta(n)\right),
\end{equation}
for some absolute constant $c>0$ which is independent of $k$.

Informally, (\ref{Eq:FirstRecurrence}) implies that the maximum number of Delaunay co-circularities is asymptotically dominated by the maximum number of Delaunay crossings.

\paragraph{Discussion.} In the above analysis, we have used Theorem \ref{Thm:RedBlue} for the edge $pq$, which vanishes at the Delaunay co-circularity, in order to decompose the universe of all such events into three sub-classes (which correspond to the respective three cases of the theorem). Within each sub-class of Delaunay co-circularities, we have devised an entirely different charging scheme. In all cases, the (almost-)uniqueness of charging has been guaranteed through the careful choice of the interval $(t_0,t_1)$, over which Theorem \ref{Thm:RedBlue} has been applied. Additional applications of this paradigm can be found in Section \ref{Sec:CrossOnce}.

\paragraph{The number of Delaunay co-circularities--wrap-up.}
In Section \ref{Sec:CrossOnce} we shall obtain the following recurrence for the maximum number $C(n)$ of Delaunay crossings:

\begin{equation}\label{Eq:AllCrossings}
C(n)\leq c_1\left(k_1^2 N(n/k_1)+k_1 k_2^2 N(n/k_2)+k_1k_2n^2\beta(n)\right), 
\end{equation}
where $k_1$ and $k_2$ are any two constants that satisfy $12<k_1\ll k_2$, and
$c_1>0$ is another constant which is independent of $k_1,k_2$.

Our analysis will rely on the following additional assumption on the pseudo-algebraic motions of $P$ (which was not necessary to establish (\ref{Eq:FirstRecurrence})):

\smallskip
{\it Either (i) no triple of points can be collinear more than twice, or (ii) no ordered triple of points can be collinear more than once}. 

\smallskip
Substituting the inequality (\ref{Eq:AllCrossings}) into (\ref{Eq:FirstRecurrence}), we obtain the following recurrence for $N(n)$, in which we choose $k\ll k_1\ll k_2$: 
\begin{equation}\label{Eq:FinalRecurrence}
N(n)\leq c_2\left(k^2 N\left(\frac{n}{k}\right)+k^3 k_1^2N\left(\frac{n}{k_1k}\right)+k^3k_1k_2^2N\left(\frac{n}{k_2k}\right)+kk_1k_2n^2\beta(n)\right),
\end{equation}
where $c_2$ is a constant factor which does not depend on the choice of $k,k_1,k_2$.

Arguing as in earlier solutions of similar charging-based recurrences (see, e.g., \cite{Envelopes3D,ConstantLines}, or \cite[Section 7.3.2]{SA95}), the recurrence solves to
$N(n)=O(n^{2+\eps})$, for any $\eps>0$. (Specifically, for a given $\eps>0$, we choose the parameters $k\ll k_1\ll k_2$ as functions of $\eps>0$, and establish the bound $O(n^{2+\eps})$ with a constant of proportionality depending on $\eps$, using induction on $n$.) 

In conclusion, we have the following main result of this paper.

\begin{theorem}\label{Thm:OverallBound}
Let $P$ be a collection of $n$ points moving along pseudo-algebraic trajectories in the plane, so that any four points of $P$ are co-circular at most twice. Assume also that either (i) no triple of points can be collinear more than twice, or (ii) no {\it ordered} triple of points can be collinear more than once. Then the Delaunay triangulation $\DT(P)$ of $P$ experiences at most $O(n^{2+\eps})$ discrete changes throughout the motion, for any $\eps>0$.
\end{theorem}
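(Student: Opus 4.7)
The theorem follows by assembling the two charging recurrences produced in the paper and resolving them by induction. The first,
\[
N(n)\le c\bigl(k^2\,N(n/k)+k^3\,C(n/k)+k n^2\beta(n)\bigr),
\]
has just been established by applying Theorem~\ref{Thm:RedBlue} to the vanishing edge $pq$ over its interval of absence and dispatching the three resulting alternatives via Clarkson-Shor sampling and Lemma~\ref{Lemma:MustCross}; here the hypothesis that four points are co-circular at most twice enters to make the charging (almost) unique, since it lets us take the co-circularity at time $t_0$ to be the last of the four points involved. The second,
\[
C(n)\le c_1\bigl(k_1^2\,N(n/k_1)+k_1 k_2^2\,N(n/k_2)+k_1 k_2 n^2\beta(n)\bigr),
\]
is the content of Section~\ref{Sec:CrossOnce} and is where the collinearity hypothesis is exploited. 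Substituting the latter into the former yields (\ref{Eq:FinalRecurrence}); since every discrete change of $\DT(P)$ is either a Delaunay co-circularity (counted by $N$) or a convex-hull collinearity (contributing $O(n^2\beta(n))$ by \cite[Section~8.6.1]{SA95}), it suffices to bound $N(n)$.

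To solve (\ref{Eq:FinalRecurrence}), the plan is to fix $\eps>0$ and prove $N(n)\le K n^{2+\eps}$ by induction on $n$ for a constant $K=K(\eps)$ to be chosen at the end. Substituting the inductive hypothesis yields a leading coefficient
\[
c_2\Bigl(k^{-\eps}+k^{1-\eps}k_1^{-\eps}+k^{1-\eps}k_1 k_2^{-\eps}\Bigr)
\]
times $K n^{2+\eps}$, plus the additive overhead $c_2 k k_1 k_2 n^2\beta(n)$. I would force the bracketed coefficient below $1/2$ by choosing $k\ll k_1\ll k_2$ sequentially as functions of $\eps$ (and of the previously chosen constants), so that each of the three summands is at most $1/6$: first $k$ to beat $k^{-\eps}$, then $k_1$ to beat $k^{1-\eps}k_1^{-\eps}$, then $k_2$ to beat $k^{1-\eps}k_1 k_2^{-\eps}$. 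Then $K$ is enlarged to absorb the overhead --- legitimate because $\beta(n)=o(n^\eps)$, and therefore $n^2\beta(n)=o(n^{2+\eps})$ --- and to cover a constant-size base case. This closes the induction and gives $N(n)=O(n^{2+\eps})$.

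The routine flavor of this final extraction is deceptive: essentially all of the substance is loaded into the two recurrences, and I expect the main obstacle to lie in proving the second one. One has to take a Delaunay crossing $(pq,r,I)$ --- a fairly subtle configuration in which the removal of a single point $r$ restores $pq$ to the triangulation throughout $I$ even though $r$ itself crosses $pq$ --- and charge it to structurally simpler objects drawn from near-quadratic universes (shallow collinearities, shallow co-circularities, or sampled sub-instances of $N(\cdot)$). The collinearity hypothesis of the theorem is precisely what is needed to keep the charging parameter $\beta$ of the last paragraph of Section~\ref{Sec:Prelim} under control: without the restriction of two (unordered) or one (ordered) on the collinearities of any triple, the combinatorial structure around a Delaunay crossing appears too rich to be tamed into a recurrence of the form (\ref{Eq:AllCrossings}).
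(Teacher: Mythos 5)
Your proof is correct and follows the same route as the paper: substitute recurrence~(\ref{Eq:AllCrossings}) into~(\ref{Eq:FirstRecurrence}) to get~(\ref{Eq:FinalRecurrence}), then solve by induction with $N(n)\le K n^{2+\eps}$, choosing $k\ll k_1\ll k_2$ sequentially so the leading coefficient drops below $1$ and the $n^2\beta(n)$ term is absorbed using $\beta(n)=o(n^\eps)$; the paper merely cites the standard literature for this last step, whereas you spell it out, and you correctly note that hull events add only $O(n^2\beta(n))$.
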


\section{The Number of Delaunay Crossings}\label{Sec:CrossOnce}

In this section we complete the proof of Theorem \ref{Thm:OverallBound}. 
Throughout this section, we assume that any four points in the underlying set $P$ of $n$ moving points are {\it co-circular at most twice}, and that either {\it (i) no triple of points can be collinear more than twice}, or {\it (ii) no ordered triple of them can be collinear more than once}.\footnote{The last condition (ii) is equivalent to the following one: There can be at most one collinearity of an ordered triple $(p,q,r)$ at which $r$ hits $pq$.}
With these assumptions, we show that the maximum possible number $C(n)$ of Delaunay crossings in any set $P$ as above satifies
the recurrence relation (\ref{Eq:AllCrossings}) asserted in the end of Section \ref{Sec:DelCocircs}.

Let $(pq,r,I=[t_0,t_1])$ be a Delaunay crossing, as defined in the previous section. Specifically, $pq$ disappears from $\DT(P)$ at time $t_0$, rejoins $\DT(P)$ at time $t_1$, and remains Delaunay throughout $I$ in $\DT(P\setminus \{r\})$. To distinguish between the notion of a Delaunay crossing $(pq,r,I)$, which lasts for the full time interval $I$, and the instance where $r$ actually lies on the segment $pq$, we refer to the latter event by saying that $r$ {\it hits} $pq$.

\paragraph{Types of Delaunay crossings.}
Notice that $r$ can hit the edge $pq$ at most twice during the above crossing $(pq,r,I)$, for otherwise the ordered triple $(p,q,r)$ will be collinear at least three times.

A Delaunay crossing $(pq,r,I=[t_0,t_1])$ is called {\it single} if the point $r$ hits $pq$ only once during $I$.
Otherwise (i.e., if $r$ hits $pq$ exactly twice during $I$), we say that $(pq,r,I=[t_0,t_1])$ is a {\it double} Delaunay crossing. 

Note that double Delaunay crossings can only arise if no three points in $P$ can be collinear more than twice. (That is, double crossings are simply {\it impossible} in the second setting, where no ordered triple in $P$ can be collinear more than once.)

In Section \ref{Subsec:Single} (namely, in Theorem \ref{Thm:OrdinaryCrossings}), we show that the maximum possible number $C_1(n)$ of single Delaunay crossings in the above set $P$ satisfies the following recurrence:

\begin{equation}\label{Eq:Single}
C_1(n)=O\left(k_1^2 N(n/k_1)+k_1 k_2^2 N(n/k_2)+k_1k_2n^2\beta(n)\right), 
\end{equation}
where $k_1$ and $k_2$ are any two constants that satisfy $12<k_1\ll k_2$, and the constant of proportionality in $O(\cdot)$ does not depend on $k_1,k_2$. Curiously enough, our analysis of single Delaunay crossings is equally valid given {\it any} of the two alternative assumptions (i), (ii) concerning the collinearities performed by the points of $P$.

In Section \ref{Subsec:Double} (namely, in Theorem \ref{Thm:SpecialCrossings}) we show that any set $P$ of $n$ points, whose pseudo-algebraic motions satisfy the above assumptions, admits at most $O(n^2)$ double Delaunay crossings. Specifically, we argue that any double Delaunay crossing $(pq,r,I)$ can be {\it uniquely} (or almost-uniquely) charged to one of its respective edges $pr,rq$.
In our analysis of double Delaunay crossings we can rely on the assumption that no three points of $P$ can be collinear more than twice, because otherwise such crossings do not arise at all.

The overall Recurrence (\ref{Eq:AllCrossings}) for $C(n)$, asserted in the end of Section \ref{Sec:DelCocircs}, will follow immediately by combining the above two bounds.


\smallskip
Both Sections \ref{Subsec:Single} and \ref{Subsec:Double} use the following simple lemma.

\begin{lemma}\label{Lemma:Crossing}
If $(pq,r,I=[t_0,t_1])$ is a Delaunay crossing then each of the edges $pr,rq$ belongs to $\DT(P)$ throughout $I$.
\end{lemma}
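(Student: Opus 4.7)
The plan is to verify that for every $t \in I$ there exists a disc empty of $P$ and passing through $p$ and $r$, certifying $pr \in \DT(P)$; the argument for $rq$ is fully symmetric. By condition (3) in the definition of a Delaunay crossing, for each $t \in I$ there is a witness disc $D = D(t)$ with $p, q \in \partial D$ and empty of $P \setminus \{r\}$. The whole proof reduces to locating the point $r$ with respect to $D$.

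For $t$ in the open interval $(t_0, t_1)$, we have $pq \notin \DT(P)$, so $D$ fails to be $P$-empty, which forces $r$ to lie in the closed disc $\overline{D}$. I would then run a shrinking argument: consider the one-parameter family $D_\lambda$ obtained from $D$ by a homothety centred at $p$ with ratio $1 - \lambda$, so that $p \in \partial D_\lambda$ for every $\lambda$, while $D_0 = D$ and $D_1 = \{p\}$. A short triangle-inequality check shows $D_\lambda \subseteq D$ for all $\lambda \in [0,1]$. Since $r$ lies inside $D_0$ but outside $D_1$, continuity yields some $\lambda^*$ with $r \in \partial D_{\lambda^*}$. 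The disc $D_{\lambda^*}$ then passes through $p$ and $r$, is contained in $D$ (hence avoids $P \setminus \{r\}$), and has $r$ on its boundary (hence avoids $r$ in its interior), so it witnesses $pr \in \DT(P)$.

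For the endpoints $t \in \{t_0, t_1\}$ the edge $pq$ lies in $\DT(P)$, so the witness $D$ may in fact be chosen $P$-empty; in particular, $r$ lies outside $D$ or on $\partial D$. I would argue that $r$ must lie on $\partial D$. If $r$ were strictly outside $D$ at, say, $t = t_0$, then by continuity of the motion a slight perturbation of $D$ would remain $P$-empty with $p,q$ on its boundary over a small right-neighbourhood of $t_0$, keeping $pq$ Delaunay just after $t_0$ and contradicting the hypothesis that $pq$ leaves $\DT(P)$ at $t_0$. Hence $r \in \partial D$, so $D$ equals the circumdisc $B[p,q,r]$ at time $t_0$, which is $P$-empty; this witnesses both $pr \in \DT(P)$ and $rq \in \DT(P)$. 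The case $t = t_1$ is symmetric.

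The step I expect to demand the most care is the endpoint analysis: making the continuity argument precise requires treating separately the (easy) subcase when the Delaunay co-circularity at $t_0$ (or $t_1$) already involves $r$, in which case $r \in \partial D$ is immediate, from the subcase when it involves two points other than $r$, where the ``strictly outside'' configuration must be explicitly excluded by the continuity/perturbation argument. The interior construction, by contrast, is a routine plane-geometry lemma once $r \in \overline{D}$ is established.
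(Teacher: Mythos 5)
Your interior-time argument (for $t\in(t_0,t_1)$) is correct, and it is a genuinely different and more elementary route than the paper's: the paper works entirely inside the red--blue arrangement $\A_{pq}$, first showing that $f_r^-$ coincides with the blue upper envelope $E^-$ while $r$ is blue (so the cap $B[p,q,r]\cap\L_{pq}^-$ is $P$-empty), and then derives a contradiction in the second arrangement $\A_{rq}$; your shrinking-homothety construction produces an explicit empty witness disc through $p$ and $r$ directly. Two small technical points are worth mentioning: (1) if the witness disc $D$ for $pq\in\DT(P\setminus\{r\})$ happens to be a halfplane, the homothety at $p$ fixes $D$ and the family $D_\lambda$ never collapses to $\{p\}$; you should first replace the halfplane by a sufficiently large proper disc through $p,q$ whose interior still contains only $r$, which exists since $r$ lies at positive distance from $\L_{pq}$; and (2) your verification that the interior of $D_{\lambda^*}$ avoids $P$ implicitly uses that $D_{\lambda^*}$ is \emph{internally tangent} to $D$ at $p$, so $\mathrm{int}(D_{\lambda^*})\subset\mathrm{int}(D)$ and in particular $q\notin\overline{D_{\lambda^*}}$; this is true but should be stated.

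The endpoint argument, however, contains a genuine gap. You claim that if $r$ were strictly outside $D$ at $t_0$, then ``by continuity of the motion a slight perturbation of $D$ would remain $P$-empty \ldots keeping $pq$ Delaunay just after $t_0$.'' This is false. At $t_0$ the unique $P$-empty disc through $p,q$ is $D=B[p,q,a,b]$, where $a,b$ are the two other points of the Delaunay co-circularity that expels $pq$ from $\DT(P)$, and $a,b$ lie \emph{on} $\partial D$ (on opposite sides of $\L_{pq}$, since the co-circularity is red--blue with respect to $pq$). Right after $t_0$ every disc whose boundary passes through $p$ and $q$ contains $a$ or $b$ in its interior; no perturbation of $D$, in space or in time, stays $P$-empty. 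So the contradiction you invoke with ``$pq$ leaves $\DT(P)$ at $t_0$'' simply does not materialize from a perturbation of $D$ alone. The correct way to exclude the case ``$r$ strictly outside $D$'' is to use condition~(3) of the definition of a Delaunay crossing: if $r\notin\{a,b\}$, then right after $t_0$ every disc through $p,q$ still contains $a$ or $b$ even after removing $r$, hence $pq\notin\DT(P\setminus\{r\})$, contradicting condition~(3). Therefore $r\in\{a,b\}\subset\partial D$. (Equivalently, and perhaps more cleanly given what you have already proved: Delaunayhood of $pr$ is a closed condition on $t$ under pseudo-algebraic motion with the paper's general-position assumptions, so having established $pr\in\DT(P)$ for all $t\in(t_0,t_1)$, the conclusion at $t_0$ and $t_1$ follows by taking one-sided limits, and the degenerate ``hull'' endpoints where $r$ hits $pq$ exactly at $t_0$ or $t_1$ are covered in the same way.)
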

\paragraph{Remark:}
Most applications of Lemma \ref{Lemma:Crossing} (especially in Section \ref{Subsec:Single}) 
rely only on the fact that the edges $pr$ and $rq$
are Delaunay at times $t_0$ and $t_1$.
To establish the Delaunayhood of $pr$ and $rq$ at time $t_0$ it is sufficient to observe that, at that moment, there occurs a Delaunay co-circularity involving $p,q,r$ and some other point $s$; moreover, this co-circularity is red-blue with respect to $pq$. Hence, $\DT(P)$ contains the triangle $\triangle pqr$ right before $t_0$, so the edges $pr$, $rq$ are then Delaunay.\footnote{For degenerate crossings (which begin with a collinearity of $p,r,q$), the edge $pq$ is replaced on the convex hull of $P$ by $pr$ and $rq$.}
A symmetric argument shows that $pr$ and $rq$ are Delaunay at time $t_1$. 
The stronger form of the lemma is used mainly in Section \ref{Subsec:Double}.
\begin{proof}
We prove the claim only for the edge $rq$ and for $t\in I$ at which $r$ lies in $\L_{pq}^-$; the complementary portion of $I$, and the corresponding treatment of $pr$, are handled symmetrically. The crucial observation is that, during the chosen portion of $I$, $f_r^-(t)$ coincides with the blue upper envelope $E^-(t)$ (defined with respect to $pq$). Indeed, let $x\in P\cap \L_{pq}^+$ be any red point so that the Delaunayhood of $pq$ is violated at time $t\in I$ by $x$ and $r$. Then the Delaunayhood of $pq$ is also violated there by $x$ and any blue point $y\in P\cap \L_{pq}^-$ whose respective blue function $f_y^-(t)$ coincides with $E^-(t)$, implying that $y=r$.
Therefore, the cap $B[p,q,r]\cap \L_{pq}^-$ has $P$-empty interior throughout the chosen portion of $I$. 

Suppose that $rq$ is not Delaunay at some time $t^*$ that belongs to the chosen portion of $I$.
We now consider the red-blue arrangement of $rq$ at that moment. Let $x\in P\cap \L_{rq}^+$ be the point whose function $f_x^+(t^*)$ coincides with the red lower envelope $E^+(t^*)$ (with respect to $qr$). In particular, we have $f^+_x(t^*)\leq f^+_p(t^*)$ (as is easily checked, $p\in \L_{rq}^+$, when $r\in \L_{pq}^-$). Clearly, $x$ cannot be equal to $p$, for, otherwise, the disc $B[p,q,r]$ would have $P$-empty interior. Indeed, we argued that $B[p,q,r]\cap \L_{pq}^-$ is $P$-empty, and a similar argument shows that $B[p,q,r]\cap \L_{rq}^+$ would also have to be empty if $x$ and $p$ coincide, from which the emptiness of the whole interior follows. It follows that $pq$ is Delaunay at time $t^*\in I$,
contradicting the definition of a Delaunay crossing. See Figure \ref{Fig:CrossingLemma}.
Moreover, $x$ cannot lie in $\L_{pq}^-$, for it would then have to lie in $B[p,q,r]\cap \L_{pq}^-$, which is impossible since this portion of $B[p,q,r]$ is $P$-empty.

\begin{figure}[htbp]
\begin{center}
\input{StayDelaunay.pstex_t}
\caption{\small Proof of Lemma \ref{Lemma:Crossing}.}
\label{Fig:CrossingLemma}
\end{center}
\end{figure}

Since $rq$ is not Delaunay, the disc $B=B[q,r,x]$ contains another point $y\in P\cap\L_{rq}^-$, which is easily seen to lie in $\L_{pq}^-$ and in $\L_{xq}^-$.
We can expand $B$ from $qx$ until its boundary touches $p$, $q$ and $x$, and its interior contains $y$. This implies that $pq$ does not belong to $\DT(P\setminus\{r\})$ at time $t\in I$, which contradicts the definition of a Delaunay crossing. 
\end{proof}


\subsection{The number of single Delaunay crossings}\label{Subsec:Single}
In this subsection we establish Recurrence (\ref{Eq:Single}) for the maximum possible number $C_1(n)$ of single Delaunay crossings in a set $P$ of $n$ points whose pseudo-algebraic motions satisfy the above assumptions.
To facilitate the proof of this main result, which is asserted in the culminating Theorem \ref{Thm:OrdinaryCrossings}, we begin by introducing some additional notation, and by establishing several auxiliary lemmas.

\paragraph{Notational conventions.}
Recall from Section \ref{Sec:Prelim} that every edge $pq$ is oriented from $p$ to $q$, and its corresponding line $\L_{pq}$ splits the plane into halfplanes $\L_{pq}^-$ and $\L_{pq}^+$.

Without loss of generality, we assume in what follows that, for any single Delaunay crossing $(pq,r,I=[t_0,t_1])$, the point $r$ crosses $pq$ from $\L_{pq}^-$ to $\L_{pq}^+$ during $I$. Recall that $r$ cannot cross $\L_{pq}$ outside $pq$ during $I$, so this is the {\it only} collinearity of $p,q,r$ in $I$.
If $r$ crosses $pq$ in the opposite direction, we regard this crossing as $(qp,r,I=[t_0,t_1])$.

Note that every such Delaunay crossing $(pq,r,I)$ is uniquely determined by the respective ordered triple $(p,q,r)$, because there can be at most one collinearity\footnote{If $r$ hits $pq$ twice, which is allowed only if no three points of $P$ can be collinear more than twice, then the other crossing of $pq$ by $r$ is from $\L_{pq}^+$ back to $\L_{pq}^-$.} where $r$ crosses the line $\L_{pq}$ {\it within $pq$} from $\L_{pq}^-$ to $\L_{pq}^+$.

For a convenience of reference, we label each such crossing $(pq,r,I)$ as {\it a clockwise $(p,r)$-crossing}, and as {\it a counterclockwise $(q,r)$-crossing}, with an obvious meaning of these labels.


The following lemma lies at the heart of our analysis.

\begin{lemma}\label{Lemma:OnceCollin}
Let $(pq,r,I=[t_0,t_1])$ be a Delaunay crossing. Then, with the above conventions, for any $s\in P\setminus\{p,q,r\}$ the points $p,q,r,s$ define a red-blue co-circularity with respect to $pq$, which takes place during $I$ when the point $s$ either enters the cap $B[p,q,r]\cap \L_{pq}^+$, or leaves the opposite cap $B[p,q,r]\cap \L_{pq}^-$.
\end{lemma}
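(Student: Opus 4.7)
The plan is to reduce to the case $r\in\L_{pq}^-$ and $s\in\L_{pq}^+$ at the co-circularity time $t^*$, so that $r$ is blue and $s$ is red with respect to $pq$; the complementary case is obtained by reversing the time axis. Let $t_r\in(t_0,t_1)$ be the (unique, under the single-crossing conventions) moment at which $r$ crosses $pq$ from $\L_{pq}^-$ to $\L_{pq}^+$, so that $t^*$ lies in the sub-interval $[t_0,t_r)$. The strategy is to track the Boolean state ``$s\in\intr B[p,q,r]$'' as $t$ sweeps $[t_0,t_r)$ and show that it flips from outside at $t_0$ to inside at $t_r^-$, flipping \emph{exactly once}; the unique flip will then coincide with $t^*$ and realize $s$ entering the cap $B[p,q,r]\cap\L_{pq}^+$.

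First I would establish the two boundary conditions. The proof of Lemma \ref{Lemma:Crossing} shows that $f_r^-$ coincides with the upper blue envelope $E^-$ throughout $[t_0,t_r)$; in particular $f_r^-(t_0)=E^-(t_0)=E^+(t_0)$. If $s$ is red at $t_0$ then general position gives $f_s^+(t_0)>f_r^-(t_0)$ (with equality precisely when $s$ is the red partner of $r$ in the Delaunay co-circularity at $t_0$, in which case the lemma is immediate), placing $s$ strictly outside $B[p,q,r]$; if $s$ is blue at $t_0$, the $P$-emptiness of the cap $B[p,q,r]\cap\L_{pq}^-$ (also from the proof of Lemma \ref{Lemma:Crossing}) likewise places $s$ outside $B[p,q,r]$. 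At the other end, as $r$ approaches a point of the open segment $pq$ from $\L_{pq}^-$, the circumcircle of $p,q,r$ blows up and $B[p,q,r]$ sweeps the open halfplane $\L_{pq}^+$. Since $s$ is red at $t^*$, and since only $r$ may cross $pq$ during $I$, the point $s$ stays in $\L_{pq}^+$ throughout a left-neighborhood of $t_r$ and therefore sits in $\intr B[p,q,r]$ there.

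To finish, I would count transitions. The state ``$s\in\intr B[p,q,r]$'' can flip only at co-circularities of the quadruple $\{p,q,r,s\}$: at any instant when $s$ crosses $\L_{pq}\setminus pq$ the point $s$ sits on $\L_{pq}\setminus pq$, which is disjoint from $B[p,q,r]$, so $s$ remains outside and the state does not change. Hence going from outside at $t_0$ to inside at $t_r^-$ requires an odd number of co-circularities of $\{p,q,r,s\}$ in $[t_0,t_r)$; by the standing hypothesis that any four points of $P$ are co-circular at most twice, this number equals one. This unique event is necessarily $t^*$, and at it $s$ lies on $\partial B[p,q,r]\cap\L_{pq}^+$ and moves from outside to inside, i.e.\ $s$ enters the cap $B[p,q,r]\cap\L_{pq}^+$. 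The time-reversed argument on $(t_r,t_1]$ (with $r$ red and $s$ blue) supplies the ``leaves $B[p,q,r]\cap\L_{pq}^-$'' mode. The main delicate points to pin down are the limiting geometry of $B[p,q,r]$ at $t_r^-$ and the bookkeeping around a possible mid-interval color change of $s$: one must verify that each such crossing of $\L_{pq}\setminus pq$ by $s$ leaves $s$ outside $B[p,q,r]$ and therefore does not flip the state or disrupt the starting condition.
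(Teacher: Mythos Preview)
Your approach is essentially the paper's: track the containment of $s$ in $B[p,q,r]$, use that the disc is $P$-empty at $t_0$ (resp.\ $t_1$) and expands to the whole of $\L_{pq}^+$ (resp.\ $\L_{pq}^-$) as $t\to t_r$, and conclude that $s$ must cross $\partial B[p,q,r]$ at some point of $I$. The paper's write-up is much shorter because it only claims \emph{existence}; your uniqueness argument via the ``at most twice co-circular'' hypothesis is correct but unnecessary for the lemma as stated.

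There is, however, one genuine expository slip. You split into cases according to the colour of $s$ \emph{at the co-circularity time $t^*$}, and then use ``$s$ is red at $t^*$'' to deduce that $s\in\L_{pq}^+$ in a left neighbourhood of $t_r$. This is circular: $t^*$ is what you are trying to produce. The correct split --- and the one the paper uses --- is on the colour of $s$ \emph{at the crossing time $t_r$}: if $s\in\L_{pq}^+$ at $t_r$ you run your argument on $[t_0,t_r)$, and if $s\in\L_{pq}^-$ at $t_r$ you run the time-reversed argument on $(t_r,t_1]$. With that fix your boundary conditions become honest hypotheses of the case and the rest goes through.

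Finally, for the step ``the transition lies on $\partial B[p,q,r]\cap\L_{pq}^+$'' you do not need to chase colour changes of $s$ across $\L_{pq}\setminus pq$. The paper records (just before this lemma) that during a Delaunay crossing of $pq$ by $r$, every co-circularity in $\A_{pq}$ that involves $r$ is red-blue with respect to $pq$; this is the equivalent form of condition~(3) in the definition. That single observation immediately forces $s$ to lie on the correct arc at the transition, and makes your bookkeeping about $\L_{pq}\setminus pq$ (while correct) superfluous.
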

\begin{proof}
By definition, $r$ crosses $pq$ at some (unique) time $t_0\leq t^*\leq t_1$, say from $\L_{pq}^-$ to $\L_{pq}^+$. The disc $B[p,q,r]$ is $P$-empty at $t_0$ and at $t_1$ and moves continuously throughout $[t_0,t^*)$ and $(t^*,t_1]$.
Just before $t^*$, $B[p,q,r]$ is the entire $\L_{pq}^+$, so every point $s\in P\cap \L_{pq}^+$ at time $t^*$ must have entered $B[p,q,r]$ during $[t_0,t^*)$, forming a co-circularity with $p,q,r$ at the time it enters the disc.\footnote{If $t^*=t_0$ then
there are no red points when $r$ hits $pq$, so we consider only the second interval. The case of $t^*=t_2$ is treated symmetrically.} See Figure \ref{Fig:BeforeCrossing} (left). (As mentioned in Section \ref{Sec:Prelim}, this co-circularity of $p,q,r,s$ is red-blue with respect to $pq$, that is, the point $s$ enters $B[p,q,r]$ through $\partial B[p,q,r]\cap \L_{pq}^+$.) A symmetric argument (in which we reverse the direction of the time axis) shows that the same holds for all the points $s\in P$ that lie in $\L_{pq}^-$ at time $t^*$; see Figure \ref{Fig:BeforeCrossing} (right). 
\end{proof}


\begin{figure}[htbp]
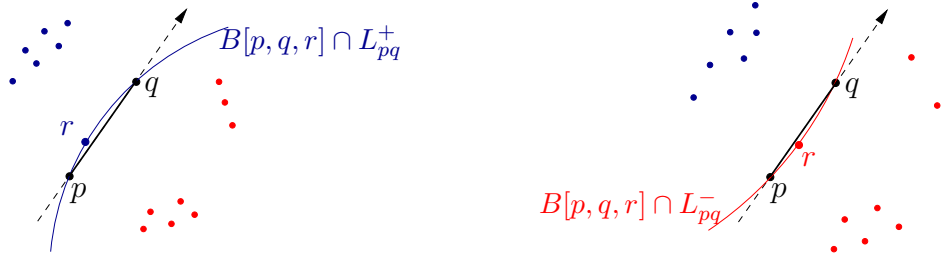

\begin{center}
\input{BeforeCrossing.pstex_t}\hspace{4cm}\input{AfterCrossing.pstex_t}
\caption{\small Left: Right before $r$ crosses $pq$, the circumdisc $B=B[p,q,r]$ contains all points in $P\cap \L_{pq}^+$. Right: After $r$ crosses $pq$, $B$ contains all points in $P\cap \L_{pq}^-$.}
\label{Fig:BeforeCrossing}
\end{center}
\end{figure} 


\begin{lemma}\label{Lemma:TwiceCollin}
The number of triples of points $p,q,r\in P$ for which there exist two time intervals $I_1,I_2$ such that both $(pq,r,I_1)$ and $(rq,p,I_2)$ are single Delaunay crossings, is at most $O(n^2)$. Furthermore, the lemma still holds if we reverse 
$pq$, or $rq$, or both. 
\end{lemma}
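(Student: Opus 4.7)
The strategy is to assign each such triple to one of the $O(n^2)$ ordered pairs of points and to argue that each pair is charged only $O(1)$ times. The same argument will then handle all four ``reversal variants'' asserted in the lemma.

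\textit{Step 1: Disjointness of the intervals.} Applying Lemma~\ref{Lemma:Crossing} to the second crossing $(rq,p,I_2)$ shows that the edge $pq$ belongs to $\DT(P)$ throughout $I_2$, while by the definition of the first crossing $(pq,r,I_1)$ the edge $pq$ is absent from $\DT(P)$ during $I_1$. Hence $I_1\cap I_2=\emptyset$, and, relabeling the two crossings if necessary, I may assume that $I_1$ precedes $I_2$ and write $I_1=[t_0,t_1]$, $I_2=[t_0',t_1']$ with $t_1\le t_0'$.

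\textit{Step 2: Endpoint structure.} At time $t_0$ the edge $pq$ leaves $\DT(P)$ because its Delaunayhood is violated by some pair $a,b\in P$. Since by the definition of $(pq,r,I_1)$ the Delaunayhood of $pq$ is restored throughout $I_1$ in $\DT(P\setminus\{r\})$, one of $a,b$ must equal $r$. The same reasoning applies to each of the four boundary times $t_0,t_1,t_0',t_1'$, so each of these Delaunay co-circularities involves all three of $p,q,r$ together with a single fourth point. In particular $\triangle pqr$ is a Delaunay triangle of $\DT(P)$ immediately before $t_0$ and $t_0'$ and immediately after $t_1$ and $t_1'$; and by Lemma~\ref{Lemma:Crossing} the edge $pr$ is Delaunay throughout $I_1\cup I_2$.

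\textit{Step 3: Charging.} I charge each triple $(p,q,r)$ to the unordered pair $\{p,r\}$ and show that at most $O(1)$ values of $q$ can arise for each such pair. The rigidity needed for this bound comes from Lemma~\ref{Lemma:OnceCollin}: applied to $(pq,r,I_1)$ it forces every point $s\in P\setminus\{p,q,r\}$ to contribute at least one co-circularity with $\{p,q,r\}$ during $I_1$, and applied to $(rq,p,I_2)$ it yields the analogous statement during $I_2$. Since any four points are co-circular at most twice, both co-circularities of every quadruple $\{p,q,r,s\}$ are thereby forced inside $I_1\cup I_2$. For any two distinct candidate middle vertices $q_1,q_2$ sharing the pair $\{p,r\}$, this forces the two co-circularities of the common quadruple $\{p,q_1,q_2,r\}$ to lie jointly in $(I_1(q_1)\cup I_2(q_1))\cap (I_1(q_2)\cup I_2(q_2))$, which in turn strongly interleaves the eight endpoint times of the two pairs of intervals.

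The main obstacle is completing Step 3, namely turning the interval-interleaving rigidity into a concrete $O(1)$-per-pair bound. I expect this to follow from a short case analysis that distinguishes how the intervals $I_1(q)$ and $I_2(q)$ for the various candidate $q$'s can overlap, combined with the orientation constraints coming from the crossings (the middle point of the collinearity inside $I_1(q)$ is $r$, whereas the middle point inside $I_2(q)$ is $p$), and that leverages the at-most-two-co-circularities-per-quadruple hypothesis to eliminate all but a bounded number of configurations. The assertions about reversing $pq$ or $rq$ then follow by a symmetric argument, since each such reversal merely swaps the direction of one crossing without affecting the structural facts used above; summing $O(1)$ over the $O(n^2)$ pairs $\{p,r\}$ yields the desired $O(n^2)$ bound.
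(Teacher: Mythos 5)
Your proof is incomplete at exactly the step that carries the real content. Steps~1 and~2 are fine (the disjointness observation and the fact that each of the four delimiting events is a Delaunay co-circularity of $p,q,r$ with one additional point both follow from Lemma~\ref{Lemma:Crossing} and the definition of a Delaunay crossing), but Step~3 ends with an admission that you do not know how to turn the ``interval-interleaving rigidity'' into an $O(1)$ bound per pair $\{p,r\}$. That is not a routine case analysis left to the reader; it is the entire lemma, and it is far from clear that the charging target you chose ($\{p,r\}$) admits such a bound at all. The observation that both co-circularities of every quadruple $\{p,q,r,s\}$ must fall inside $I_1(q)\cup I_2(q)$ is correct, but that constraint by itself does not obviously prevent arbitrarily many values of $q$ from sharing the pair $\{p,r\}$: the intervals $I_1(q)\cup I_2(q)$ for different $q$ could in principle be nested or chained.

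The paper uses a different and decisive charging target: the edge $rq$ (with its orientation). Fix any $s\in P\setminus\{p,q,r\}$. By Lemma~\ref{Lemma:OnceCollin}, $\{p,q,r,s\}$ are co-circular at least once during $I_1$ (a co-circularity that is red-blue with respect to $pq$, so $p$ and $q$ are \emph{non-adjacent} on the circle) and at least once during $I_2$ (red-blue with respect to $rq$, so $r$ and $q$ are non-adjacent). These two adjacency patterns are incompatible, so the two co-circularities are distinct, and the at-most-twice hypothesis says they are \emph{all} the co-circularities of $\{p,q,r,s\}$. In particular $\{p,q,r,s\}$ cannot be co-circular after $I_2$ ends, so Lemma~\ref{Lemma:OnceCollin} forbids any Delaunay crossing of $rq$ (or $qr$) by $s$ after $rq$ rejoins $\DT(P)$. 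Since this holds for every $s$, the crossing $(rq,p,I_2)$ is the \emph{last} Delaunay crossing of that edge, and the triple $(p,q,r)$ is therefore determined by $rq$: this gives the $O(n^2)$ bound directly. Your write-up never exploits the crucial incompatibility of the two red-blue labelings, nor the resulting ``last crossing of $rq$'' characterization, which is where the uniqueness actually comes from.
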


In other words, the lemma asserts that $P$ contains at most quadratically many triples $p,q,r$ that perform two single Delaunay crossings of distinct order types.

\begin{proof}
Assume, with no loss of generality, that the Delaunay crossing of $rq$ (or $qr$) by $p$ ends after the crossing of $pq$ (or $qp$) by $r$; that is $I_2$ ends after $I_1$ (note that $I_1$ and $I_2$ need not be disjoint).
Let $s$ be a point of $P\setminus\{p,q,r\}$.
By Lemma \ref{Lemma:OnceCollin}, the four points $p,q,r,s$ define a co-circularity event during $I_1$.
Similarly, the same four points $p,q,r,s$ define a co-circularity event during $I_2$. We claim that the above two co-circularities are distinct. 
Indeed, the former co-circularity of $p,q,r,s$ is red-blue with respect to the edge $pq$ (which is crossed by $r$ during $I_1$), so $p$ and $q$ are not adjacent in the co-circularity. On the other hand, the latter co-circularity is red-blue with respect to $rq$ (which is crossed by $p$ during $I_2$), so $q$ and $r$ are not adjacent in the co-circularity. However, both non-adjacencies cannot occur simultaneously in the same co-circularity, so these two co-circularities of $p,q,r,s$ must be distinct.

Hence, the points $p,q,r,s$
induce at least (by our assumption, exactly) two common co-circularity events before $rq$ re-enters $\DT(P)$. 

Thus, we cannot have a Delaunay crossing of any of $rq,qr$ by $s$ after $rq$ re-enters $\DT(P)$, for otherwise this would lead, according to Lemma \ref{Lemma:OnceCollin} to a third co-circularity event involving $p,q,r$ and $s$. 
Since this holds for every point $s\in P\setminus \{p,q,r\}$, the crossing of $rq$ (or $qr$) by $p$ is the last Delaunay crossing of $rq$ (or $qr$), so it can be charged uniquely to this edge. (Clearly, any two Delaunay crossings of the same edge $rq$ take place at disjoint time intervals.)
\end{proof}


\smallskip

Our overall strategy is to show that, for an average choice of $p,r\in P$, there exist only few $(p,r)$-crossings of a given orientation type (which can be either clockwise or counterclockwise).
In other words, we are to show that most single Delaunay crossings $(pq,r,I)$ can be almost-uniquely charged to either one of its edges $pr$ and $rq$. (As a matter of fact, it is sufficient that we can charge $(pq,r,I)$ to only {\it one} of its edges $pr,rq$. As explained in Section \ref{Subsec:Double}, this simple charging succeeds for all {\it double} Delaunay crossings.)

Unfortunately, there can be arbitrary many single $(p,r)$-crossings, of both orientation types. In such cases, we resort to more intricate charging arguments (see the proof of Theorem \ref{Thm:OrdinaryCrossings}). Note that the respective intervals $I$ and $J$ of any pair of such crossings, say $(pq,r,I)$ and $(pa,r,J)$, may overlap.
The following lemma defines a natural order on $(p,r)$-crossings of a given orientation (clockwise or counterclockwise).

\begin{lemma}\label{Lemma: OrderOrdinaryCrossings}
Let $(pq,r,I)$ and $(pa,r,J)$ be clockwise $(p,r)$-crossings, and suppose that $r$ hits $pq$ (during $I$) before it hits $pa$ (during $J$). Then $I$ begins (resp., ends) before the beginning (resp., end) of $J$. Clearly, the converse statements hold too. Similar statements also hold for pairs of counterclockwise $(p,r)$-crossings.
\end{lemma}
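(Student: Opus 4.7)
The plan is to argue by contradiction, combining Lemmas \ref{Lemma:Crossing} and \ref{Lemma:OnceCollin} with the global bound of at most two co-circularities per quadruple. I focus on the clockwise case and on proving $t_0^I<t_0^J$; the statement about right endpoints follows by a symmetric argument applied to the termination events of the two crossings.

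The first move is to apply Lemma \ref{Lemma:OnceCollin} to $(pq,r,I)$ with $s=a$, and to $(pa,r,J)$ with $s=q$, producing co-circularities $C_I$ occurring during $I$ and $C_J$ occurring during $J$, both involving the same quadruple $\{p,q,r,a\}$. The first is red-blue with respect to $pq$, the second with respect to $pa$. A single co-circularity of four points cannot be red-blue with respect to two distinct edges sharing the vertex $p$: this would force $p$ to be simultaneously non-adjacent (in the cyclic order on the circumcircle) to both $q$ and $a$, which is impossible for four points in cyclic position. So $C_I\neq C_J$, and together they account for the full budget of two co-circularities permitted for the quadruple $\{p,q,r,a\}$.

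Now suppose for contradiction that $t_0^J<t_0^I$. Because $t_0^I\le t^*_1<t^*_2\le t_1^J$, the moment $t_0^I$ lies strictly inside $J$. Combining Lemma \ref{Lemma:Crossing} applied to both crossings with the definitions, at time $t_0^I$ we have $pq\in\DT(P)$ (leaving), $pr, rq, ra\in\DT(P)$, and $pa\notin\DT(P)$. Since $pq, pr, rq$ are simultaneously Delaunay just before $t_0^I$, the triangle $\triangle pqr$ is a Delaunay triangle there, so the Delaunay co-circularity at $t_0^I$ must involve $r$ as one of its four points; denote the fourth point by $v$, with cyclic order $p,r,q,v$ on the circumcircle, so that the new edge entering $\DT(P)$ just after $t_0^I$ is $rv$. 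I would then show that $v=a$: otherwise, reconciling the Delaunay star of $r$ (which contains $ra$ already at $t_0^I$ and must accommodate the newly-born $rv$) would require an additional co-circularity of $\{p,q,r,a\}$ somewhere in $J$, violating the two-co-circularity budget. But if $v=a$, then the co-circularity at $t_0^I$ is exactly $C_I$, and $C_I$ occurs at a moment when $pa\notin\DT(P)$ even though $C_I$ is red-blue with respect to $pq$.

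The final step is to turn this into a contradiction by tracking the continuous motion of the circumcircle of $\{p,q,r,a\}$ between $C_I$ (at $t_0^I$) and $C_J$ (inside $J$), together with the two intervening collinearities at $t^*_1$ and $t^*_2$ at which the circle degenerates through $\L_{pq}$ or $\L_{pa}$ respectively. The clockwise orientation at each collinearity pins down the sense in which $r$ sweeps across the corresponding chord, and hence the direction in which the cyclic order of $\{p,q,r,a\}$ flips. I would show that the combinatorial flip forced by passing from $C_I$ to $C_J$ is incompatible with $t^*_1<t^*_2$ under the hypothesis $t_0^J<t_0^I$, yielding the contradiction. The main obstacle is precisely this last angular/continuity argument: translating the single combinatorial flip of the co-circular cyclic order between $C_I$ and $C_J$ into a rigid temporal constraint that interlocks with the clockwise condition; the preparatory reductions via Lemmas \ref{Lemma:Crossing} and \ref{Lemma:OnceCollin} are comparatively routine. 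A completely symmetric argument applied at the right endpoints of $I$ and $J$ yields $t_1^I<t_1^J$, and the same scheme handles counterclockwise $(q,r)$-crossings after reversing the roles of $p$ and $q$.
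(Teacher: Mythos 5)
Your proposal has two genuine gaps, one of which you flag yourself and one of which you do not.

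The unflagged gap is the assertion that the fourth point $v$ in the Delaunay co-circularity at $t_0^I$ must equal $a$. The argument offered---that otherwise ``reconciling the Delaunay star of $r$'' would require an extra co-circularity of $\{p,q,r,a\}$---does not hold up. Having $ra$ and the freshly born $rv$ simultaneously in the Delaunay star of $r$ imposes no co-circularity whatsoever on the quadruple $\{p,q,r,a\}$; the two edges can coexist around $r$ without $p,q,r,a$ ever lying on a common circle. In fact $v$ is simply whichever point's circumdisc with $p,q,r$ becomes empty at $t_0^I$, and the co-circularity $C_I$ supplied by Lemma~\ref{Lemma:OnceCollin} occurs somewhere in $I$ but has no reason to sit at the endpoint $t_0^I$. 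So the reduction to ``$C_I$ occurs at $t_0^I$'' does not go through, and the rest of the contradiction never gets started. The second gap is the one you acknowledge: the concluding ``angular/continuity argument'' is the step that actually carries the contradiction, and it is left entirely unfilled.

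The paper's proof is shorter and avoids all of this. It proves the right-endpoint inequality directly (left endpoints follow by reversing time): once $r$ hits $pq$ during $I$ it stays in $\L_{pq}^+$ for the rest of $I$, so if $r$ is still in $\L_{pq}^+$ when it later hits $pa$, the ray $\vec{pa}$ lies clockwise of $\vec{pq}$ at that moment. Applying Lemma~\ref{Lemma:OnceCollin} to $(pa,r,J)$ with $s=q$ yields a co-circularity of $p,q,r,a$ during $J$ at which $q$ leaves the cap $B[p,a,r]\cap\L_{pa}^-$; this event is red-blue with respect to $pa$ but \emph{red-red with respect to $pq$}. Since only red-blue co-circularities (with respect to $pq$) can occur during the crossing $(pq,r,I)$, that event cannot lie in $I$, hence $I$ has already ended, i.e.\ $t_1^I<t_1^J$. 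Notably, this argument never invokes the two-co-circularity bound on quadruples, which your approach leans on; the lemma is a purely structural consequence of the definition of a Delaunay crossing.
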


\begin{proof}
In the configuration considered in the main statement of the lemma, $r$ crosses $pq$ from $\L_{pq}^-$ to $\L_{pq}^+$, and it crosses $pa$ from $\L_{pa}^-$ to $\L_{pa}^+$.
We only prove the part of the lemma concerning the ending times of the crossings, because the proof about the starting times is fully symmetric (by reversing the direction of the time axis). The statement clearly holds if $I$ and $J$ are disjoint; the interesting situation is when they partially overlap.
Note that $r$ enters $\L_{pq}^+$ only once during the Delaunay crossing of $pq$ by $r$, namely, right after $r$ hits $pq$. Indeed, by assumption, $r$ cannot exit $\L_{pq}^+$ by crossing $pq$ again during $I$, and it cannot cross $\L_{pq}\setminus pq$ because at that time $pq$, which is Delaunay in $\DT(P\setminus\{r\})$, would be Delaunay also in the presence of $r$, contrary to the definition of a Delaunay crossing.
Hence, we may assume that $r$ still lies in $\L_{pq}^+$ when it hits $pa$ during the Delaunay crossing of that edge. Indeed, otherwise the crossing of $pq$ would by then be over, so the claim would hold trivially, as noted above. In particular, $\vec{pa}$ lies clockwise to $\vec{pq}$ at that time.

\begin{figure}[htbp]
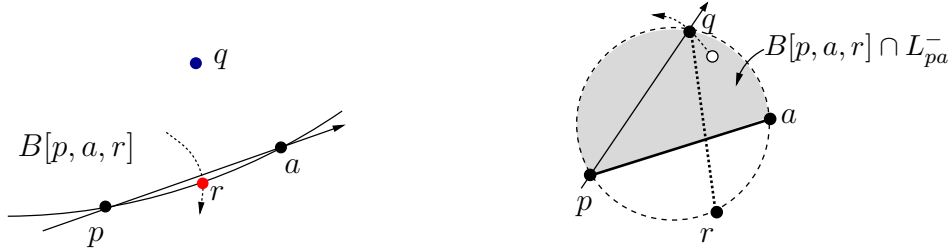

\begin{center}
\input{orderord1.pstex_t}\hspace{3cm}\input{OrderCrossings.pstex_t}
\caption{\small Proof of Lemma \ref{Lemma: OrderOrdinaryCrossings}. Left: if $r$ remains in $\L_{pq}^+$ after $I$ and before it crosses $pa$, then $q$ lies in $B[p,a,r]\cap\L_{pa}^-$ before that last collinearity. Right: The second co-circularity of $p,q,r,a$ which occurs when $q$ leaves $B[p,a,r]\cap \L_{pa}^-$. This is a red-red co-circularity with respect to $pq$, so the crossing of $pq$ is already over.}
\label{Fig:OrderOrdinaryCrossings}
\end{center}
\end{figure} 

It suffices to prove that the co-circularity of $p,q,r,a$, which (by Lemma \ref{Lemma:OnceCollin}) occurs during the Delaunay crossing of $pa$ by $r$, takes place when the crossing of $pq$ by $r$ is already finished (and, in particular, after the co-circularity of $p,q,r,a$ that occurs during the crossing of $pq$). 

Before the Delaunayhood of $pa$ is restored, we have a co-circularity $p,q,r,a$ in which $q$ leaves $B[p,a,r]\cap \L_{pa}^-$. (This is argued in the proof of Lemma \ref{Lemma:OnceCollin}: Right after the crossing, the point $q$ lies in $B[p,a,r]\cap \L_{pa}^-$, as in Figure \ref{Fig:OrderOrdinaryCrossings} (left), and has to leave that disc before it becomes empty; it cannot cross $pa$ during $J$, when this edge undergoes the Delaunay crossing by $r$). Notice that this is a red-blue co-circularity with respect to $pa$, and a red-red co-circularity with respect to $pq$; see Figure \ref{Fig:OrderOrdinaryCrossings} (right). Since no red-red or blue-blue co-circularities occur during a Delaunay crossing of an edge, the crossing of $pq$ is already over.
\end{proof}

\ignore{
\begin{proof}
Recall that, by our convention, $r$ crosses $pq$ from $\L_{pq}^-$ to $\L_{pq}^+$, and it crosses $pa$ from $\L_{pa}^-$ to $\L_{pa}^+$.
We only prove the part of the lemma concerning the ending times of the clockwise $(p,r)$-crossings, because the proof about the starting times is fully symmetric (by reversing the direction of the time axis). 
By Lemma \ref{Lemma:OnceCollin}, the four points $p,q,a,r$ are involved in at least one red-blue co-circularity with respect to $pa$, which occurs at some time $\zeta\in J$ when $q$ either enters the cap $B[p,a,r]\cap \L_{pa}^+$ or leaves the opposite cap $B[p,a,r]\cap \L_{pa}^-$.
It suffices to prove that the crossing of $pq$ by $r$ is already finished by the above time $\zeta$.

\begin{figure}[htbp]
\begin{center}
\input{OrderCrossingsReturn.pstex_t}\hspace{3cm}\input{OrderCrossings.pstex_t}
\caption{\small Proof of Lemma \ref{Lemma: OrderOrdinaryCrossings}. Left: The co-circularity of $p,q,r,a$ at time $\zeta\in J$ occurs
when $q$ enters the cap $B[p,a,r]\cap \L_{pa}^+$. Then $r$ must have re-entered $\L_{pq}^-$ after $I$ and before $\zeta$.
Right: The co-circularity of $p,q,r,a$ at time $\zeta\in J$ occurs when $q$ leaves the cap $B[p,a,r]\cap \L_{pa}^-$. This is a red-red co-circularity with respect to $pq$, and a red-blue co-circularity with respect to $rq$, so the crossing of $pq$ is already over.}
\label{Fig:OrderCrossings}
\end{center}
\vspace{-0.3cm}
\end{figure}

Assume first that $\zeta$ occurs when $q$ enters the cap $B[p,a,r]\cap \L_{pa}^+$, so
$r$ lies at that moment in $\L_{pq}^-\cap \L_{pa}^-$ (i.e., this is a blue-blue co-circularity with respect to $pq$).
See Figure \ref{Fig:OrderCrossings} (left). Also note that Since $r$ can hit $\L_{pq}$ only once during $I$, $r$ must have re-entered $\L_{pq}^-$ after $I$ and before $\zeta$, and we are done.

Assume, then, that $\zeta$ occurs when $q$ leaves the cap $B[p,a,r]\cap \L_{pa}^-$.
Notice that this is a red-red co-circularity with respect to the edge $pq$, and a red-blue co-circularity with respect to the edge $rq$ whose Delaunayhood is violated right afterwards by $p$ and $a$; see Figure \ref{Fig:OrderCrossings} (right). Since no red-red or blue-blue co-circularities occur during a Delaunay crossing of an edge (or, alternatively, since $rq$ is Delaunay throughout $I$, by Lemma \ref{Lemma:Crossing}), the crossing $(pq,r,I)$ of $pq$ is already over by time $\zeta$.
\end{proof}
}

Lemma \ref{Lemma: OrderOrdinaryCrossings} implies that, for any pair of points $p,r$ in $P$, all the clockwise $(p,r)$-crossings can be linearly ordered by the starting times of their intervals, or by the ending times of their intervals, or by the times when $r$ hits the corresponding edges that emanate from $p$, and all three orders are indentical. Clearly, a symmetric order exists for counterlockwise $(p,r)$-crossings too.


The following theorem provides the long-awaited recursive bound on the maximum number of Delaunay crossings.

\begin{theorem}\label{Thm:OrdinaryCrossings}
Let $12<k_1<k_2$ be a pair of constants. Then the maximum possible number $C_1(n)$ of single Delaunay crossings in any set $P$ of $n$ points, whose pseudo-algebraic motions in $\reals^2$ respects the above assumptions, satisfies the following recurrence:
\begin{equation}\label{Eq:OrdinaryCrossing}
C_1(n)=O\left(k_1^2 N(n/k_1)+k_1 k_2^2 N(n/k_2)+k_1k_2n^2\beta(n)\right),
\end{equation}
where the constant of proportionality in $O(\cdot)$ is independent of $k_1,k_2$.
\end{theorem}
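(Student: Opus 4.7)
Fix constants $12 < k_1 \ll k_2$. Given a single Delaunay crossing $\chi = (pq, r, I)$ with $I = [t_0, t_1]$, I will charge $\chi$ to a nearby ``shallow'' combinatorial event. The key structural observation is that the edge $pq$ belongs to $\DT(P)$ at both $t_0$ and $t_1$ but nowhere in between, so Theorem \ref{Thm:RedBlue} applies to $pq$ over $I$. Moreover, single Delaunay crossings $(pq, r_1, I_1), (pq, r_2, I_2)$ sharing the same crossed edge $pq$ necessarily have pairwise disjoint intervals, since each is a maximal interval with $pq\notin\DT(P)$.

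I apply Theorem \ref{Thm:RedBlue} to $pq$ over $I$, first at level $k_1$. If case (i) or (ii) of that theorem holds, I charge $\chi$ to the resulting $k_1$-shallow collinearity, respectively to each of the $\Omega(k_1^2)$ $k_1$-shallow co-circularities involving $pq$ in $I$. By the interval-disjointness above, each such event (which is a priori associated to the pair $p,q$ and a time $\tau\in I$) is charged by at most one crossing. Combined with the Clarkson--Shor bounds of Section \ref{Sec:Prelim}, these charges contribute $O(k_1^2 N(n/k_1))$ and $O(k_1 n^2\beta(n))$ respectively. This handles the ``easy'' subclass of crossings.

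For the remaining crossings---those in which only the trivial case (iii) of Theorem \ref{Thm:RedBlue} at level $k_1$ yields useful information, so the red-blue arrangement of $pq$ is essentially ``quiet'' during $I$---I repeat the analysis at the finer level $k_2$. The corresponding cases (i) and (ii) at level $k_2$ produce $k_2$-shallow events that can be charged; however, to obtain the correct multiplicity I now route the charging through a preliminary Clarkson--Shor sample $R\subset P$ of rate $\Theta(1/k_1)$. Here the structural lemmas come into play: Lemma \ref{Lemma:Crossing} guarantees that both $pr$ and $rq$ are in $\DT(P)$ throughout $I$; Lemma \ref{Lemma:OnceCollin} produces, for every $s\in P\setminus\{p,q,r\}$, a co-circularity of $p,q,r,s$ during $I$; Lemma \ref{Lemma: OrderOrdinaryCrossings} linearly orders $(p,r)$-crossings of a given orientation; and Lemma \ref{Lemma:TwiceCollin} absorbs the $O(n^2)$ cost of crossings that change order type. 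Together these let me argue that, when the sample $R$ contains $p,q,r$ and avoids the $O(k_1)$-sized removal set provided by case (iii), the crossing $\chi$ induces a single Delaunay crossing in $\DT(R)$, to which Theorem \ref{Thm:RedBlue} at level $k_2$ can be re-applied with each $k_2$-shallow event now charged at most $O(k_1)$ times (the factor accounting for the choice of $q$ within the removable set). This produces the mixed terms $O(k_1 k_2^2 N(n/k_2))$ and $O(k_1 k_2 n^2 \beta(n))$ of the recurrence.

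The main obstacle, and where most of the work lies, is establishing that the second-stage multiplicity is indeed at most $O(k_1)$. One needs to show that a single $k_2$-shallow co-circularity (respectively collinearity) in the sampled arrangement $\A_{pq}$ can be produced by at most $O(k_1)$ distinct crossings $(pq,r,I)$ of $P$. The interval-disjointness argument gives $O(1)$ for a fixed crossed edge $pq$, but $q$ itself is only determined up to the $O(k_1)$ candidates in the removal set $A$ of the first-stage application, which is precisely the source of the $k_1$ factor. Assembling all four contributions ($O(k_1^2 N(n/k_1))$, $O(k_1 n^2\beta(n))$, $O(k_1 k_2^2 N(n/k_2))$, $O(k_1 k_2 n^2 \beta(n))$) and absorbing the dominated middle term yields the recurrence (\ref{Eq:OrdinaryCrossing}).
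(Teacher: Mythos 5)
Your first step—applying Theorem~\ref{Thm:RedBlue} to the crossed edge $pq$ over the interval $I=[t_0,t_1]$—is the central flaw. By the very definition of a Delaunay crossing, removing the \emph{single} point $r$ restores the Delaunayhood of $pq$ throughout $I$. Hence condition (iii) of Theorem~\ref{Thm:RedBlue} is trivially satisfied with $A=\{r\}$, and the theorem never forces conditions (i) or (ii) to hold. The ``easy subclass'' you try to carve out is empty, and no $k_1$-shallow events are extracted from $\A_{pq}$ at all. This is precisely why the paper does \emph{not} apply Theorem~\ref{Thm:RedBlue} to $pq$ over $I$. Instead, it fixes the \emph{next} clockwise $(p,r)$-crossing $(pa,r,J)$ (using the linear order from Lemma~\ref{Lemma: OrderOrdinaryCrossings}), and applies Theorem~\ref{Thm:RedBlue} in $\A_{pr}$ over the gap $(t_1,t_2)$ between $I$ and $J$, where $pr$ is Delaunay at both endpoints but $r$ is \emph{not} the culprit. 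That application is what produces a genuine, nontrivial removal set $A$ of size $O(k_1)$ when cases (i) and (ii) fail.

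Your second stage is correspondingly unmoored. You invoke a Clarkson--Shor sample ``avoiding the $O(k_1)$-sized removal set provided by case (iii),'' but given your first step the removal set is $\{r\}$, so there is nothing of size $k_1$ to avoid and no source for a $k_1$ multiplicity factor. In the paper, the $O(k_1)$ factor has a concrete origin: Proposition~\ref{Prop:Balanced} shows (via case (a), Lemma~\ref{Lemma:OnceCollin}, and Lemma~\ref{Lemma:MustCross}) that every counterclockwise $(q,r)$-crossing ending in $(t_1,t_3)$ contributes a point that crosses $pr$ during the gap, and hence lies in the set $A$ of size $\le 3k_1$; therefore at most $3k_1$ such crossings exist, and $(pq,r,I)$ is among the last $3k_1+1$. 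This is what bounds the multiplicity when Theorem~\ref{Thm:RedBlue} is re-applied in $\A_{rq}$ over $(t_1,t_{rq})$ and in $\A_{ra}$ over $(t_{ra},t_2)$ at level $k_2$. Your proposal gestures at Lemmas~\ref{Lemma:Crossing}, \ref{Lemma:OnceCollin}, \ref{Lemma: OrderOrdinaryCrossings}, \ref{Lemma:TwiceCollin} as ``coming into play,'' but does not assemble them into the case analysis (a)--(c) that is actually needed, nor does it identify which edges and which time intervals the shallow events are extracted from. The statement of the recurrence is reproduced correctly, but the argument as written would not yield it.
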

\begin{proof}
Fix a single Delaunay crossing $(pq,r,I=[t_0,t_1])$ as above. If this is the last clockwise $(p,r)$-crossing in the order implied by Lemma \ref{Lemma: OrderOrdinaryCrossings}, then we can charge $(pq,r,I=[t_0,t_1])$ to the edge $pr$. Clearly, this accounts for at most quadratically many single crossings. 

Otherwise, let $(pa,r,J=[t_2,t_3])$ be the clockwise $(p,r)$-crossing that follows immediately after $(pq,r,I=[t_0,t_1])$. That is, we have $t_0<t_2$ and $t_1<t_3$, and no clockwise $(p,r)$-crossings begin in the interval $(t_0,t_2)$ or end in the symmetric interval $(t_1,t_3)$. Refer to Figure \ref{Fig:ChooseNext}.
Note that $(pa,r,J)$ is uniquely determined by the choice of $(pq,r,I)$, and vice versa.
We thus have reduced our problem to bounding the maximum possible number of such ``consecutive" pairs $(pq,r,I=[t_0,t_1])$,  $(pa,r,J=[t_2,t_3])$ of clockwise $(p,r)$-crossings (over all $p,r\in P$).


\begin{figure}[htbp]
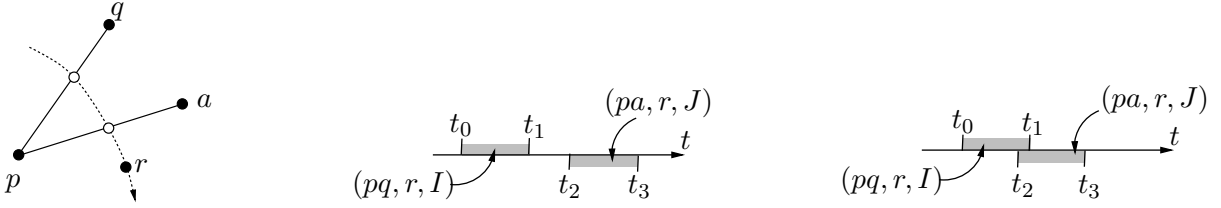

\begin{center}
\input{ConsecutiveCrossings.pstex_t}\hspace{2cm}\input{ConsecutiveDisjoint.pstex_t}\hspace{2cm}\input{ConsecutiveOverlap.pstex_t}
\caption{\small The pair $(pq,r,I=[t_0,t_1]),(pa,r,J=[t_2,t_3])$ of consecutive clockwise $(p,r)$-crossings.
Left: $r$ crosses the edges $pq$ (during $I$, from $\L_{pq}^-$ to $\L_{pq}^+$) and $pa$ (during $J$, from $\L_{pa}^-$ to $\L_{pa}^+$), in this order. Center and right: We have $t_0<t_1$ and $t_1<t_3$, so the intervals $I=[t_0,t_1]$ and $J=[t_2,t_3]$ are either disjoint or {\it partly} overlapping.
No clockwise $(p,r)$-crossings begin in $(t_0,t_2)$ or end in $(t_1,t_3)$.}
\label{Fig:ChooseNext}
\end{center}
\vspace{-0.3cm}
\end{figure}



\medskip
\noindent{\bf Charging events in $\A_{pr}$.} By Lemma \ref{Lemma:Crossing}, $pr$ is Delaunay in each of the intervals $I=[t_0,t_1]$ and $J=[t_2,t_3]$.
If these intervals overlap, then $pr$ is Delaunay throughout $[t_0,t_3]$. Otherwise, as a preparation to the main analysis, we consider the red-blue arrangement $\A_{pr}$ associated with the edge $pr$ during the gap $(t_1,t_2)$ between $I$ and $J$. 
Since $pr$ is Delaunay at both times $t_1$ and $t_2$, we can apply Theorem \ref{Thm:RedBlue} over $(t_1,t_2)$, with the first threshold value $k_1$. Refer to Figure \ref{Fig:ApplyRBPr}.

\begin{figure}[htbp]
\begin{center}
\input{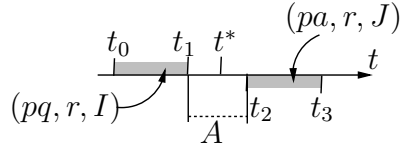}
\caption{\small Applying Theorem \ref{Thm:RedBlue} in $\A_{pr}$ over the gap $(t_1,t_2)$ between $I=[t_0,t_1]$ and $J=[t_2,t_3]$. Note that $pr$ is Delaunay throughout each of the intervals $I,J$. Unless we manage to charge the pair $(pq,r,I),(pa,r,J)$ within $\A_{pr}$, we end up with a subset $A$ of at most $3k_1$ points whose removal extends Delaunayhood of $pr$ to $(t_1,t_2)$.}
\label{Fig:ApplyRBPr}
\end{center}
\vspace{-0.3cm}
\end{figure}

In cases (i) and (ii) of Theorem \ref{Thm:RedBlue}, we charge the pair $(pq,r,I),(pa,r,J)$ either to $\Omega(k_1^2)$ $k_1$-shallow co-circularities, or to a $k_1$-shallow collinearity (within the arrangement of $pr$). 

In both chargings, each shallow co-circularity or collinearity is charged at most a constant number of times. Indeed, consider the moment $t^*$ when the charged event occurs, and notice that it involves $p$ and $r$ (together with one or two additional points of $P$). The choice of $t^*\in (t_1,t_2)\subset (t_1,t_3)$ ensures that the moment $t_1$ (when the crossing of $pq$ by $r$ ends) is the last time before $t^*$ when a clockwise $(p,r)$-crossing is completed. Hence, having guessed $p$ and $r$ (in $O(1)$ ways), $q$ is uniquely determined.
Therefore, using the upper bounds on the number of $k_1$-shallow collinearities and co-circularities established in Section \ref{Sec:Prelim}, we get that the overall number of such consecutive pairs $(pq,r,I=[t_0,t_1]), (pa,r,J=[t_2,t_3])$, for which the red-blue arrangement of $pr$ (during $(t_1,t_2)$) satisfies condition (i) or (ii) of Theorem \ref{Thm:RedBlue}, is 

$$
O\left(k_1n^2\beta(n)+\frac{k_1^4N(n/k_1)}{k_1^2}\right)=O\left(k_1n^2\beta(n)+k_1^2N(n/k_1)\right).
$$

To conclude, we can assume in what follows that either the intervals $I$ and $J$ overlap, or condition (iii) of Theorem \ref{Thm:RedBlue} holds. In the latter case, there exists a subset $A$ of at most $3k_1$ points (possibly including $q$ and/or $a$) so that $pr$ belongs to $\DT(P\setminus A)$ throughout the interval $[t_1,t_2]$. As a matter of fact, $pr$ then belongs to $\DT(P\setminus A)$ throughout an even larger interval $[t_0,t_3]=I\cup (t_1,t_2)\cup J$. 

Notice that reversing the direction of the time axis simply switches the order of $(pq,r,I)$ and $(pa,r,J)$, so their respective points $q$ and $a$ will play symmetrical roles in our case analysis. Recall also that $(pq,r,I)$ is a counterclockwise $(q,r)$-crossing, and $(pa,r,J)$ is a counterclockwise $(a,r)$-crossing (this in addition to their being clockwise $(p,r)$-crossings).

\paragraph{The subsequent chargings--overview.} The rest of the proof is organized as follows.
We distinguish between three possible cases (a)--(c), ruling them out one by one.

In case (a) we assume that $pr$ is hit by one of $q,a$ in the gap $(t_1,t_2)$ between $I$ and $J$, so the respective triple $p,q,r$ or $p,a,r$ performs two single Delaunay crossings in, respectively, $(P\setminus A)\cup \{q\}$ or $(P\setminus A)\cup \{a\}$. Hence, our analysis bottoms out via Lemma \ref{Lemma:TwiceCollin}.

In case (b) we assume that the edge $rq$ is never Delaunay in the interval $[t_3,\infty)$, or that the edge $ra$ is never Delaunay in the symmetric interval $(-\infty,t_0]$.
In the first sub-scenario, we show that $(pq,r,I=[t_0,t_1])$ is among the last $3k_1+1$ counterclockwise $(q,r)$-crossings (with respect to the order implied by Lemma \ref{Lemma: OrderOrdinaryCrossings}). Notice that, by Lemma \ref{Lemma:Crossing}, no such crossings can begin or end after $t_3$ (where $rq$ is not even Delaunay), so we are only to show that at most $3k_1$ $(q,r)$-crossings $(p'q,r,I')$ end after $I$ and before $t_3$ (which is done in Proposition \ref{Prop:Balanced}).
In the second sub-scenario, a fully symmetric argument implies that $(pa,r,J=[t_2,t_3])$ is among the first $3k_1+1$ counterclockwise $(a,r)$-crossings. In both sub-scenarios, the overall number of such consecutive pairs $(pq,r,I),(pa,r,J)$ is easily seen to be $O(k_1n^2)$.
 
Finally, in case (c) we may assume that there exists a time $t_{rq}\geq t_3$ which is the {\it first} such time when $rq$ belongs to $\DT(P)$, and that
there exists a time $t_{ra}\leq t_0$ which is the {\it last} such time when $ra$ belongs to $\DT(P)$.
We argue that the edge $rq$ is hit in $(t_1,t_{rq}]$ by one of $p,a$, and that the edge $ra$ is hit in the symmetric interval $[t_{rq},t_2)$ by one of $p,q$.
We then invoke Theorem \ref{Thm:RedBlue} and try to charge $(pq,r,I), (pa,r,J)$ within one of the red-blue arrangements $\A_{ra},\A_{rq}$. In the case of failure, each of the above additional crossings of $rq$ and $ra$ yields a Delaunay crossing with respect a suitably reduced subset of $P$, so at least one of the triples $\{p,q,r\},\{p,a,r\},\{q,r,a\}$ is involved in two Delaunay crossings. Hence, our analysis again bottoms out via Lemma \ref{Lemma:TwiceCollin}.




\paragraph{Case (a).} The above intervals $I=[t_0,t_1]$ and $J=[t_2,t_3]$ are disjoint and (at least) one of the points $q,a$ hits the edge $pr$ in the interval $(t_1,t_2)$. 
(By Lemma \ref{Lemma:Crossing}, none of $q,a$ can hit $pr$ in $I$ or $J$.) Refer to Figure \ref{Fig:CrossPr}.

Assume, with no loss of generality, that $pr$ is hit in $(t_1,t_2)$ by $q$.
Since $pr$ is Delaunay at both times $t_1$ and $t_2$, the edge $pr$ (or its reversely oriented copy $rp$) undergoes a Delaunay crossing by $q$ within the smaller triangulation $\DT((P\setminus A)\cup\{q\})$ during some sub-interval of $(t_1,t_2)$. 
This is in addition to the inherited single Delaunay crossing of $pq$ by $r$, which is easily checked to occur in $\DT((P\setminus A)\cup\{q\})$ too.
Recalling the assumptions on the possible collinearities in $P$, we get that both of these crossings in $\DT((P\setminus A)\cup\{q\})$ must be single Delaunay crossings.
Lemma \ref{Lemma:TwiceCollin}, combined with the probabilistic argument of Clarkson and Shor \cite{CS}, in a manner similar to that used in Section \ref{Sec:Prelim}, provides an upper bound of $O(k_1n^2)$ on the number of such triples $p,q,r$. Clearly, this also bounds the overall number of such consecutive pairs $(pq,r,I),(pa,r,J)$. 

\begin{figure}[htbp]
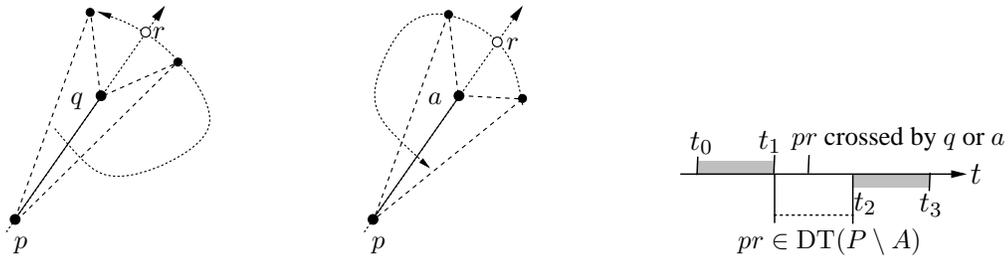

\begin{center}
\input{SecondCrossing.pstex_t}\hspace{2cm}\input{SecondCrossingReverse.pstex_t}\hspace{2cm}\input{SchemeA.pstex_t}
\caption{\small Case (a). Left and center: The edge $pr$ is hit by (at least) one of $q,a$ during $(t_1,t_3)$. Right: The edge $pr$ undergoes a Delaunay crossings by $q$ or $a$ within an appropriate triangulation $\DT((P\setminus A)\cup \{q\})$ or $\DT((P\setminus A)\cup \{a\})$.}
\label{Fig:CrossPr}
\end{center}
\end{figure}

Symmetrically, if $pr$ is hit in the interval $(t_1,t_2)$ by $a$, the triple $p,a,r$ performs two single Delaunay crossings in the triangulation $\DT((P\setminus A)\cup \{a\})$. By Lemma \ref{Lemma:TwiceCollin}, and again using the probabilistic argument of Clarkson and Shor, the overall number of such crossings $(pa,r,J)$ (and, hence, of such consecutive pairs $(pq,r,I),(pa,r,J)$) too cannot exceed $O(k_1n^2)$.

\bigskip
To conclude, in each of the subsequent cases (b)--(c) we may assume that the preceding scenario (a) does not occur. In addition, we continue to assume that, unless the intervals $I=[t_0,t_1]$ and $J=[t_2,t_3]$ overlap, there is a subset $A$ of at most $3k_1$ points whose removal restores the Delaunayhood of $pr$ in the gap $(t_1,t_2)$ between $I$ and $J$.

\begin{proposition}\label{Prop:Balanced}
With the above assumptions, at most $3k_1$ counterclockwise $(q,r)$-crossings $(p'q,r,I')$ end in the interval $(t_1,t_3)$, and at most $3k_1$ counterclockwise $(a,r)$-crossings $(p'a,r,J')$ begin in the symmetric interval $(t_0,t_2)$.
\end{proposition}
\begin{proof}
With no loss of generality, we focus on counterclockwise $(q,r)$-crossings. The counterclockwise $(a,r)$-crossings are handled symmetrically, by reversing the direction of the time axis. 

Let $(p'q,r,I')$ be a counterclockwise $(q,r)$-crossing that ends in $(t_1,t_3)$. In particular, $I'$ ends after $I$ so, by the counterclockwise variant of Lemma \ref{Lemma: OrderOrdinaryCrossings}, $I'$ also begins after the beginning $t_0$ of $I$. Therefore, we get that $I'\subset (t_0,t_3]$.
We claim that the intervals $I=[t_0,t_1]$ and $J=[t_2,t_3]$ are disjoint, and the respective point $p'$ of $(p'q,r,I')$ belongs to the above set $A$ of at most $3k_1$ points whose removal restores the Delaunayhood of $pr$ throughout $[t_1,t_2]$. This will imply that the overall number of such crossings $(p'q,r,I')$ cannot exceed $3k_1$. 

\begin{figure}[htbp]
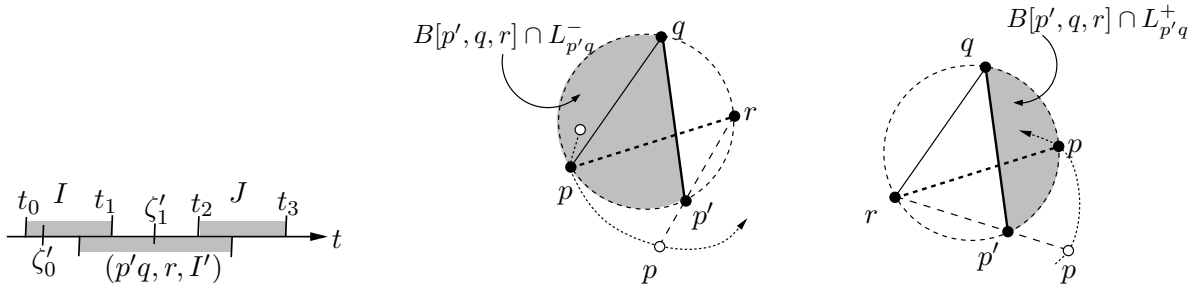

\begin{center}
\input{SchemeBalanced.pstex_t}\hspace{1cm}\input{BalancedCross.pstex_t}\hspace{1.5cm}\input{BalancedCross1.pstex_t}
\caption{\small Proof of Proposition \ref{Prop:Balanced}. Left: The crossing $(p'q,r,I')$ occurs within $(t_0,t_3]$. The points $p,q,r,p'$ are co-circular at times $\zeta'_0\in I\setminus I'$ and $\zeta'_1\in I'\setminus I$. The latter co-circularity (at time $\zeta'_1$) is red-blue with respect to $pr$, so it occurs in the gap $(t_1,t_2)$ between $I$ and $J$. Center: A possible trajectory of $p$ during $(\zeta'_1,t_2)$ if $r$ lies in $\L_{p'q}^+$ at time $\zeta'_1$. Right: A possible trajectory of $p$ during $(t_1,\zeta'_1)$ if $r$ lies in $\L_{p'q}^-$ at time $\zeta'_1$.}
\label{Fig:Balanced}
\end{center}
\end{figure} 

Indeed, by Lemma \ref{Lemma:OnceCollin}, the four points $p,q,r,p'$ are involved in (at least) one co-circularity during the single Delaunay crossing of $pq$ by $r$, and in another co-circularity during the similar crossing of $p'q$ by $r$. Refer to Figure \ref{Fig:Balanced} (left).
Specifically, the former co-circularity is red-blue with respect to the two diagonal edges $pq$ and $p'r$. By Lemma \ref{Lemma:Crossing}, $p'r$ is Delaunay throughout $I'$, so this co-circularity occurs at some time $\zeta'_0\in I\setminus I'$. Similarly, the other co-circularity of $p,q,r,p'$ is red-blue with respect to the edges $p'q$ and $pr$, so it occurs at some later time $\zeta'_1\in I'\setminus I$. 
Since the latter co-circularity, occurring at time $\zeta'_1$, is red-blue with respect to $pr$, it cannot occur during during the interval $J$ (where $pr$ is Delaunay). Hence, $\zeta'_1$ must occur in the gap $(t_1,t_2)$ between the intervals $I$ and $J$, which then cannot overlap. 

We next argue that $p'$ hits $pr$ in the above gap interval $(t_1,t_2)$, which will immediately\footnote{Clearly, the Delaunayhood of $pr$ in $(t_1,t_1)$ cannot be restored before we remove from $P$ every point that crosses $pr$ in that interval.} imply that $p'\in A$.
Since $\zeta'_0\leq t_1<\zeta'_1$, the times $\zeta'_0$ and $\zeta'_1$ cannot coincide, so $\zeta'_1$ is the {\it only} co-circularity of $p,q,r,p'$ in $(t_1,t_2)$. To obtain the asserted crossing of $pr$, we distinguish between the following two sub-cases:

\medskip
\noindent (i) Assume first that $r$ lies in $\L_{p'q}^+$ at time $\zeta'_1$. As prescribed in Lemma \ref{Lemma:OnceCollin}, this co-circularity occurs when $p$ leaves the cap $B[p',q,r]\cap \L_{p'q}^-$, so the Delaunayhood of $pr$ is violated right after time $\zeta'_1$ by $q$ and $p'$. See Figure \ref{Fig:Balanced} (center).
By Lemma \ref{Lemma:MustCross}, and keeping in mind that $pr$ is Delaunay at time $t_2$ (and no further co-circularities of $p,q,r,p'$ can occur in $(t_1,t_2)$), the edge $pr$ is hit by at least one of the two points $q,p'$ at some moment in $(\zeta'_1,t_2)$. 
Since case (a) is excluded, $q$ cannot hit $pr$ during that interval (which is contained in $(t_1,t_2)$), so it must be the case that $p'$ hits $pr$ during the time interval $(t_1,t_2)$. 

\smallskip
\noindent (ii) Assume, then, that $r$ lies in $\L_{p'q}^-$ at time $\zeta'_1$. As prescribed in Lemma \ref{Lemma:OnceCollin}, this co-circularity occurs when $p$ enters the cap $B[p',q,r]\cap \L_{p'q}^+$, so the Delaunayhood of $pr$ is violated right {\it before} time $\zeta'_1$ by $p'$ and $q$. See Figure \ref{Fig:Balanced} (right).
Since $pr$ is Delaunay at time $t_1<\zeta'_1$ (and no further co-circularities of $p,q,r,r'$ can take place in $(t_1,\zeta'_1)$), we can apply Lemma \ref{Lemma:MustCross} from $\zeta'_1$ in the reverse direction of the time axis to get that $pr$ is hit by at least one of $q,p'$ at some moment in $(t_1,\zeta'_1)$. Since case (a) is excluded, it must be the case that $p'$ hits $pr$ during the time interval $(t_1,t_2)$.

\smallskip
To conclude, we have shown that $pr$ is hit by $p'$ in the gap interval $(t_1,t_2)$ between $I$ and $J$.
Therefore, $p'$ belongs to the above set $A$ of at most $3k_1$ points whose removal restores the Delaunayhood of $pr$ throughout the inerval $[t_1,t_2]$, so the overall number of such counterclockwise $(q,r)$-crossings $(p'q,r,I')$ cannot exceed $3k_1$.

Repeating the above analysis in the reverse direction of the time axis shows that, if $(p'a,r,J')$ is a counterclockwise $(a,r)$-crossing starting in $(t_0,t_2)$, its respective point $p'$ crosses $pr$ in the gap $(t_1,t_2)$ and, therefore, again belongs to $A$.
Hence, the overall number of such crossings $(p'a,r,J')$ is at most $3k_1$ too.
\end{proof}




\smallskip
\noindent {\bf Case (b).} The edge $rq$ is never Delaunay in the interval $[t_3,\infty)$, or the edge $ra$ is never Delaunay in the symmetric interval $(-\infty,t_0]$.

If $rq$ is never Delaunay in $[t_3,\infty)$, then no counterclockwise $(q,r)$-crossings $(p'q,r,I')$ can occur (i.e., begin or end) after $t_3$, for, by Lemma \ref{Lemma:Crossing}, $rq$ must belong to $\DT(P)$ when any such Delaunay crossing takes place. 
Combining this with Proposition \ref{Prop:Balanced}, we conclude that $(pq,r,I)$ is among the $3k_1+1$ counterclockwise $(q,r)$-crossings $(p'q,r,I')$ that end the latest. In other words, $(pq,r,I)$ is among the last $3k_1+1$ counterclockwise $(q,r)$-crossings with respect to the order implied by Lemma \ref{Lemma: OrderOrdinaryCrossings}. Clearly, this scenario can happen for at most $O(k_1n^2)$ consecutive pairs $(pq,r,I), (pa,r,J)$ of Delaunay crossings. 

A fully symmetric argument applies if $ra$ is never Delaunay in $(-\infty,t_0]$.
In that case, we get that $(pa,r,J)$ is among the first $3k_1+1$ counterclockwise $(a,r)$-crossings $(p'a,r,J')$, which can happen for at most $O(k_1n^2)$ pairs $(pq,r,I),(pa,r,J)$.

\medskip
\noindent {\bf Case (c).} Neither of the previous two cases holds. 
In particular, there exists a time $t_{rq}\geq t_3$ which is the {\it first} such time when $rq$ belongs to $\DT(P)$.
Similarly, there exists a time $t_{ra}\leq t_0$ which is the {\it last} such time when $ra$ belongs to $\DT(P)$. See Figure \ref{Fig:SchemeCrossRq}.

More precisely, if $rq$ is Delaunay at time $t_3$, then we have $t_{rq}=t_3$. Otherwise, if $t_{rq}>t_3$, this is one of the critical times when $rq$ enters $\DT(P)$. As reviewed in Section \ref{Sec:Prelim}, $\DT(P)$ experiences then either a Delaunay co-circularity or a hull event (where $rq$ is hit by some point of $P\setminus \{r,q\}$). In the latter case, $rq$ is not strictly Delaunay at time $t_{rq}$ and appears in $\DT(P)$ only {\it right after} this event. The time $t_{ra}$ has fully symmetrical properties. For simplicity of presentation, we consider the edges $rq$ and $ra$ to be Delaunay at the respective times $t_{rq}$ and $t_{ra}$.

\begin{figure}[htbp]
\begin{center}
\input{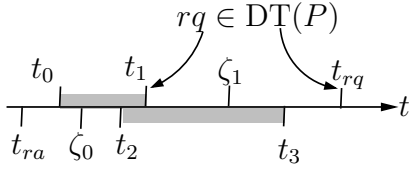}
\caption{\small Proposition \ref{Prop:ExtraCollin}. The edge $rq$ is hit in $(t_1,t_{rq}]$ by at least one of $p,a$. Symmetrically, the edge $ra$ is hit in $[t_{ra},t_2)$ by at least one of $p,q$.
The four points $p,q,r,a$ are co-circular at times $\zeta_0\in I\setminus J$ and $\zeta_1\in J\setminus I$.
Note that $rq$ is Delaunay at both times $t_1,t_{rq}$, and $ra$ is Delaunay at both times $t_{ra},t_2$.}
\label{Fig:SchemeCrossRq}
\end{center}
\vspace{-0.6cm}
\end{figure}

\begin{proposition}\label{Prop:ExtraCollin}
With the above assumptions, the edge $rq$ is hit in the interval $(t_1,t_{rq}]$ by at least one of the points $a,p$, and, symmetrically, the edge $ra$ is hit in the interval $[t_{ra},t_2)$ by at least one of the points $p,q$.
\end{proposition}
\begin{proof}
Clearly, it is sufficient to show $rq$ is hit in $(t_1,t_{rq}]$ by $a$ and/or $p$. The symmetric crossing of $ra$ by $a$ and/or $p$ is then obtained by repeating the same analysis in the time-reversed frame (thereby switching the roles of $q$ and $a$).

Indeed, applying Lemma \ref{Lemma:OnceCollin} for the two crossings $(pq,r,I)$ and $(pa,r,J)$ shows that the four points $p,q,a,r$ are co-circular in each of the intervals $I=[t_0,t_1]$ and $J=[t_2,t_3]$. Specifically, the former co-circularity (in $I$) is red-blue with respect to the edges $pq$ and $ra$, so it occurs at some time $\zeta_0\in I\setminus J$ (because $ra$ is Delaunay throughout $J$). The latter co-circularity is red-blue with respect to $pa$ and $rq$, so it must occur at some time $\zeta_1$ in the symmetric interval $J\setminus I$. See Figure \ref{Fig:SchemeCrossRq} for a schematic summary.

Clearly, the above two co-circularities of $p,q,a,r$ cannot coincide, so $\zeta_1$ is the {\it only} co-circularity of $p,q,r,a$ in the interval $(t_1,t_{rq}]$ (which contains $J\setminus I$). 
To obtain the asserted crossing of $rq$ by $p$ or/and $a$, we distinguish between the following two sub-cases.

\begin{figure}[htbp]
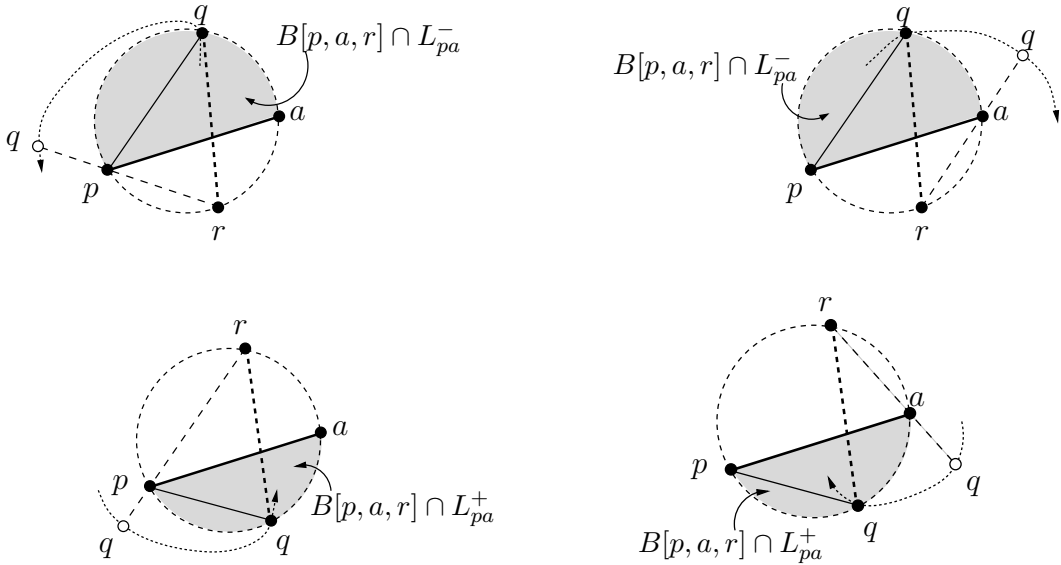

\begin{center}
\input{CrossQr.pstex_t}\hspace{4cm}\input{CrossQr1.pstex_t}\\
\vspace{0.5cm}
\input{CrossQrBefore.pstex_t}\hspace{4cm}\input{CrossQrBefore1.pstex_t}
\caption{\small Proof of Proposition \ref{Prop:ExtraCollin}. Arguing that $rq$ is hit by (at least) one of $p,a$ in $(t_1,t_{rq}]$. Top: Possible trajectories of $q$ if it leaves the cap $B[p,a,r]\cap \L_{pa}^-$ at time $\zeta_1$. The asserted crossing of $rq$ occurs in $(\zeta_1,t_{rq}]$. Bottom: Possible trajectories of $q$ if it enters the opposite cap $B[p,a,r]\cap \L_{pa}^+$ at time $\zeta_1$, so the asserted crossing occurs in $(t_1,\zeta_1)$.}
\label{Fig:ExtraCollin}
\end{center}
\end{figure} 

\smallskip
\noindent(i) The point $r$ lies at time $\zeta_1$ in $\L_{pa}^+$.
As prescribed in Lemma \ref{Lemma:OnceCollin}, this co-circularity occurs when $q$ leaves the cap $B[p,a,r]\cap \L_{pa}^-$, so the Delaunayhood of $rq$ is violated right after time $\zeta_1$ by $p$ and $a$.
By Lemma \ref{Lemma:MustCross}, and keeping in mind that $rq$ is Delaunay at time $t_{rq}$, the edge $rq$ is hit by at least one of the two points $a,p$ at some moment in $(\zeta_1,t_{rq}]$. See Figure \ref{Fig:ExtraCollin} (top).

\smallskip
\noindent(ii) The point $r$ lies at time $\zeta_1$ in $\L_{pa}^-$.
As prescribed in Lemma \ref{Lemma:OnceCollin}, this co-circularity occurs when $q$ enters the cap $B[p,a,r]\cap \L_{pa}^+$, so the Delaunayhood of $rq$ is violated right {\it before} time $\zeta_1$ by $p$ and $a$.
By Lemma \ref{Lemma:MustCross}, and keeping in mind that $rq$ is Delaunay at time $t_1$, the edge $rq$ is hit by at least one of the two points $a,p$ at some moment in $(t_1,\zeta_1)$. See Figure \ref{Fig:ExtraCollin} (bottom).
\end{proof}

Combining the new collinearities in Proposition \ref{Prop:ExtraCollin} with the already existing crossings of $pq$ and $pa$ by $r$ shows that at least one of the triples $\{p,q,r\}$, $\{p,a,r\}$, $\{q,r,a\}$ performs two collinearities, of distinct order types.
If we manage to amplify the above additional crossings of $rq$ and $ra$ into full-fledged Delaunay crossings (as we did in Section \ref{Sec:DelCocircs} and in case (a)), then some sub-triple in $p,q,a,r$ will necessarily perform two single Delaunay crossings, so our analysis will bottom out via Lemma \ref{Lemma:TwiceCollin}.

As a preparation,
we first apply Theorem \ref{Thm:RedBlue} in $\A_{rq}$ over the interval $(t_1,t_{rq})$, and then apply it in $\A_{ra}$ over $(t_{ra},t_2)$, both times with the second constant parameter $k_2>k_1$ instead of $k$. (We again emphasize that $rq$ is Delaunay at both endpoints of $(t_1,t_{rq})$, and $ra$ is Delaunay at both endpoints of $(t_{ra},t_2)$.)

\smallskip
\noindent{\bf Charging events in $\A_{rq}$.}
Consider the first application of Theorem \ref{Thm:RedBlue}. Refer to Figure \ref{Fig:ExtendRq}.
If one of the first two conditions of Theorem \ref{Thm:RedBlue} holds, we can charge the pair $(pq,r,I),(pa,r,J)$ within $\A_{rq}$ either to $\Omega(k_2^2)$
$k_2$-shallow co-circularities, or to a $k_2$-shallow collinearity. Here the crucial observation is that every co-circularity or collinearity (which occurs in $(t_1,t_{rq})$ and involves $r$ and $q$) is charged in this manner at most $O(k_1)$ times. Indeed, by Proposition \ref{Prop:Balanced},
at most $3k_1$ counterclockwise $(q,r)$-crossings end in $(t_1,t_3)$. Moreover, unless $t_{rq}=t_3$, no $(q,r)$-crossings can even partly overlap (let alone end in) $[t_3,t_{rq})$, until $rq$ returns to $\DT(P)$ at time $t_{rq}$.
Thus, $(pq,r,I)$ is among the $3k_1+1$ counterclockwise $(q,r)$-crossings that are the latest to end before any of the charged collinearity or co-circularity events (all occurring during $(t_1,t_{rq})$). 
Arguing as in the previous chargings, the number of consecutive pairs $(pq,r,I), (pa,r,J)$ for which such a charging applies is at most $O(k_1k_2^2N(n/k_2)+k_1k_2n^2\beta(n))$.

Finally, if condition (iii) of Theorem \ref{Thm:RedBlue} holds then the Delaunayhood of $rq$ can be restored, throughout the interval $[t_1,t_{rq}]$, by removing a set $B_{rq}$ of at most $3k_2$ points of $P$ (including $p$ and/or $a$). (By Lemma \ref{Lemma:Crossing}, $rq$ is also Delaunay throughout $I=[t_0,t_1]$, so its Delaunayhood extends, in $\DT(P\setminus B_{rq})$, to an even larger interval $I\cup [t_1,t_{rq}]=[t_0,t_{rq}]$.)
Recalling Proposition \ref{Prop:ExtraCollin}, we distinguish between the following two subcases.

\begin{figure}[htbp]
\begin{center}
\input{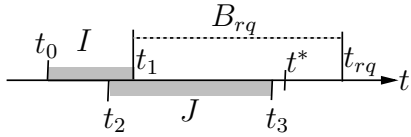}
\caption{\small Applying Theorem \ref{Thm:RedBlue} in $\A_{rq}$ over $(t_1,t_{rq})$--a schematic summary. The edge $rq$ is Delaunay at both times $t_1,t_{rq}$. In cases (i), (ii), each $k_2$-shallow event is charged only $O(k_1)$ times because $(pq,r,I=[t_0,t_1])$ is among the last $3k_1+1$ counterclockwise $(q,r)$-crossings to end before the respective time $t^*$ of the event. In case (iii) we have a subset $B_{rq}$ of at most $3k_2$ points whose removal extends the Delaunayhood of $rq$ to $(t_1,t_{rq})$.}
\label{Fig:ExtendRq}
\end{center}
\end{figure} 

If $rq$ is hit in $(t_1,t_{rq}]$ by $p$, then
the smaller set $(P\setminus B_{rq})\cup\{p\}$ yields a Delaunay crossing of $rq$ (or of its reversely oriented copy $qr$) by $p$. This is in addition to the inherited Delaunay crossing of $pq$ by $r$. As in case (a), it is easy to check that both of these crossings in $\DT((P\setminus B_{rq})\cup\{p\})$ must be single Delaunay crossings. 
Hence, Lemma \ref{Lemma:TwiceCollin}, combined with the Clarkson-Shor argument \cite{CS}, in a manner similar to that used in Section \ref{Sec:Prelim} and the previous cases, provides an upper bound of $O(k_2n^2)$ on the number of such triples $p,q,r$. Clearly, this also bounds the overall number of such consecutive pairs $(pq,r,I),(pa,r,J)$ of Delaunay crossings.

To conclude, we may assume that $rq$ is hit in $(t_1,t_{rq})$ by $a$, so the smaller set $(P\setminus B_{rq})\cup\{a\}$ yields a Delaunay crossing of $rq$ by $a$.

\smallskip
\noindent{\bf Charging events in $\A_{ra}$.} The second application of Theorem \ref{Thm:RedBlue} in $\A_{ra}$ over $(t_{ra},t_2)$ is fully symmetric to the first one. Refer to Figure \ref{Fig:ExtendRa}. If at least one of conditions (i), (ii) is satisfied, we charge the pair $(pq,r,I), (pa,r,J)$ within $\A_{ra}$ either to $\Omega(k_2^2)$ $k_2$-shallow co-circularities or to an $k_2$-shallow collinearity that occur in $\A_{ra}$ during that interval. The crucial observation is that $(pa,r,J)$ is among the first $3k_1+1$ counterclockwise $(a,r)$-crossings to begin after each charged event, which also involves $a$ and $r$. Hence, every collinearity or co-circularity is charged at most $O(k_1)$ times, so this charging accounts for at most $O(k_1k_2^2N(n/k_2)+k_1k_2n^2\beta(n))$ pairs.

For each of the remaining pairs $(pq,r,I),(pa,r,J)$ we have a set $B_{ra}$ of at most $3k_2$ points (possibly including $p$ and/or $q$) whose removal restores the Delaunayhood of $ra$ throughout $[t_{ra},t_2]$. (By Lemma \ref{Lemma:Crossing}, $ra$ is also Delaunay throughout $J=[t_2,t_3]$, so its Delaunayhood extends, in $\DT(P\setminus B_{ra})$, to an even larger interval $[t_{ra},t_2]\cup J=[t_{ra},t_{3}]$.)
To complete the proof of Theorem \ref{Thm:OrdinaryCrossings}, we again recall Proposition \ref{Prop:ExtraCollin} and distinguish between the two possible crossings of $ra$.

\begin{figure}[htbp]
\begin{center}
\input{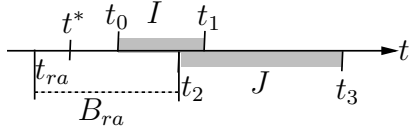}
\caption{\small Applying Theorem \ref{Thm:RedBlue} in $\A_{ra}$ over $(t_{ra},t_2)$--a schematic summary. The edge $ra$ is Delaunay at both times $t_{ra},t_2$. In cases (i), (ii), each $k_2$-shallow event is charged only $O(k_1)$ times because $(pa,r,J=[t_2,t_3])$ is among the first $3k_1+1$ counterclockwise $(a,r)$-crossings to begin after the respective time $t^*$ of the event. In case (iii) we have a subset $B_{ra}$ of at most $3k_2$ points whose removal extends the Delaunayhood of $ra$ to $(t_{ra},t_2)$.}
\label{Fig:ExtendRa}
\end{center}
\end{figure}

If $ra$ is hit in $[t_{ra},t_2)$ by $p$, then
the smaller set $(P\setminus B_{ra})\cup\{p\}$ yields a Delaunay crossing of $ra$ (or of its reversely oriented copy $qr$) by $p$, and a Delaunay crossing of $pa$ by $r$, which are easily checked to be single Delaunay crossings. 
Hence, Lemma \ref{Lemma:TwiceCollin}, combined with the Clarkson-Shor argument \cite{CS}, provides an upper bound of $O(k_2n^2)$ on the number of such triples $p,q,a$, which also bounds the overall number of such consecutive pairs $(pq,r,I),(pa,r,J)$.

Finally, if $ra$ is hit in $[t_{ra},t_{2})$ by $q$, the triple $a,r,q$ performs two Delaunay crossings in the triangulation $\DT((P\setminus (B_{rq}\cup B_{ra}))\cup \{q,a\})$, that is, the crossing of $rq$ by $a$ (occurring entirely within $(t_1,t_{rq}]$), and the crossing of $ra$ by $q$ (occurring entirely within $[t_{ra},t_2)$). The standard assumptions on the possible collinearities in $P$ readily imply that both of these crossings are in fact single Delaunay crossings.
Combining Lemma \ref{Lemma:TwiceCollin} with the probabilistic argument of Clarkson and Shor \cite{CS}, as above, we get that the number of such triples $q,r,a$ is at most $O(k_2n^2)$. Note, though, that our goal is to bound the number of possible pairs $p,q,r$ (or, alternatively, $p,a,r$) rather than $q,r,a$. However, recall that $a$ hits $rq$ during the time interval $(t_1,t_{rq}]$, and $(pq,r,I)$ is then among the $3k_1+1$ counterclockwise $(q,r)$-crossings that end latest before that collinearity of $q,r,a$. Hence, any triple $q,r,a$ can arise in the charging for at most $O(k_1)$ triples $p,q,r$. In conclusion, the number of consecutive pairs $(pq,r,I), (pa,r,J)$ that fall into this final subcase is at most $O(k_1k_2n^2)$. 

Adding up the bounds obtained in cases (a)--(c),  and in the preparatory charging of $k_1$-shallow events in $\A_{pr}$, the theorem follows.
\end{proof}

\subsection{The number of double Delaunay crossings}\label{Subsec:Double}
In this subsection we show that any set $P$ of $n$ points moving as above in $\reals^2$ admits at most $O(n^2)$ double Delaunay crossings.
Since double Delaunay crossings are not possible if no ordered triple of points can be collinear more than once
(i.e., if for any $p,q,r$ the third point $r$ can hit the segment $pq$ at most once), we may assume throughout this subsection that no triple of points in $P$ can be collinear more than twice.

Without loss of generality, we only bound the number of such double Delaunay crossings $(pq,r,I)$ whose point $r$ crosses through $pq$ from $\L_{pq}^-$ to $\L_{pq}^+$ during the first collinearity of $p,q,r$ (and then returns back to $\L_{pq}^-$ during the second collinearity).
Indeed, if the crossing $(pq,r,I)$ does not satisfy the above condition then they are satisfied by $(qp,r,I)$.
Our goal is to show that (on average) a point $r$ of $P$ is involved in only few Delaunay crossings of edges that share the same endpoint $p$.

The following theorem provides certain structural properties of two double crossings that share the same crossing point ($r$) and one endpoint ($p$) of the crossed edges.

\begin{figure}[htbp]
\begin{center}
\input{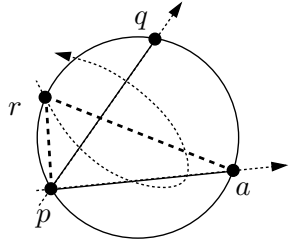}
\caption{\small The trace of $r$ according to Theorem \ref{Thm:OrderSpecialCrossings}. The four points $p,q,a,r$ are involved during $I$ in two co-circularities, which are red-blue with respect to the edges $pq$ and $ra$.}
\label{Fig:StayDelaunay}
\end{center}
\end{figure} 

\begin{theorem}\label{Thm:OrderSpecialCrossings}
Let $(pq,r,I)$ and $(pa,r,J)$ be two double Delaunay crossings of $p$-edges (that is, edges incident to $p$) $pq,pa$ by the same point $r$. 
Assume that
the first collinearity of $p,q,r$ occurs before the first collinearity of $p,a,r$.
Then the following properties hold (with the conventions assumed above):\\
\indent(i) $a$ lies in $\L_{pq}^+$ at both times when $r$ hits $pq$.\\
\indent (ii) $q$ lies in $\L_{pa}^-$ at both times when $r$ hits $pa$.\\
\indent (iii) The points $p,q,a,r$ are involved during $I\setminus J$ in two co-circularities, both of them red-blue with respect to $pq$ and occurring when $r\in \L^-_{pq}$ and $a\in \L^+_{pq}$. \\
\indent (iv) One of the two co-circularities in (iii) occurs before the beginning of $J$; right before it the Delaunayhood of $ra$ is violated by $p$ and $q$. A symmetric such co-circularity occurs after the end of $J$;
right after it the Delaunayhood of $ra$ is again violated by $p$ and $q$.
In particular, $J\subset I$.
\end{theorem}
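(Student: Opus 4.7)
The plan is to analyze the two double crossings in tandem by tracking the moving circumdiscs $B[p,q,r]$ (during $I$) and $B[p,a,r]$ (during $J$), exploiting the hypothesis that $\{p,q,r,a\}$ admits at most two co-circularities in total. Write $\sigma_1<\sigma_2$ for the two times in $I$ at which $r$ hits $pq$, and $\tau_1<\tau_2$ for the two times in $J$ at which $r$ hits $pa$; by the double-crossing convention, $r$ goes from $\L_{pq}^-$ to $\L_{pq}^+$ at $\sigma_1$ (resp.\ $\L_{pa}^-\to\L_{pa}^+$ at $\tau_1$) and returns at $\sigma_2$ (resp.\ $\tau_2$). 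The hypothesis is $\sigma_1<\tau_1$. The disc $B[p,q,r]$ is $P$-empty at $t_0$ and $t_1$ and moves continuously on each of the three pieces of $I\setminus\{\sigma_1,\sigma_2\}$; the analytic computation on the exploding circumcircle carried out in the proof of Lemma~\ref{Lemma:OnceCollin}, applied now at both $\sigma_1$ and $\sigma_2$, shows that $B[p,q,r]$ limits to $\L_{pq}^+$ at the one-sided limits $\sigma_1^-$ and $\sigma_2^+$, and to $\L_{pq}^-$ at $\sigma_1^+$ and $\sigma_2^-$. Each membership change of $a$ in $B[p,q,r]$ on a continuity piece is therefore a red-blue co-circularity of $\{p,q,r,a\}$ with respect to $pq$, and the symmetric statement (with $q$ versus $B[p,a,r]$) holds on the pieces of $J\setminus\{\tau_1,\tau_2\}$.

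To prove (i) and (ii) I do a case analysis on the side of $\L_{pq}$ on which $a$ sits at $\sigma_1$ and $\sigma_2$, and symmetrically on $q$ at $\tau_1,\tau_2$. In the target case, $a\in\L_{pq}^+$ at both $\sigma_i$ with $a$ never crossing $\L_{pq}$ during $I$, the membership trace of $a$ at the six key moments $t_0,\sigma_1^\mp,\sigma_2^\mp,t_1$ reads ``out, in, out, out, in, out'', mandating exactly one co-circularity in each of the first and third pieces of $I$ and none in the middle, thus exhausting the budget of two. Every other configuration is eliminated by showing that it forces either a third membership transition on the middle piece of $I$ (e.g.\ when $a$ is blue at some $\sigma_i$, or when $a$ crosses $\L_{pq}$ inside $I$), or independent membership transitions of $q$ in $B[p,a,r]$ during $J$ that are not shared with the $I$-transitions (e.g.\ when $q$ is red at some $\tau_i$). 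A subsidiary point used throughout the elimination is that $I\cap J\neq\emptyset$: were the intervals disjoint, the separately-forced transitions on $I$ and on $J$ would be distinct and overshoot two.

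Parts (iii) and (iv) then follow quickly. The two forced co-circularities live in $[t_0,\sigma_1)$ and $(\sigma_2,t_1]$, both with $r\in\L_{pq}^-$ and $a\in\L_{pq}^+$, so they are red-blue with respect to $pq$; since in any co-circularity of four points the two red-blue pairs are precisely the diagonals of the cyclic $4$-ordering, these events are also red-blue with respect to $ra$, which proves (iii). By Lemma~\ref{Lemma:Crossing}, $ra$ is Delaunay throughout $J$, so neither co-circularity, at which the $(r,a)$-Delaunayhood flips, can fall inside $J$; combined with $I\cap J\neq\emptyset$ and with the locations of the two co-circularities inside $I$, one must precede $J$ and the other must follow it, yielding $J\subset I$. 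The ``Delaunayhood of $ra$ violated by $p$ and $q$'' assertion then follows because $(p,q)$ is the red-blue pair with respect to $ra$ at each of these co-circularities, $ra$ is Delaunay at the adjacent endpoint of $J$ (i.e.\ at $t_2$ or $t_3$), and the $(r,a)$-Delaunayhood flips at the co-circularity; this forces $(p,q)$ to be the witness violator just before the first co-circularity and just after the second.

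The main obstacle I anticipate is the case elimination: it requires combining the $I$-side and $J$-side constraints simultaneously, so that every alternative configuration truly overshoots the budget of two. The trickiest sub-case is when $a$ crosses $\L_{pq}$ within the segment $pq$ during the middle piece $(\sigma_1,\sigma_2)$: such a crossing is a collinearity of $p,q,a$ that is not itself a co-circularity, but it redistributes $a$'s forced membership transitions among the three pieces of $I$, and it must be ruled out using both the $I$-side trace and the corresponding constraints on $q$ during $J$.
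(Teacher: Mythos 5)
Your plan — track membership transitions of $a$ in $B[p,q,r]$ across the three continuity pieces of $I$, do the same for $q$ in $B[p,a,r]$ across $J$, and show that every configuration other than the asserted one overshoots the budget of two co-circularities — is a genuinely different organizational route from the paper's, and the deduction of (iii) and (iv) from (i)--(ii) is essentially sound (modulo the phrase ``the $(r,a)$-Delaunayhood flips'', which is the wrong concept: a co-circularity red-blue w.r.t.\ $ra$ need not flip the Delaunayhood of $ra$ in $\DT(P)$; what you actually need, and what the paper proves directly, is that right before (resp.\ after) the co-circularity the specific pair $p,q$ violates $ra$, which already forces $ra\not\in\DT(P)$ and hence excludes that moment from $J$).

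The genuine gap is in the elimination step for (i)--(ii), which you flag but do not carry out, and which is not a mere detail. Two concrete problems. First, the $I$-side budget count alone cannot handle the case $a\in\L_{pq}^+$ at $\sigma_1$, $a\in\L_{pq}^-$ at $\sigma_2$ (the case corresponding to $\tau_1<\sigma_2<\tau_2$): the trace ``out, in, out, $\cdots$, in, out, out'' forces only one co-circularity on each of the first two pieces, so the budget of two is \emph{met}, not overshot. To kill this case the paper does not use a count at all; it shows by a geometric case analysis that $a$ cannot cross $\L_{pq}$ from $\L_{pq}^+$ to $\L_{pq}^-$ while $r\in\L_{pq}^+$ and $\tau_2$ has not yet occurred — within $pq$ it would contradict $r$ being the unique crosser during $I$, on the ray past $q$ it would mean $q$ hits $pa$ during $J$, and on the outer ray past $p$ it would force an extra $(p,a,r)$-collinearity. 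Your ``third membership transition / independent $J$-transitions'' heuristic does not cover this sub-case, and it is the crux. Second, you never invoke the key pointwise fact the paper leans on: whenever $r\in\L_{pa}^-$, the point $a$ lies in $\L_{pr}^+$; evaluated at $\sigma_1$ (where $\L_{pr}=\L_{pq}$) this \emph{immediately} yields $a\in\L_{pq}^+$ at $\sigma_1$ from $\sigma_1<\tau_1$ alone, and it is what lets the paper pin down $\tau_1\in(\sigma_1,\sigma_2)$ before any budget counting. Without this lever, the case ``$a$ blue at some $\sigma_i$'' is not obviously forced to pay three co-circularities, so the budget argument does not close. Finally, the subsidiary claim ``disjoint $I,J$ forces at least four transitions'' is also not automatic when $a$ is blue at both $\sigma_i$: the middle-piece trace ``in, in'' and the outer pieces ``out, out'' need no transition at all, so the $I$-side could contribute zero; again it is the pointwise geometric fact, not the budget, that excludes this.
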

The schematic description of the motion of $r$ during $I$, according to the above theorem, is depicted in Figure \ref{Fig:StayDelaunay} (right).
Clearly, a suitable variant of Theorem \ref{Thm:OrderSpecialCrossings} exists also for similar pairs of double crossings of incoming $p$-edges $qp,ap$ that are oriented {\it towards} $p$ (again, by the same point $r$).
\begin{proof}
We first establish Part (ii) of the theorem.
The crucial observation is that the first collinearity of $p,a,r$ occurs when $r$ lies in $\L_{pq}^+$ (i.e., during the interval between the two collinearities of $p,q,r$). 
Indeed, otherwise the point $a$ must lie in $\L_{pq}^+=\L_{pr}^+$ at both collinearities of $p,a,r$, and $q$ must lie in $\L_{pa}^+$ at both collinearities of $p,a,r$. 
We shall prove that, in this hypothetical setup, the points $p,q,a,r$ are involved in two co-circularities during $I$ which are red-blue with respect to $pq$, and in a symmetric pair of co-circularities during $J$, both of them red-blue with respect to $pa$. That will clearly contradict the assumption that any four points can be co-circular at most twice.

Indeed, in the above situation the point $a$ lies in the cap $B[p,q,r]\cap \L_{pq}^+$ shortly before the first collinearity of $p,q,r$, and shortly after their second collinearity. 
Since $B[p,q,r]$ contains no points at the beginning of $I$, the point $a$ must have entered this cap before the first collinearity of $p,q,r$. Moreover, $a$ can enter this cap only through the boundary of $B[p,q,r]$, for otherwise it would hit $pq$ during $I$, and no point of $P\setminus\{p,q,r\}$ can hit $pq$ during its Delaunay crossing by $r$. This argument gives us the first of the promised two red-blue co-circularities that $p,q,a,r$ define with respect to $pq$. The second such co-circularity is symmetric to the first one, and occurs when $a$ leaves the cap $B[p,q,r]\cap \L_{pq}^+$ (and after $r$ returns to $\L_{pq}^-$ through $pq$). See Figure \ref{Fig:DoubleFour} (left).
The other pair of co-circularities, both red-blue with respect to $pa$, is obtained by applying a fully symmetric argument to the cap $B[p,a,r]\cap \L_{pa}^+$ and the point $r$. See Figure \ref{Fig:DoubleFour} (center). (For example, we can switch the roles of $q$ and $a$ by reversing the direction of the time axis.)
Finally, all four co-circularities are distinct, because the same co-circularity cannot be red-blue with respect to two edges $pq,pa$ with a common endpoint. 

\begin{figure}[htbp]
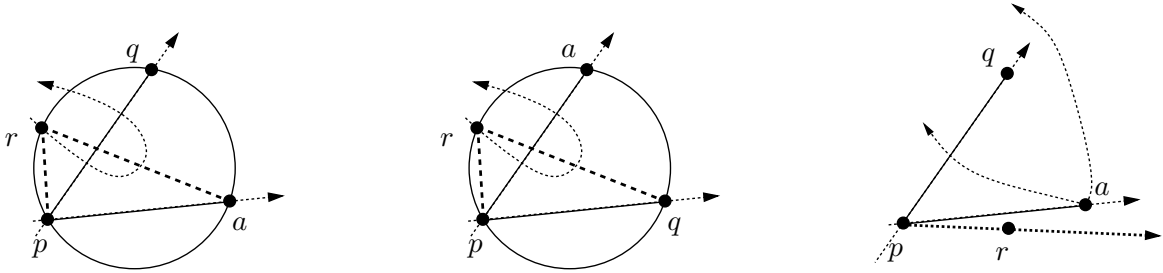

\begin{center}
\input{DoubleFourCocircs1.pstex_t}\hspace{2cm}\input{DoubleFourCocircs.pstex_t}\hspace{2cm}\input{CannotReturn.pstex_t}
\caption{\small Proof of Theorem \ref{Thm:OrderSpecialCrossings}. Left and center: The hypothetical case where $r$ first hits $pa$ within $\L_{pq}^-$, after twice hitting $pq$. The points $p,q,a,r$ are involved in a pair of co-circularities during $I$, and in a symmetric pair of co-circularities during $J$. Right: The hypothetical traces of $a$ if it enters $\L_{pq}^+$ before $r$ (and before the second collinearity of $p,a,r$ occurs).}
\label{Fig:DoubleFour}
\end{center}
\end{figure}

Hence, we can assume, from now on, that the first time when $r$ hits $pa$ occurs when both points lie in $\L_{pq}^+$.
To complete the proof of Part (ii), it suffices to show that the points $a$ and $r$ still remain in $\L_{pq}^+$ during the second collinearity of the triple $p,a,r$.
Indeed, otherwise $a$ must lie in $\L_{pq}^-$ when $r$ hits $pq$ for the second time, because, untill it crosses $pa$ again, $a$ lies in $\L_{pr}^-$ which coincides with $\L_{pq}^-$ at the second crossing of $pq$ by $r$. See Figure \ref{Fig:DoubleFour} (right).
That is, $a$ must cross $\L_{pq}$ from $\L_{pq}^+$ to $\L_{pq}^-$ while $r$ still remains in $\L_{pq}^+$, and before $r$ hits the edges $pq,pa$ for the second time. In particular, the above collinearity of $p,q,a$ must occur during $I\cap J$. Clearly, the point $a$ can potentially cross $\L_{pq}$ in three ways.
If $a$ crosses $\L_{pq}$ within $pq$, this contradicts the definition of $I$ as the interval of the Delaunay crossing of $pq$ by $r$. If $a$ hits $\L_{pq}\setminus pq$ within the ray emanating from $q$ then (at that very moment) $q$ hits $pa$, which contradicts the definition of $J$. Finally, $a$ cannot hit $\L_{pq}\setminus pq$ within the outer ray emanating from $p$ before an additional (and forbidden) collinearity of $p,a,r$ takes place.  This establishes part (ii), and the analysis given above immediately implies part (i) two.

Part (i) follows immediately from Part (ii), because $a$ lies in $\L_{pr}^+$ during both collinearities of $p,q,r$.

Parts (iii) and (iv) follow from Parts (i) and (ii). Indeed, recall that the open disc $B[p,q,r]$ contains no points of $P$ at the beginning of $I$. Right before $r$ hits $pq$ for the first time, the right cap $B[p,q,r]\cap \L_{pq}^+$ of this disc contains $a$. Clearly, $a$ first enters this cap through the corresponding portion of $\partial B[p,q,r]$. This determines the first red-blue co-circularity with respect to $pq$, right before which the Delaunayhood of $ra$ is violated by $p$ and $q$. The symmetric such co-circularity occurs during $I$ when the point $a$ leaves the cap $B[p,q,r]\cap \L_{pq}^+$, after the second collinearity of $p,q,r$. Clearly, the Delaunayhood of $ra$ is violated right after that co-circularity by $p$ and $q$. By Lemma \ref{Lemma:Crossing}, neither of these co-circularities can occur during $J$, because $ra$ remains Delaunay throughout $J$. Hence, the former one occurs, according to the previously established Parts (i) and (ii), before $J$, and the latter one occurs after $J$. This establishes parts (iii) and (iv), and completes the proof.
\end{proof}

\begin{theorem}\label{Thm:SpecialCrossings}
Let $P$ be a set of $n$ points, whose motion in $\reals^2$ respects the following conventions: (i) any four points can be co-circular at most twice, and (ii) no three points can be collinear more than twice. Then $P$ admits at most $O(n^2)$ double Delaunay crossings.
\end{theorem}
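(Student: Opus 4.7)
The plan is to charge each double Delaunay crossing $(pq,r,I)$ to the ordered pair $(p,r)$, aiming to show that each pair is charged $O(1)$ times; since there are $O(n^2)$ ordered pairs, this would yield the claimed bound. By symmetry and time-reversal I would first restrict, at the cost of a constant factor, to the orientation class of the paper: outgoing $p$-edges with $r$ first crossing $pq$ from $\L_{pq}^-$ to $\L_{pq}^+$. The three remaining orientation classes (incoming edges, reversed first-crossing direction) are handled identically.

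For each fixed ordered pair $(p,r)$, Theorem \ref{Thm:OrderSpecialCrossings} supplies the decisive structural input: the relevant double Delaunay crossings $(pq_1,r,I_1),\ldots,(pq_m,r,I_m)$ sharing this pair are strictly nested, $I_1\supset I_2\supset\cdots\supset I_m$, with pairwise distinct $q_i$. For every $i<j$, Parts (iii)--(iv) of that theorem force both co-circularities of the quadruple $\{p,q_i,q_j,r\}$ to lie inside $I_i\setminus I_j$, bracketing $I_j$ and both red-blue with respect to $pq_i$, and each one marking a toggle of the Delaunayhood of $rq_j$ under the violator pair $(p,q_i)$. The core of the proof would then be to bound $m$ by an absolute constant, by combining the halfplane cascade of Parts (i)--(ii) (the $q_i$'s obey a rigid ordering relative to the oriented lines through other $q_j$'s at the crossing times) with the bracketing pattern just described, and seeking a forced third co-circularity of some quadruple $\{p,q_i,q_j,r\}$, or a forced third collinearity of some triple $\{p,q_i,r\}$ or $\{q_i,q_j,q_k\}$, that would violate the hypotheses of the theorem.

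The main obstacle is exactly this constant bound on $m$. A naive quadruple-budget count shows only $\sum_{(p,r)}\binom{m_{(p,r)}}{2}=O(n^4)$ from the forced quadruple co-circularities involving $\{p,r\}$, which via Cauchy--Schwarz yields merely $\sum m_{(p,r)}=O(n^3)$, short of the target $O(n^2)$ by a factor of $n$. Tightening to $O(n^2)$ requires the finer structural input---the \emph{red-blue} labeling with respect to the specific outer edge $pq_i$, the two in-segment crossings of each $pq_i$ by $r$, and the halfplane cascade on the $q_i$'s---rather than mere quadruple counting; the delicate part is translating this into an angular or topological obstruction that rules out long chains. If a direct constant bound on $m$ resists a short argument, a clean fallback is to charge every non-innermost $(pq_i,r,I_i)$ to the Delaunay co-circularity of Part (iv), uniquely determined by the consecutive pair $(pq_i,r,I_i),(pq_{i+1},r,I_{i+1})$, giving only $O(n^2+N(n))$; this weaker bound still substitutes cleanly into recurrence (\ref{Eq:AllCrossings}) and preserves the $O(n^{2+\eps})$ conclusion of Theorem \ref{Thm:OverallBound}.
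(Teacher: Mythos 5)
There is a genuine gap, and in fact your proposal misses the paper's key idea.

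Your main route---bounding the number $m$ of double $(p,r)$-crossings by an absolute constant---is not what the paper does, and the paper never asserts such a bound; you yourself concede you cannot close it. The paper instead distributes the charges: for a fixed pair $(p,r)$ with double crossings $(pq_1,r,I_1),\dots,(pq_m,r,I_m)$ and nested intervals $I_1\supset\cdots\supset I_m$, only the \emph{outermost} crossing $(pq_1,r,I_1)$ is charged to $(p,r)$, while each subsequent $(pq_j,r,I_j)$, $j\ge 2$, is charged to the \emph{other} endpoint pair $(q_j,r)$. Uniqueness of the latter charge is exactly what needs proof: the paper shows that $(pq_j,r,I_j)$ is the \emph{last} double crossing of incoming $q_j$-edges by $r$. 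This is done by contradiction: if some $(p'q_j,r,I')$ ended later, then by Theorem \ref{Thm:OrderSpecialCrossings}(iv) one obtains a co-circularity (of $q_j,p,p',r$ or of $p,q_{j-1},q_j,r$, depending on whether $I'$ ends before or after the end of $I_{j-1}$) falling inside an interval where, by Lemma \ref{Lemma:Crossing}, the violated edge ($pr$ or $q_jr$) must be Delaunay --- a contradiction. Every ordered pair thus receives $O(1)$ charges in total, which gives the $O(n^2)$ bound without ever bounding $m$.

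Your fallback also does not survive scrutiny, for two independent reasons. First, the co-circularities of Part (iv) of Theorem \ref{Thm:OrderSpecialCrossings} are \emph{not} Delaunay co-circularities: they occur strictly inside $I$, where $E^+(t)<E^-(t)$ for the arrangement $\A_{pq}$, so at the moment when $f_a^+(t)=f_r^-(t)=E^-(t)$ the witness disc $B[p,q,a]=B[p,q,r]$ necessarily contains at least one other red point attaining $E^+(t)<f_a^+(t)$. They are merely red-blue co-circularities with respect to $pq$, of positive level. Second, even granting a bound of the form ``number of double crossings $=O(n^2+N(n))$,'' substituting $C(n)=O(n^2+N(n))$ into Recurrence (\ref{Eq:FirstRecurrence}) produces a term of order $k^3N(n/k)$, and $k^3(n/k)^{2+\eps}=k^{1-\eps}n^{2+\eps}$ grows with $k$ for any $\eps<1$. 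The induction that yields $O(n^{2+\eps})$ relies on the coefficient of each $N(n/\cdot)$ term being made arbitrarily small by taking the parameters large, which fails here; so the fallback does not ``substitute cleanly.''
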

\begin{proof}
We fix a pair of points $p,r$ in $P$. Our strategy is to show that, for an average such pair, there is at most a constant number of double Delaunay crossings of $p$-edges by $r$.
Indeed, let $(pq_1,r,I_1),(pq_2,r,I_2),$ $\ldots,(pq_k,r,I_k)$ be the complete list of such double Delaunay crossings of $p$-edges by $r$, and assume that $r$ hits the edges $pq_1,pq_2,\ldots,p_{q_k}$, for the first time, in this same order.
By Theorem \ref{Thm:OrderSpecialCrossings}, the respective intervals of the above double crossings form a nested sequence $I_1\supset I_2\supset \ldots \supset I_k$.

\begin{figure}[htbp]
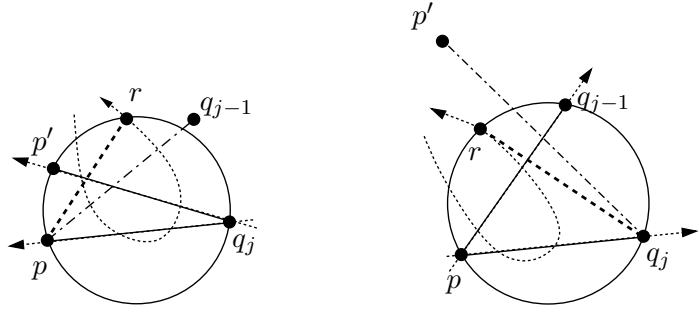

\begin{center}
\input{QuadraticDouble1.pstex_t}\hspace{2cm}\input{QuadraticDouble2.pstex_t}
\caption{\small Proof of Theorem \ref{Thm:SpecialCrossings}. Left: If the double crossing $(p'q_j,r,I')$ ends before the end of $I_{j-1}$ then the second co-circularity of $q_j,p,p',r$ occurs during $I_{j-1}$. Right: If the double crossing $(p'q_j,r,I')$ ends after $I_{j-1}$ then the second co-circularity of $p,q_{j-1},q_j,r$ occurs during $I'$.}
\label{Fig:QuadDouble}
\end{center}
\end{figure} 

Clearly, the first crossing $(pq_1,r,I_1)$ can be uniquely charged to the pair $p,r$.
Now assume that $k>1$. We show that each of the additional double Delaunay crossings $(pq_j,r,I_j)$, for $2\leq j\leq k$, 
can be uniquely charged to the corresponding pair $q_j,r$.
Specifically, we show that no double Delaunay crossing of incoming $q_j$-edges $p'q_j$ (that is, $p$-edges that are oriented towards $p$), by $r$, can end after $I_j$. In other words, $(pq_j,r,I_j)$ is the ``last" such double crossing.

Indeed, fix $2\leq j\leq k$ as above. We first show that no double crossing of the form $(p'q_j,r,I')$ can end during the interval which lasts from the end of $I_j$ and to the end of $I_{j-1}$. Indeed, suppose to the contrary that such a situation occurs, and apply a suitable variant of Theorem \ref{Thm:OrderSpecialCrossings} to the double Delaunay crossings of $q_j$-edges $p'q_j$ and $pq_j$ by $r$. 
By Part (iv) of that theorem, $I_j$ is contained in $I'$, and the four points $q_j,p,p',r$ are involved in a red-blue co-circularity with respect to $p'q_j$ during the second portion of $I'\setminus I_j$. See Figure \ref{Fig:QuadDouble} (left). Right after that co-circularity, the Delaunayhood of $pr$ is violated by $q_j$ and $p'$. If $I'$ ends before the end of $I_{j-1}$, the above co-circularity must occur during $I_{j-1}$ (as $I_{j-1}\supset I_j$), which contradicts Lemma \ref{Lemma:Crossing} (applied to the crossing of $pq_{j-1}$ by $r$).

It remains to show that no double Delaunay crossing $(p'q_j,r,I')$, as above, can end after the end of $I_{j-1}$. Indeed, by Part (iv) of Theorem \ref{Thm:OrderSpecialCrossings} (now applied to the double crossings of the $p$-edges $pq_{j-1}$ and of $pq_j$, by $r$), the points $p,q_{j-1},q_j,r$ are involved in a co-circularity during the second portion of $I_{j-1}\setminus I_j $. Right after this co-circularity, the Delaunayhood of $q_jr$ is violated by $p$ and $q_{j-1}$. 
If the interval $I'$ (which contains $I_j$) ends after the end of $I_{j-1}$, the aforementioned co-circularity must occur during $I'$; see Figure \ref{Fig:QuadDouble} (right). However, this is another contradiction to Lemma \ref{Lemma:Crossing} (now applied to the crossing of $p'q_j$ by $r$, which takes place during $I'$).

We have shown that every double Delaunay crossing can be uniquely charged to an (ordered) pair of points of $P$, so their number is $O(n^2)$, as asserted.
\end{proof}

\section{Conclusion} We have studied the number of discrete changes in the Delaunay triangulation of a set $P$ of $n$ points moving along pseudo-algebraic trajectories in the plane, so that any four points of $P$ can be co-circular at most {\it twice} during the motion. 
We have introduced a new concept of Delaunay crossings, and established several interesting structural properties of these crossings.
In our analysis we have used Theorem \ref{Thm:RedBlue} to reduce the problem of bounding the number of Delaunay co-circularities to the more specific problem of bounding the number of Delaunay crossings. Notice that the proof of Theorem \ref{Thm:RedBlue} did not rely on any assumptions concerning the motion of the points of $P$ (except for its being pseudo-algebraic of constant degree). Moreover, the aforementioned reduction easily extends to 
the case in which any four points of $P$ can be co-circular at most {\it three times} during the motion.
For these reasons, the author believes that the techniques introduced in this paper can be used to establish sub-cubic upper bounds for more general instances of the problem, such as the instance where the points are moving along straight lines with equal speeds.

\section{Acknowledgements} I would like to thank my former Ph.D. advisor Micha Sharir 
whose help made this work possible.
In particular, I would like to thank him for the insightful discussions, and, especially, for his invaluable help in the preparation of this paper. 

In addition, I would like to thank the anonymous
DCG referees for valuable suggestions that helped to improve the
presentation.




\begin{thebibliography}{10}






\bibitem{ABGHZ}
P.~K. Agarwal, J.~Basch, L.~J. Guibas, J.~Hershberger, and L.~Zhang,
Deformable free-space tilings for kinetic collision detection,
{\em Internat. J. Robotics Research} 21 (3) (2002), 179--197.

\bibitem{ASS}
P. K. Agarwal, O. Cheong and M. Sharir, The overlay of lower
envelopes in 3-space and its applications, {\it Discrete Comput.
Geom.} 15 (1996), 1--13.


\bibitem{Stable}
P. K. Agarwal, J. Gao, L. Guibas, H. Kaplan, V. Koltun, N. Rubin and M. Sharir,
Kinetic stable Delaunay graphs, \textit{Proc. 26th Annu. Symp. on Comput. Geom.} (2010), 127--136.

\bibitem{AWY}
P. K. Agarwal, Y. Wang and H. Yu,
A 2D kinetic triangulation with near-quadratic topological changes,
\textit{Discrete Comput. Geom.} 36 (2006), 573--592.




\bibitem{AK}
F. Aurenhammer and R. Klein,
Voronoi diagrams,
in {\it Handbook of Computational Geometry},
J.-R. Sack and J. Urrutia, Eds.,
Elsevier, Amsterdam, 2000,
pages 201--290.


\bibitem{Chew}
L. P. Chew,
Near-quadratic bounds for the $L_1$ Voronoi diagram of moving points,
{\em Comput. Geom. Theory Appl.}  7 (1997), 73--80.

\bibitem{CD}
L. P. Chew and R. L. Drysdale,
Voronoi diagrams based on convex distance functions,
{\em Proc. First Annu. ACM Sympos. Comput. Geom.}, 1985, pp.~235--244.

\bibitem{CS}
K.~Clarkson and P.~Shor, Applications of random sampling in
computational geometry, II, \emph{Discrete Comput. Geom.} 4 (1989),
387--421.


\bibitem{d-slsv-34}
B.~Delaunay,
Sur la sph{\`e}re vide. {A} la memoire de {Georges} {Voronoi},
{\em Izv. Akad. Nauk SSSR, Otdelenie Matematicheskih i Estestvennyh
Nauk} 7 (1934), 793--800.

\bibitem{TOPP}
E.~D.~Demaine, J.~S.~B.~Mitchell, and J.~O'Rourke,\\
The Open Problems Project,
\texttt{http://www.cs.smith.edu/\~{ }orourke/TOPP/}.

\bibitem{Ed2}
H. Edelsbrunner,
{\em Geometry and Topology for Mesh Generation},
Cambridge University Press, Cambride, 2001.

\bibitem{EM} H. Edelsbrunner and E. P. M\"{u}cke, Simulation of simplicity: a technique to cope with degenerate cases in geometric algorithms, {\it ACM Transactions on Graphics} 9 (1990), 66--104.

\bibitem{FuLee} J.-J. Fu and R. C. T. Lee, Voronoi diagrams of moving points in the plane, {\it Int. J. Comput. Geom. Appl.} 1(1) (1991), 23-32.



\bibitem{gmr-vdmpp-92}
L.~J. Guibas, J.~S.~B. Mitchell and T.~Roos,
Voronoi diagrams of moving points in the plane,
{\em Proc. 17th Internat. Workshop Graph-Theoret. Concepts Comput.
Sci.}, volume 570 of {\em Lecture Notes Comput. Sci.}, pages 113--125.
Springer-Verlag, 1992.

\bibitem{Envelopes3D}
D. Halperin and M. Sharir, New bounds for lower envelopes in three dimensions, with applications to visbility in terrains, {\it Discrete Comput. Geom.} 12 (1994), 313--326.

\bibitem{IKLM}
C. Icking, R. Klein, N.-M. L\^{e} and L. Ma,
Convex distance functions in 3-space are different, {\it Fundam. Inform.} 22 (4) (1995), 331--352.

\bibitem{KRS}
H. Kaplan, N. Rubin and M. Sharir, 
A kinetic triangulation scheme for moving points in the plane, {\it Comput. Geom. Theory Appl.} 44 (2011), 191--205.

\bibitem{Vladlen} V. Koltun,
Ready, Set, Go! The Voronoi diagram of moving points that start from a line, {\it Inf. Process. Lett.} 89(5), (2004), 233--235.

\bibitem{ConstantLines} V. Koltun and M. Sharir, 3-dimensional Euclidean Voronoi diagrams of lines with a fixed number of orientations, {\it SIAM J. Comput.} 32 (3) (2003), 616--642.


\bibitem{SA95}
M.~Sharir and P.~K. Agarwal,
{\em Davenport-Schinzel Sequences and Their Geometric Applications},
Cambridge University Press, New York, 1995.

\end{thebibliography}
\end{document}